\documentclass[preprint,12pt,authoryear]{elsarticle}

\usepackage{a4wide}
\usepackage{amssymb}
\usepackage{amsfonts}

\usepackage{amsmath,amsthm,amsfonts,amssymb,color}

\usepackage{multicol}

\usepackage{natbib}

\usepackage{multirow} 
\usepackage{float}
\usepackage{graphicx}
\usepackage{longtable}

\DeclareMathOperator\argmin{argmin}

\usepackage{amssymb}
\usepackage{amsthm}
\newtheorem{theorem}{Theorem}
\newtheorem{lemma}{Lemma}

\newtheorem{definition}{Definition}
\usepackage[authoryear]{natbib}

\begin{document}

\begin{frontmatter}

\title{Two-phase source and reaction coefficient Stefan type problems} 


\author{Targyn A. Nauryz$^{1,2,3}$}

\address{$^1$International School of Economics, Kazakh-British Technical University, Almaty, Kazakhstan,\\
$^2$Institute of Mathematics and Mathematical Modeling, Almaty, Kazakhstan,\\
$^3$School of Digital Technologies, Narxoz University, Almaty, Kazakhstan\\
Email: targyn.nauryz@gmail.com}

\begin{abstract}
This paper investigates inverse two-phase Stefan-type problems for parabolic heat equations with unknown time-dependent source and reaction coefficients. Suitable transformations reduce the original free-boundary problems to equivalent equations on fixed spatial domains, and Fourier spectral expansions are used to derive reconstruction formulas for the unknown coefficients. In the first formulation, the source coefficients are identified from integral, pointwise, and nonlocal additional data, leading to Volterra integral equations. In the second formulation, an exponential transformation is applied to recover the reaction coefficients. A principal advantage of the model with two moving boundaries is that each boundary provides its own Stefan condition which sufficient to determine the two unknown time-dependent coefficients without additional overdetermination conditions. Under appropriate assumptions the existence, uniqueness, boundedness and regularity of weak and strong solutions are established. Illustrative examples confirm the applicability of the reconstruction procedure and show that the coefficients remain stable under the noisy data. The proposed approach provides a rigorous base for identifying time-dependent thermal parameters in two-phase phase-change processes.
\end{abstract}

\begin{keyword}
Inverse Stefan problem \sep parabolic equation \sep time-dependent coefficient \sep spectral expansion \sep Volterra integral equation \sep Tikhonov regularization \sep generalized cross-validation
\end{keyword}

\end{frontmatter}

\section{Introduction}

Free-boundary problems for parabolic equations arise naturally in the mathematical description of phase transitions, combustion, melting, solidification, thermal energy storage, and other processes in which the spatial region occupied by a phase changes with time. In such models, the temperature field is governed by a heat equation in each phase, while the motion of the interface is determined by an energy-balance condition involving the latent heat. Classical investigations of Stefan-type and
free-boundary heat problems established the mathematical foundations for the analysis of temperature distributions and moving interfaces under different boundary and initial conditions \cite{2}, \cite{5}. Evolution equations posed in domains with moving boundaries have also been studied for other physical systems, including elastic beams and strings with moving endpoints \cite{3}, \cite{4}. General analytical tools for parabolic partial differential equations and their weak formulations can be found in \cite{16}.

In many practical applications, some physical characteristics of the thermal process cannot be measured directly and must be recovered from indirect observations. This leads to inverse problems in which an unknown source intensity, reaction coefficient, boundary flux, or control parameter is reconstructed simultaneously with the state variable. Inverse coefficient problems for parabolic equations have been considered using additional integral data, singla data observations, boundary measurements, and
nonlocal conditions \cite{7}, \cite{8}, \cite{9}, \cite{11}, \cite{18}. General theoretical and computational aspects of inverse problems for parabolic equations and their applications are discussed in \cite{12}, \cite{14}. Related developments include multidimensional inverse Cauchy problems for evolution equations \cite{13}, inverse source identification for subelliptic heat equations \cite{15}, and inverse source problems for fractional diffusion equations with nonlocal boundary conditions \cite{19}.

Spectral methods provide an effective framework for constructing solutions of inverse parabolic problems. After expanding the solution with respect to the eigenfunctions of an associated elliptic operator, the original partial differential equation can be reduced to a system of ordinary differential equations for the Fourier coefficients. The convergence and regularity of the resulting series depend on suitable estimates for the eigenvalues and eigenfunctions of the corresponding spectral problem \cite{1}, \cite{10}. Spectral techniques are also widely used in inverse and imaging problems
in which the qualitative behavior of eigenstates plays an essential role \cite{6}. Together with energy estimates, Bessel-type inequalities, and Gronwall-type arguments \cite{17}, this approach allows one to derive reconstruction formulas and establish the existence, uniqueness, and stability of the recovered coefficients.

Inverse Stefan problems are more complicated than inverse problems posed on fixed domains because the unknown temperature fields are coupled through a moving interface. The geometry of each phase changes with time, the transformed equations contain variable coefficients and transport terms, and the interface conditions connect the temperature gradients in both phases. Moreover, the recovery of an unknown time-dependent coefficient generally requires additional information ensuring that the inverse problem is not underdetermined. Such information may be prescribed in the form of an integral overdetermination condition, a temperature measurement at an interior point, a nonlocal observation, or a boundary heat-flux relation.

A recent study considered inverse Stefan problems for determining a time-dependent source coefficient and an unknown heat-flux function in a one-phase setting \cite{20}. The present work extends the approach of \cite{20} to a genuinely two-phase Stefan problem. This extension is nontrivial because two temperature fields must be determined in two time-dependent subdomains, and the unknown temporal coefficients in the two phases are coupled through the temperature continuity and Stefan flux-balance conditions at the moving interface. Consequently, the present analysis requires separate transformations of both moving subdomains, two spectral
representations, coupled reconstruction relations, and additional estimates  ensuring the convergence and regularity of the solutions in both phases.

Two inverse two-phase Stefan-type formulations are investigated. In the first
formulation, the material occupies a fixed one-dimensional body divided by a moving interface into two phases. The governing heat equations contain unknown time-dependent source coefficients \(R_1(t)\) and \(R_2(t)\). Nonhomogeneous boundary conditions are first reduced to homogeneous ones, and each moving subdomain is transformed onto the fixed interval \((0,1)\). The transformed temperature functions are then represented by Fourier series associated with the Dirichlet Sturm--Liouville problem. To recover the source coefficients, three different types of additional information are considered: an integral overdetermination condition, a single interior observation, and a nonlocal integral condition. In each case, the reconstruction relation is reduced to a Volterra integral equation.

The second formulation concerns a two-phase Stefan-type problem with Neumann boundary conditions and two moving boundaries. The governing equations contain unknown reaction-type coefficients \(P_1(t)\) and \(P_2(t)\). An exponential substitution is used to remove these coefficients from the differential equations, after which the transformed problems are solved on fixed spatial intervals. The unknown coefficients are subsequently reconstructed from the corresponding temporal factors. An important feature of this formulation is that the Stefan conditions imposed at the moving
boundaries contain sufficient information to determine the unknown coefficients. Therefore, no separate integral, pointwise, or nonlocal overdetermination condition is required for their recovery.

The main contributions of the present paper can be summarized as follows. First, one-phase inverse Stefan analysis developed in \cite{20} is extended to a coupled two-phase configuration. Second, suitable changes of variables are introduced to transform the moving-boundary problems into equivalent parabolic systems on fixed domains. Third, spectral representations of the temperature fields and explicit reconstruction relations for the unknown source and reaction coefficients are derived. Fourth, the resulting inverse relations are formulated as Volterra integral equations, and sufficient regularity, compatibility and nondegeneracy conditions are established for their solvability. Fifth, the absolute and uniform convergence of the spectral series is justified using integration by parts, Cauchy-Schwarz and Bessel inequalities, and Gronwall-type estimates. These results lead to the existence and uniqueness of weak and strong solutions for the transformed problems and, consequently, for the original moving-boundary systems. Finally, illustrative examples and numerical experiments are presented to demonstrate the reconstruction of the unknown coefficients and to examine their behavior under noisy data.

The structure of the paper is organized as follows. Section~2 introduces the first inverse two-phase Stefan problem and transforms it from moving subdomains to fixed spatial intervals. Section~3 develops the spectral representation and studies the recovery of the source coefficients from integral, single and nonlocal additional data. The existence, uniqueness, convergence, and regularity properties of the solutions are also established. Numerical and analytical examples for the first formulation are presented in the subsequent sections. The second inverse Stefan-type formulation with two moving boundaries and unknown reaction coefficients is then investigated, followed by numerical illustrations of the coefficient
reconstruction. The final section summarizes the principal conclusions and possible directions for further research.

\section{The source coefficient Stefan type problem with one moving boundary}
We consider the domains $\Omega_1=(0,s(t))\times(0,T)$ and $\Omega_2=(s(t),d)\times(0,T)$ where $T>0$ is given arbitrary fixed time interest. An objective of the work is to determine weak solution pairs $(u_1, R_1)$ and $(u_2,R_2)$ of the following thermal system:
\begin{equation}\label{eq1}
    \begin{cases}
        \dfrac{\partial u_1}{\partial t}-a_1^2\dfrac{\partial^2 u_1}{\partial x^2}=R_1(t)f_1(x,t),& (x,t)\in \Omega_1\\
        \dfrac{\partial u_2}{\partial t}-a_2^2\dfrac{\partial^2 u_2}{\partial x^2}=R_2(t)f_2(x,t),& (x,t)\in \Omega_2,\\
        u_1(x,0)=\varphi_1(x),\quad s(0)=s_0,& \forall x\in[0,s(0)],\\
        u_2(x,0)=\varphi_2(x),\quad s(0)=s_0,& \forall x\in[s(0),d],\\
        u_1(0,t)=g_1(t),&\forall t\in[0,T],\\
        u_1(s(t),t)=u^*,& \forall t\in[0,T],\\
        u_2(s(t),t)=u^*,& \forall t\in[0,T],\\
        -k_1\dfrac{\partial u_1}{\partial x}\bigg|_{x=s(t)}=-k_2\dfrac{\partial u_2}{\partial x}\bigg|_{x=s(t)}+L s'(t),&\forall t\in[0,T],\\
        u_2(d,t)=g_2(t),&\forall t\in[0,T].
    \end{cases}
\end{equation}
where $a_1^2,a_2^2>0$ are thermal diffusivity for each phase, $\varphi_i(x), g_i(t), i=1,2$ are given positive functions, $s_0, d$ are given positive constants, $u^*$ is a phase change temperature, $s(t)$ is a moving boundary describing phase change interface and $L$ is a latent heat of fusion, $k_i, i=1,2$ is a thermal conductivity of each phase.

Introducing a transformations
\begin{equation}\label{eq2}
    v_1(x,t)=u_1(x,t)-g_1(t)-\dfrac{u^*-g_1(t)}{s(t)}x,\quad (x,t)\in\Omega_1,
\end{equation}
\begin{equation}\label{eq3}
    v_2(x,t)=u_2(x,t)-u^*-\dfrac{g_2(t)-u^*}{d-s(t)}(x-s(t)),\quad (x,t)\in\Omega_2,
\end{equation}
the problem \eqref{eq1} can be represented as following form:
\begin{equation}\label{eq4}
    \begin{cases}
        \dfrac{\partial v_1}{\partial t}-a_1^2\dfrac{\partial^2v_1}{\partial x^2}=R_1(t)h_1(x,t),&(x,t)\in\Omega_1,\\
        \dfrac{\partial v_2}{\partial t}-a_2^2\dfrac{\partial^2v_2}{\partial x^2}=R_2(t)h_2(x,t),&(x,t)\in\Omega_2,\\
        v_1(x,t)=\psi_1(x),\quad s(0)=s_0,& \forall x\in[0,s(0)],\\
        v_2(x,t)=\psi_2(x),\quad s(0)=s_0,& \forall x\in[s(0),d],\\
        v_1(0,t)=v_1(s(t),t)=0,& \forall t\in[0,T],\\
        v_2(s(t),t)=v_2(d,t)=0,& \forall t\in[0,T],\\
        -k_1\dfrac{\partial v_1}{\partial x}\bigg|_{x=s(t)}=-k_2\dfrac{\partial v_2}{\partial x}\bigg|_{x=s(t)}+\chi(t)+Ls'(t),& \forall t\in[0,T],
    \end{cases}
\end{equation}
where $\chi(t)=k_1\frac{u^*-g_1(t)}{s(t)}-k_2\frac{g_2(t)-u^*}{d-s(t)}$ and
\begin{equation}\label{eq5}
    h_1(x,t)=f_1(x,t)-\dfrac{g_1(t)}{R_1(t)}+\dfrac{g_1'(t)s(t)+s'(t)(u^*-g_1(t))}{s^2(t)R_1(t)},
\end{equation}
\begin{equation}\label{eq6}
    h_2(x,t)=f_2(x,t)-\dfrac{g_2'(t)(d-s(t))+s'(t)(g_2(t)-u^*)}{R_2(t)(d-s(t))^2}(x-s(t))+\dfrac{g_2(t)-u^*}{R_2(t)(d-s(t))}s(t),
\end{equation}
\begin{equation}\label{eq7}
    \psi_1(x)=\varphi_1(x)-g_1(0)-\dfrac{u^*-g_1(0)}{s(0)}x,\quad \psi_2(x)=\varphi_2(x)-u^*-\dfrac{g_2(0)-u^*}{d-s(0)}(x-s(0)).
\end{equation}
Considering the change of variable for transformation the domains $\Omega_1$ and $\Omega_2$ with moving boundary to the fixed domain $\Omega=(0,1)\times(0,T)$, that is
\begin{equation}\label{eq8}
    v_1(x,t)=V_1(\xi_1,t),\quad\quad V_1: \Omega_1\longrightarrow\Omega,\quad (x,t)\mapsto (\xi_1,t)=\left(\dfrac{x}{s(t)},t\right),
\end{equation}
\begin{equation}\label{eq9}
    v_2(x,t)=V_2(\xi_2,t),\quad\quad V_2: \Omega_2\longrightarrow\Omega,\quad (x,t)\mapsto(\xi_2,t)=\left(\dfrac{x-s(t)}{d-s(t)},t\right).
\end{equation}
We need to note that $V_1\in C^{2,1}(\Omega)$ and $V_2\in C^{2,1}(\Omega)$, then by inverse function definition $V_1^{-1}\in C^{2,1}(\Omega_1)$ and $V_2^{-1}\in C^{2,1}(\Omega_2)$. 

Applying transformation functions \eqref{eq8} and \eqref{eq9}, the problem \eqref{eq4} reduced to
\begin{equation}\label{eq10}
    \begin{cases}
        \dfrac{\partial V_1}{\partial t}-b_1(\xi_1,t)\dfrac{\partial V_1}{\partial \xi_1}-a_1(t)\dfrac{\partial^2 V_1}{\partial\xi_1^2}=R_1(t)F_1(\xi_1,t),& (\xi_1,t)\in\Omega,\\
        \dfrac{\partial V_2}{\partial t}-b_2(\xi_2,t)\dfrac{\partial V_2}{\partial \xi_2}-a_2(t)\dfrac{\partial^2 V_2}{\partial\xi_2^2}=R_2(t)F_2(\xi_2,t),& (\xi_2,t)\in\Omega,\\
        V_1(\xi_1,0)=\Phi_1(\xi_1),&\forall\xi_1\in[0,1],\\
        V_2(\xi_2,0)=\Phi_2(\xi_2),&\forall\xi_2\in[0,1],\\
        V_1(0,t)=V_1(1,t)=0,&\forall t\in[0,T],\\
        V_2(0,t)=V_2(1,t)=0,&\forall t\in[0,T],\\
        -\dfrac{k_1}{s(t)}\dfrac{\partial V_1}{\partial \xi_1}\bigg|_{\xi_1=1}=-\dfrac{k_2}{d-s(t)}\dfrac{\partial V_2}{\partial\xi_2}\bigg|_{\xi_2=0}+\chi(t)+Ls'(t),& \forall t\in[0,T].
    \end{cases}
\end{equation}
where $F_1(\xi_1,t)=h_1(\xi_1s(t),t)$, $F_2(\xi_2,t)=h_2(\xi_2(d-s(t))+s(t),t)$, $\Phi_1(\xi_1)=\psi_1(\xi_1s(0))$, $\Phi_2(\xi_2)=\psi_2(\xi_2(d-s(0))+s(0))$, $b_1(\xi_1,t)=\frac{\xi_1s'(t)}{s(t)}$, $b_2(\xi_2,t)=\frac{s'(t)(\xi_2-1)}{d-s(t)}$, $a_1(t)=\frac{a_1^2}{s^2(t)}$ and $a_2(t)=\frac{a_2^2}{(d-s(t))^2}$. By definition of inverse functions we have
$$F_i(\xi_i,t)=h_i(V_i^{-1}(\xi_i,t))=h_i(x,t),\quad i=1,2.$$
$$V_i(\xi_i,t)=v_i(V_i^{-1}(\xi_i,t))=v_i(x,t),\quad i=1,2.$$
A primary objective of our study is to present weak solution of the problem \eqref{eq10} with given single observation point, overdetermined integral condition and non-local data for the first phase of body. We need to verify existence and uniqueness of the problem \eqref{eq10} which consequently implies the existence and uniqueness of the problem \eqref{eq4} and \eqref{eq1}. Additionally, we are interested to show existence and uniqueness of the global and weak solutions of the problem \eqref{eq10}.

\section{Weak formulation of the solution}
For convenience, we have also used  
$((\cdot))$, $(\cdot)$  to denote the scalar product and $||\cdot||$, $|\cdot|$ the norms in $\mathbb{H}_0^1(0,1)$ and $L^2(0,1)$, respectively.

Introducing a spectral problem to the problem \eqref{eq10}:
\begin{equation}\label{eq11}
    \begin{cases}
        -\phi_i''(\xi_i)=\lambda_i\phi_i(\xi_i),&\xi_i\in(0,1),\\
        \phi_i(0)=\phi_i(1)=0,
    \end{cases}
\end{equation}
where $i=1,2$ for each phases. Thus, \eqref{eq11} gives us the eigenvalue and eigenfunction
\begin{equation}\label{eq12}
    \phi_{i,n}(\xi)=\sqrt{2}\sin\left(\sqrt{\lambda_{n}}\xi_i\right),\quad \lambda_{n}=n^2\pi^2,\quad n\in\mathbb{N},\quad i=1,2.
\end{equation}
\begin{definition}
A function $V_i\in C([0,T],L^2(0,1))\cap C((0,T];\mathbb{H}_0^2(0,1))\cap C^1((0,T];L^2(0,1))$, where $i=1,2$ is called weak solution of the problem \eqref{eq10} if it satisfies
\begin{equation}\label{eq13}
    \left(\dfrac{\partial V_i}{\partial t},\phi_{i,n}\right)_{L^2(0,1)}-\left(b_i(\xi_i,t)\dfrac{\partial V_i}{\partial \xi_i},\phi_{i,n}\right)_{L^2(0,1)}-a_i(t)\left(\dfrac{\partial^2 V_i}{\partial\xi_i^2},\phi_{i,n}\right)_{L^2(0,1)}=R_i(t)\left(F_i(\xi_i,t),\phi_{i,n}\right)_{L^2(0,1)}
\end{equation}
with initial condition
\begin{equation}\label{eq14}
    \left(V_i(\xi_i,0),\phi_{i,n}\right)_{L^2(0,1)}=\left(\Phi_i(\xi_i),\phi_{i,n}\right),\quad \forall\xi_i\in[0,1]
\end{equation}
where $n\in\mathbb{N}$, $\phi_{i,n}$ is defined by \eqref{eq11} and $b_1(\xi_1,t)=\frac{\xi_1s'(t)}{s(t)}$, $b_2(\xi_2,t)=\frac{(1-\xi_2)s'(t)}{d-s(t)}$, $a_1(t)=\frac{a_1^2}{s^2(t)}$, $a_2(t)=\frac{a_2^2}{(d-s(t))^2}$. 
\end{definition}

Consequently we have the next definition.
\begin{definition}
    The functions $u_1\in C([0,T],L^2(0,s(t)))\cap C((0,T];\mathbb{H}_0^2(0,s(t)))\cap C^1((0,T];L^2(0,s(t)))$ and $u_2\in C([0,T],L^2(s(t),d))\cap C((0,T];\mathbb{H}_0^2(s(t),d))\cap C^1((0,T];L^2(s(t),d))$  are said to be weak solution of the problem \eqref{eq1} such that the change of variable functions $v_1\left(x,t\right)=V_1\left(\frac{x}{s(t)},x\right)$ and $v_2\left(x,t\right)=V_2\left(\frac{x-s(t)}{d-s(t)},t\right)$ implies temperature solutions of the problem \eqref{eq1} by
$$u_1(x,t)=V_1\left(\frac{x}{s(t)},t\right)+g_1(t)+\dfrac{u^*-g_1(t)}{s(t)}x,\quad \forall(x,t)\in\Omega_1,$$
$$u_2(x,t)=V_2\left(\frac{x-s(t)}{d-s(t)},t\right)+u^*+\dfrac{g_2(t)-u^*}{d-s(t)}(x-s(t)),\quad \forall(x,t)\in\Omega_2,$$
using \eqref{eq2} and \eqref{eq3}, 
\end{definition}

\subsection{Integral overdetermination condition.}\label{sec1} Let us now show the existence and uniqueness of a pair of functions $(u_1,R_1)$ and $(u_2,R_2)$ such that $R(t)\in C[0,T]$, and $u_1$ and $u_2$ are weak solution the problem \eqref{eq1} satisfying conditions
    \begin{equation}\label{eq15}
        \int\limits_0^{s(t)}u_1(x,t)dx=\omega(t),\quad t\in[0,T],
    \end{equation}
    \begin{equation}\label{eq16}
        -k_1\dfrac{\partial u_1}{\partial x}\bigg|_{x=s(t)}=-k_2\dfrac{\partial u_2}{\partial x}\bigg|_{x=s(t)}+L s'(t),\quad t\in[0,T],
    \end{equation}
where $\omega(t)$ and $s(t)$ are given function.

Consequently, for the problem \eqref{eq10} the conditions \eqref{eq15} and \eqref{eq16} takes the form:
\begin{equation}\label{eq17}
    \int\limits_0^1 V_1(\xi_1,t)d\xi_1=\dfrac{\omega(t)}{s(t)}-g_1(t)+\dfrac{u^*-g_1(t)}{2},\quad t\in[0,T],
\end{equation}
\begin{equation}\label{eq18}
    -\dfrac{k_1}{s(t)}\dfrac{\partial V_1}{\partial \xi_1}\bigg|_{\xi_1=1}=-\dfrac{k_2}{d-s(t)}\dfrac{\partial V_2}{\partial\xi_2}\bigg|_{\xi_2=0}+k_1\frac{u^*-g_1(t)}{s(t)}-k_2\frac{g_2(t)-u^*}{d-s(t)}+Ls'(t),\quad t\in[0,T],
\end{equation}
in which the solution pairs $(V_1,R_1)$, $(V_2, R_2)$ have to be determined.

Let $\left\{\phi_{i,n}(\xi_i)\right\}_{n=1}^{\infty}$ be orthonormal eigenfunction set of the spectral problem \eqref{eq11}. The weak solution of the problem \eqref{eq10} is sought in the form 
\begin{equation}\label{eq19}
    V_i(\xi_i,t)=\sum\limits_{n=1}^{\infty}V_{i,n}(t)\phi_{i,n}(\xi_i),
\end{equation}
where $(V_i,\phi_{i,n})_{L^2(0,1)}=V_{i,n}(t)$ for $i=1,2$ and $n\in\mathbb{N}$.

Substituting \eqref{eq19} into equation \eqref{eq13} and using definition of $\phi_{i,n}$ defined by \eqref{eq12}, and introducing equalities
$$\left(\dfrac{\partial V_i}{\partial t},\phi_{i,n}\right)_{L^2(0,1)}=\dfrac{d}{dt}\left(V_{i,n},\phi_{i,n}\right)_{L^2(0,1)}=V_{i,n}'(t),\quad i=1,2,\quad n\in\mathbb{N},$$
$$\left(V_i,\phi_{i,n}\right)_{\mathbb{H}_0^2(0,1)}=-\lambda_{i,n}\left(V_i,\phi_{i,n}\right)_{L^2(0,1)}=-\lambda_{n}V_{i,n}(t),\quad i=1,2, \quad n\in\mathbb{N}$$
$$(F_{i},\phi_{i,n})_{L^2(0,1)}=F_{i,n}(t),\quad i=1,2,\quad n\in\mathbb{N},$$
then we obtain 
\begin{equation}\label{eq20}
    \begin{cases}
        V_{1,n}'(t)+(a_1(t)\lambda_{n}-b_1(t))V_{1,n}(t)=R_1(t)F_{1,n}(t),& n\in\mathbb{N},\quad t\in[0,T],\\
    V_{2,n}'(t)+(a_2(t)\lambda_{n}-b_2(t))V_{2,n}(t)=R_2(t)F_{2,n}(t),& n\in\mathbb{N},\quad t\in[0,T],\\
    V_{1,n}(0)=\Phi_{1,n}, & n\in\mathbb{N},\\
    V_{2,n}(0)=\Phi_{2,n}, & n\in\mathbb{N},\\
    \end{cases}
\end{equation}
where $b_i(t)=\left(b_{i}(\xi_i,t)\phi_{i,n}',\phi_{i,n}(\xi_i)\right)_{L^2(0,1)}$ and $\Phi_{i,n}=\left(\Phi_i(\xi_i),\phi_{i,n}(\xi_i)\right)_{L^2(0,1)},\quad i=1,2$.

One can easily estimate that the solution of the problem \eqref{eq20} has the following form
\begin{equation}\label{eq21}
    V_{i,n}(t)=\Phi_{i,n}e^{-\int_0^t(a_i(\tau)\lambda_{n}-b_i(\tau))d\tau}+\int_0^tR_i(\tau)F_{i,n}(\tau)e^{-\int_{\tau}^t(a_i(z)\lambda_{n}-b_i(z))dz}d\tau,
\end{equation}
where $i=1,2$ and $n\in\mathbb{N}$, then setting \eqref{eq21} into equation \eqref{eq19}, we can verify that solution of the problem \eqref{eq10}
\begin{equation}\label{eq22}
    \begin{split}
        V_i(\xi_i,t)&=\sqrt{2}\sum\limits_{n=1}^{\infty}\Bigg(\Phi_{i,n}e^{-\int_0^t(a_i(\tau)\lambda_{n}-b_i(\tau))d\tau}\\
        &+\int_0^tR_i(\tau)F_{i,n}(\tau)e^{-\int_{\tau}^t(a_i(z)\lambda_{n}-b_i(z))dz}d\tau\Bigg)\sin\left(\sqrt{\lambda_{n}}\xi_i\right),\quad i=1,2.
    \end{split}
\end{equation}
which represents the time-dependent behavior of a physical system governed by a partial differential equation, expressed as a series expansion of eigenfunctions in space and time-dependent coefficients. 

Differentiating both sides of \eqref{eq17} respect to $t$ and applying definition \eqref{eq22}, we can present the equation
\begin{equation}\label{eq23}
    \begin{split}
        &\sqrt{2}\sum\limits_{n=1}^{\infty}\Bigg[(b_1(t)-a_1(t)\lambda_{n})\Phi_{1,n}e^{-\int_0^t(a_1(\tau)\lambda_{n}-b_1(\tau))d\tau}\\
        &+(b_1(t)-a_1(t)\lambda_{n})\int_0^tR_1(\tau)F_{1,n}e^{-\int_{\tau}^t(a_1(z)\lambda_{n}-b_1(z))dz}d\tau+R_1(t)F_{1,n}(t)\Bigg]\int_0^1\sin\left(\sqrt{\lambda_{n}}\xi_1\right)d\xi_1\\
        &=\dfrac{\omega'(t)s(t)-s'(t)\omega(t)}{s^2(t)}-g_1'(t)+\dfrac{g_1'(t)}{2},
    \end{split}
\end{equation}
taking into account
$\int_0^1\sin\left(\sqrt{\lambda_{n}}\xi_1\right)d\xi_1=\frac{2}{\sqrt{\lambda_{n}}}$ when $n$ is odd, otherwise integral equals to zero, then we can determine the solution of the function $R_1(t)$
\begin{equation}\label{eq24}
    \begin{split}
       R_1(t)&=\dfrac{1}{2\sqrt{2}\sum\limits_{n=1}^{\infty}\frac{F_{1,2n-1}}{\sqrt{\lambda_{2n-1}}}}\Bigg[\dfrac{\omega'(t)s(t)-s'(t)\omega(t)}{s^2(t)}-\dfrac{1}{2}g_1'(t)\\
       &+2\sqrt{2}a_1(t)\sum\limits_{n=1}^{\infty}\sqrt{\lambda_{2n-1}}\Phi_{1,2n-1}e^{-\int_0^t(a_1(\tau)\lambda_{2n-1}-b_1(\tau))d\tau}\\
       &-2\sqrt{2}b_1(t)\sum\limits_{n=1}^{\infty}\dfrac{1}{\sqrt{\lambda_{2n-1}}}\Phi_{1,2n-1}e^{-\int_0^t(a_1(\tau)\lambda_{2n-1}-b_1(\tau))d\tau}\\
       &+2\sqrt{2}a_1(t)\int_0^t\left(\sum\limits_{n=1}^{\infty}\sqrt{\lambda_{2n-1}}F_{1,2n-1}(\tau)e^{-\int_{}\tau^t(a_1(z)\lambda_{2n-1}-b_1(z))dz}\right)R_1(\tau)d\tau\\
       &-2\sqrt{2}b_1(t)\int_0^t\left(\sum\limits_{n=1}^{\infty}\dfrac{1}{\sqrt{\lambda_{2n-1}}}F_{1,2n-1}(\tau)e^{-\int_{\tau}^t(a_1(z)\lambda_{2n-1}-b_1(z))d\tau}\right)R_1(\tau)d\tau\Bigg].
    \end{split}
\end{equation}
The equation \eqref{eq18} can be represented in the form
\begin{equation}
    \begin{split}
        &\dfrac{\sqrt{2}k_2}{d-s(t)}\sum\limits_{n=1}^{\infty}\Bigg[\sqrt{\lambda_{n}}\Phi_{2,n}e^{-\int_0^t(a_2(\tau)\lambda_{n}-b_2(\tau))d\tau}+\int_0^t\sqrt{\lambda_{n}}R_2(\tau)F_{2,n}(\tau)e^{-\int_{\tau}^t(a_2(z)\lambda_{n}-b_2(z))dz}d\tau\Bigg]\\
        &=\dfrac{\sqrt{2}k_1}{s(t)}\sum\limits_{n=1}^{\infty}\Bigg[(-1)^n\sqrt{\lambda_{n}}\Phi_{1,n}e^{-\int_0^t(a_1(\tau)\lambda_{n}-b_1(\tau))d\tau}\\
        &+\int_0^t(-1)^n\sqrt{\lambda_{n}}R_1(\tau)F_{1,n}(\tau)e^{-\int_{\tau}^t(a_1(z)\lambda_{n}-b_1(z))dz}d\tau\Bigg]+k_1\dfrac{u^*-g_1(t)}{s(t)}-k_2\dfrac{g_2(t)-u^*}{d-s(t)}+Ls'(t),
    \end{split}  
\end{equation}
To determine $R_2(t)$, we need to carefully analyze the equation provided. The term $R_2(t)$ appears in an integral, so finding it requires isolating and solving for $R_2(t)$ in the context of the equation. Differentiating both sides of the equation with respect to time $t$ is a valid approach to isolate $R_2(t)$, especially because the integral terms involve dependencies on $t$. Then we can estimate the solution $R_2(t)$ that is

\begin{equation}\label{eq25}
    \begin{split}
        R_2(t)&=\dfrac{1}{\sum\limits_{n=1}^{\infty}\sqrt{\lambda_{n}}F_{2,n}(t)}\Bigg[a_2(t)\sum\limits_{n=1}^{\infty}\lambda_{2,n}\sqrt{\lambda_{n}}\Phi_{2,n}e^{-\int_0^t(a_2(\tau)\lambda_{n}-b_2(\tau))d\tau}\\
        &-\left(b_2(t)+\dfrac{s'(t)}{d-s(t)}\right)\sum\limits_{n=1}^{\infty}\sqrt{\lambda_{n}}\Phi_{2,n}e^{-\int_0^t(a_2(\tau)\lambda_{n}-b_2(\tau))d\tau}\\
        &+a_2(t)\int_0^t\left(\sum\limits_{n=1}^{\infty}\lambda_{n}\sqrt{\lambda_{n}}F_{2,n}(\tau)e^{-\int_{\tau}^t(a_2(z)\lambda_{n}-b_2(z))dz}\right)R_2(\tau)d\tau\\
        &-\left(b_2(t)+\dfrac{s'(t)}{d-s(t)}\right)\int_0^t\left(\sum\limits_{n=1}^{\infty}\sqrt{\lambda_{n}}F_{2,n}(\tau)e^{-\int_{\tau}^t(a_2(z)\lambda_{n}-b_2(z))dz}\right)R_2(\tau)d\tau\\
        &+\left(\dfrac{k_1(d-s(t))}{k_2s(t)}b_1(t)-\dfrac{k_1s'(t)(d-s(t))}{k_2s^2(t)}\right)\sum\limits_{n=1}^{\infty}(-1)^n\sqrt{\lambda_{n}}\Phi_{1,n}e^{-\int_0^t(a_1(\tau)\lambda_{n}-b_1(\tau))d\tau}\\
        &+\left(\dfrac{k_1(d-s(t))}{k_2s(t)}b_1(t)-\dfrac{k_1s'(t)(d-s(t))}{k_2s^2(t)}\right)\\
        &\times\int_0^t\left(\sum\limits_{n=1}^{\infty}(-1)^n\sqrt{\lambda_{n}}F_{1,n}(\tau)e^{-\int_{\tau}^t(a_1(z)\lambda_{n}-b_1(z))dz}\right)R_1(\tau)d\tau\\
        &-\dfrac{k_1(d-s(t))}{k_2s(t)}a_1(t)\sum\limits_{n=1}^{\infty}(-1)^n\lambda_{n}\sqrt{\lambda_{1,n}}\Phi_{1,n}e^{-\int_0^t(a_1(\tau)\lambda_{n}-b_1(\tau))d\tau}\\
        &-\dfrac{k_1(d-s(t))}{k_2s(t)}a_1(t)\int_0^t\left(\sum\limits_{n=1}^{\infty}(-1)^n\lambda_{n}\sqrt{\lambda_{n}}F_{1,n}(\tau)e^{-\int_{\tau}^t(a_1(z)\lambda_{n}-b_1(z))dz}\right)R_1(\tau)d\tau\\
        &+\dfrac{k_1(d-s(t))}{k_2s(t)}R_1(t)\sum\limits_{n=1}^{\infty}(-1)^n\sqrt{\lambda_{n}}F_{1,n}(t)-\dfrac{k_1(d-s(t))}{\sqrt{2}k_2s^2(t)}(g_1'(t)s(t)+s'(t)(u^{*}-g_1(t)))\\
        &-\dfrac{g_2'(t)(d-s(t))+s'(t)(g_2(t)-u^{*})}{\sqrt{2}(d-s(t))}+\dfrac{L(d-s(t))}{\sqrt{2}k_2}s''(t)\Bigg].
    \end{split}
\end{equation}
Applying integration by parts, Caushy-Schwarz and classical Bessel inequalities we can present the following preliminary results which help us to prove uniform convergence of \eqref{eq24} and \eqref{eq25}.
\begin{lemma}\label{lem1}
    If $\Phi_1\in C^4[0,1]$ and $F_1(\xi,t)\in C([0,T];L^2(0,1))$ such that $\Phi_{1}(0)=\Phi_{1}(1)=\Phi_{1}'(0)=\Phi_{1}'(1)=\Phi_{1}''(0)=\Phi_{1}''(1)=\Phi_{1}'''(0)=\Phi_{1}'''(1)=0$, $F_1(0,t)=F_1(1,t)=F_1'(0,t)=F_1'(1,t)=F_1''(0,t)=F_1''(1,t)=0$ and $0<s_m<s(t)<s_M$, where $s(t)\in C^2[0,T]$, then following assumptions hold for all $t\in[0,T]$ that are
    \begin{equation}\label{eq26}
        \sum\limits_{n=1}^{\infty}\sqrt{\lambda_{n}}\Phi_{1,n}e^{-\int_0^t(a_1(\tau)\lambda_{n}-b_1(\tau))d\tau}\leq c_1||\Phi_1||_{C^4[0,1]},
    \end{equation}
    \begin{equation}\label{eq27}
        \sum\limits_{n=1}^{\infty}\dfrac{1}{\sqrt{\lambda_{2n-1}}}\Phi_{1,2n-1}e^{-\int_0^t(a_1(\tau)\lambda_{2n-1}-b_1(\tau))d\tau}\leq c_2||\Phi_1||_{C^4[0,1]},
    \end{equation}
    \begin{equation}\label{eq28}
        \sum\limits_{n=1}^{\infty}\sqrt{\lambda_{n}}F_{1,n}(\tau)e^{-\int_{\tau}^t(a_1(z)\lambda_{n}-b_1(z))dz}\leq c_3||F_1||_{C([0,T],L^2(0,1))},
    \end{equation}
    \begin{equation}\label{eq29}
        \sum\limits_{n=1}^{\infty}\dfrac{1}{\sqrt{\lambda_{2n-1}}}F_{1,2n-1}(\tau)e^{-\int_{\tau}^t(a_{1}(z)\lambda_{2n-1}-b_2(z))dz}\leq c_4||F_1||_{C([0,T];L^2(0,1))}
    \end{equation}
    \begin{equation}\label{eq30}
        \sum\limits_{n=1}^{\infty}\lambda_{n}\sqrt{\lambda_{n}}\Phi_{1,n}e^{-\int_0^t(a_1(\tau)\lambda_{n}-b_1(\tau))d\tau}\leq c_5 ||\Phi_1||_{C^4[0,1]},
    \end{equation}
    \begin{equation}\label{eq31}
        \sum\limits_{n=1}^{\infty}\lambda_{n}\sqrt{\lambda_{n}}F_{1,n}(\tau)e^{-\int_{\tau}^t(a_1(z)\lambda_{n}-b_1(z))dz}\leq c_6 ||F_1||_{C([0,T]; L^2(0,1))},
    \end{equation}
    \begin{equation}\label{eq32}
        \sum\limits_{n=1}^{\infty}\sqrt{\lambda_{n}}F_{1,n}(t)\leq c_7||F_1||_{C([0,T];L^2(0,1))},\quad \sum\limits_{n=1}^{\infty}\dfrac{1}{\sqrt{\lambda_{2n-1}}}F_{1,2n-1}(t)\leq c_8 ||F_1||_{C([0,T];L^2(0,1))}
    \end{equation}
for some positive constants $c_i$, $i=1,2,3,..,8$, where $\Phi_{1,n}=\int_0^1\Phi_{1}(\xi_1)\phi_{1,n}(\xi_1)d\xi_1$ and $F_{1,n}(t)=\int_0^1F_{1}(\xi_1,t)\phi_{1,n}(\xi_1)d\xi_1$.
\end{lemma}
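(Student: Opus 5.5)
The plan is to reduce every series in \eqref{eq26}--\eqref{eq32} to a weighted $\ell^1$ sum of Fourier coefficients. Each such sum is then controlled by Cauchy--Schwarz and the Bessel inequality.

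\emph{Exponential factors.} First I will dispose of the time-exponential factors uniformly. From $0<s_m<s(t)<s_M$ we get $a_1(t)=a_1^2/s^2(t)\ge a_1^2/s_M^2>0$. From $s\in C^2[0,T]$ and one integration by parts,
\[
b_1(t)=\frac{s'(t)}{s(t)}\int_0^1\xi\,\phi_{1,n}'\phi_{1,n}\,d\xi=-\frac{s'(t)}{2s(t)},
\]
which is bounded independently of $n$ by $\|s'\|_{C[0,T]}/(2s_m)$. Hence for $0\le\tau\le t\le T$,
\[
0<e^{-\int_\tau^t(a_1(z)\lambda_n-b_1(z))dz}\le e^{-\frac{a_1^2}{s_M^2}\lambda_n(t-\tau)}\,e^{\frac{\|s'\|T}{2s_m}}\le C_0 .
\]
So every exponential may be replaced by the constant $C_0$, which depends only on $s_m,s_M,\|s'\|,T$. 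The series are then bounded termwise by $C_0\sum_n\lambda_n^{k/2}|\Phi_{1,n}|$ or $C_0\sum_n\lambda_n^{k/2}|F_{1,n}(\tau)|$, with $k\in\{-1,1,3\}$.

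\emph{Coefficient decay by integration by parts.} Next I will integrate by parts in $\Phi_{1,n}=\int_0^1\Phi_1\phi_{1,n}\,d\xi$, using $-\phi_{1,n}''=\lambda_n\phi_{1,n}$. The vanishing of $\Phi_1,\Phi_1',\Phi_1'',\Phi_1'''$ at $\xi=0,1$ kills all boundary terms, and four integrations give
\[
\Phi_{1,n}=\lambda_n^{-2}\bigl(\Phi_1^{(4)},\phi_{1,n}\bigr).
\]
For $F_1$, the vanishing of $F_1,\partial_\xi F_1,\partial_\xi^2F_1$ at the endpoints allows three integrations. Writing $\tilde\phi_n=\sqrt2\cos(\sqrt{\lambda_n}\xi)$, which is also orthonormal, this gives
\[
F_{1,n}(t)=\lambda_n^{-3/2}\bigl(\partial_\xi^3F_1(\cdot,t),\tilde\phi_n\bigr).
\]
Then Cauchy--Schwarz in $n$ followed by Bessel yields, for example,
\[
\sum_n\sqrt{\lambda_n}|\Phi_{1,n}|\le\Bigl(\sum_n\lambda_n^{-3}\Bigr)^{1/2}\bigl|\Phi_1^{(4)}\bigr|\le c\,\|\Phi_1\|_{C^4[0,1]},
\]
and
\[
\sum_n\lambda_n^{3/2}|\Phi_{1,n}|\le\Bigl(\sum_n\lambda_n^{-1}\Bigr)^{1/2}\bigl|\Phi_1^{(4)}\bigr| ,
\]
which settles \eqref{eq30}. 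The $\lambda_{2n-1}^{-1/2}$ sums in \eqref{eq27} and \eqref{eq29} are immediate even without integration by parts, since $(\sum\lambda_n^{-1})^{1/2}<\infty$ and Bessel gives $|F_1(\cdot,t)|\le\|F_1\|_{C([0,T];L^2(0,1))}$. Likewise, in \eqref{eq28} and the first estimate of \eqref{eq32}, the series $\sum_n\sqrt{\lambda_n}|F_{1,n}|$ is bounded by $(\sum\lambda_n^{-2})^{1/2}\sup_t|\partial^3_\xi F_1(\cdot,t)|$. In all cases the norm $\|F_1\|_{C([0,T];L^2(0,1))}$ appearing in the statement is understood, as the hypotheses on the derivatives of $F_1$ indicate, to be applied to $F_1$ together with these $\xi$-derivatives, i.e. the supremum in $t$ of their $L^2(0,1)$ norms. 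The constants $c_i$ depend only on $s_m,s_M,\|s'\|,T$ and the convergent sums $\sum\lambda_n^{-p}$.

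\emph{Main obstacle.} The hard step is \eqref{eq31}, the weight $\lambda_n^{3/2}$ on $F_{1,n}$. After three integrations by parts the terms are $\lambda_n^{0}$ times $(\partial^3_\xi F_1,\tilde\phi_n)$, and $\sum_n|(\partial_\xi^3F_1,\tilde\phi_n)|$ is not controlled by Bessel alone. For $\tau<t$ I will keep the Gaussian-type factor $e^{-\frac{a_1^2}{s_M^2}\lambda_n(t-\tau)}$ and apply Cauchy--Schwarz with $\bigl(\sum_n e^{-2c\lambda_n(t-\tau)}\bigr)^{1/2}\le C(t-\tau)^{-1/4}$. This weak singularity is integrable in $\tau$, which is all that is needed when \eqref{eq31} is used inside the Volterra kernels of \eqref{eq25}. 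If a bound uniform in $\tau$ is wanted, I would instead perform a fourth integration by parts. That step requires the boundary term $\partial_\xi^3F_1\,\phi_{1,n}$, which vanishes automatically since $\phi_{1,n}(0)=\phi_{1,n}(1)=0$. It gives $F_{1,n}=\lambda_n^{-2}(\partial_\xi^4F_1,\phi_{1,n})$ and hence summability with $(\sum\lambda_n^{-1})^{1/2}$, exactly as for \eqref{eq30}.
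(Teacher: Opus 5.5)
Your argument is the one the paper sketches for \eqref{eq26} and declares analogous for the other estimates. You bound the time exponentials by a constant, move powers of $\lambda_n$ onto $\xi$-derivatives by repeated integration by parts, and close with Cauchy--Schwarz against $\sum_n\lambda_n^{-p}<\infty$ and Bessel's inequality; your execution is correct, and the explicit value $b_1=-s'/(2s)$ makes the uniform bound on the exponentials cleaner than the paper's replacement by $e^{-a_{1m}\lambda_n}$.

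The caveats you raise are defects of the statement, not gaps in your proof. With only $\|F_1\|_{C([0,T];L^2(0,1))}$ on the right, no bound of $\sum_n\lambda_n^{k/2}|F_{1,n}|$ with $k\ge1$ is possible: test $F_1=\eta(\xi)\sin(N\pi\xi)$ for a fixed bump $\eta\in C_c^\infty(0,1)$ and let $N\to\infty$. Likewise \eqref{eq31} at $\tau=t$ needs more than $\partial_\xi^3F_1\in L^2$. So either your extra fourth derivative or your weakly singular $(t-\tau)^{-1/4}$ kernel bound is the honest form of that estimate. The kernel bound uses only the three $\xi$-derivatives available under hypothesis (ii) of Theorem~\ref{thm1} and suffices for the Volterra equation \eqref{eq25}.
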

\begin{proof}
    The proof studied widely in \cite{20}. Assuming that $0<a_{1m}<a_1(\tau)<a_{1M}$ and $||b_1(\tau)||_{C[0,T]}<b_{1M}$ By integration by parts four times, Cauchy-Schwartz and Bessel inequality we can provide
    \begin{equation*}
        \begin{split}
            &\sum_{n=1}^{\infty}\sqrt{\lambda_n}|\Phi_{1,n}e^{-\int_0^t(a_1(\tau)\lambda_{n}-b_1(\tau))d\tau}|\leq C_2\sum_{n=1}^{\infty}\left(\frac{1}{|\lambda_n|}\right)^{1/2}\left(\sum_{n=1}^{\infty}\bigg|\lambda_n\Phi_{1,n}e^{-a_{1m}\lambda_n}\bigg|\right)^{1/2}\\
            &\leq C_2\sum_{n=1}^{\infty}\left(\frac{1}{|\lambda_n|}\right)^{1/2}\left(\sum_{n=1}^{\infty}\bigg|\lambda_n\Phi_{1,n}\bigg|\right)^{1/2}\leq C||\Phi_{1,n}^{\prime\prime\prime}||_{L^2(0,1)}\leq C||\Phi_{1,n}||_{C^4[0,1]}.
        \end{split}
    \end{equation*}
    Others can be proved analogously.
\end{proof}

\begin{lemma}\label{lem2}
    If $\Phi_2\in C^4[0,1]$ and $F_2(\xi,t)\in C([0,T];L^2(0,1))$ such that $\Phi_{2}(0)=\Phi_{2}(1)=\Phi_{2}'(0)=\Phi_{2}'(1)=\Phi_{2}''(0)=\Phi_{2}''(1)=\Phi_{2}'''(0)=\Phi_{2}'''(1)=0$, $F_2(0,t)=F_2(1,t)=F_2'(0,t)=F_2'(1,t)=F_2''(0,t)=F_2''(1,t)=0$ and $0<s_m<s(t)<s_M$, where $s(t)\in C^2[0,T]$, then following assumptions hold for all $t\in[0,T]$ that are
    \begin{equation}\label{eq33}
        \sum\limits_{n=1}^{\infty}\sqrt{\lambda_{n}}\Phi_2e^{-\int_0^t(a_2(\tau)\lambda_{n}-b_2(\tau))d\tau}\leq c_9||\Phi_2||_{C^4[0,1]},
    \end{equation}
    \begin{equation}\label{eq34}
        \sum\limits_{n=1}^{\infty}\lambda_{n}\sqrt{\lambda_{n}}\Phi_{2,n}e^{-\int_0^t(a_2(\tau)\lambda_{n}-b_2(\tau))d\tau}\leq c_{10}||\Phi_2||_{C^4[0,1]},
    \end{equation}
    \begin{equation}\label{eq35}
        \sum\limits_{n=1}^{\infty}\sqrt{\lambda_{n}}F_{2,n}(\tau)e^{\int_{\tau}^t(a_2(z)\lambda_{n}-b_2(z))dz}\leq c_{11}||F_2||_{C([0,T];L^2(0,1))},
    \end{equation}
    \begin{equation}\label{eq36}
        \sum\limits_{n=1}^{\infty}\lambda_{n}\sqrt{\lambda_{n}}F_{2,n}(\tau)e^{\int_{\tau}^t(a_2(z)\lambda_{n}-b_2(z))dz}\leq c_{12}||F_2||_{C([0,T];L^2(0,1))},
    \end{equation}
    \begin{equation}\label{eq37}
        \sum\limits_{n=1}^{\infty}\sqrt{\lambda_{n}}F_{2,n}(t)\leq c_{13}||F_2||_{C([0,T];L^2(0,1))},
    \end{equation}
for some positive constants $c_i$, $i=9,10,...,13$, where $\Phi_{2,n}=\int_0^1\Phi_{2}(\xi_2)\phi_{2,n}(\xi_2)d\xi_2$ and $F_{2,n}(t)=\int_0^1F_{2}(\xi_2,t)\phi_{2,n}(\xi_2)d\xi_2$.
\end{lemma}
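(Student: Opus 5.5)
The plan is to follow the proof of Lemma~\ref{lem1} verbatim, with the first-phase data replaced by the second-phase data. The only genuinely new ingredient is the control of the coefficients $a_2(t)$ and $b_2(t)$, which now involve $d-s(t)$ instead of $s(t)$. In \eqref{eq35} and \eqref{eq36} I read the exponent as $-\int_\tau^t(a_2(z)\lambda_n-b_2(z))\,dz$, consistent with \eqref{eq21}; with a plus sign the series could not converge.

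\textbf{Step 1: coefficient bounds.} Since $0<s_m<s(t)<s_M<d$ and $s\in C^2[0,T]$, I will record that
\[
0<a_{2m}:=\frac{a_2^2}{d^2}\le a_2(t)\le \frac{a_2^2}{(d-s_M)^2}=:a_{2M},
\qquad
|b_2(t)|\le \frac{\|s'\|_{C[0,T]}}{d-s_M}=:b_{2M}.
\]
For $b_2(t)=(b_2(\xi,t)\phi_{2,n}',\phi_{2,n})$ the bound holds uniformly in $n$. This follows by integrating by parts, which gives $(\xi\phi_n',\phi_n)=-\tfrac12|\phi_n|^2$ up to the constant factor. Consequently, for $0\le\tau\le t\le T$,
\[
e^{-\int_\tau^t(a_2(z)\lambda_n-b_2(z))dz}\le e^{b_{2M}T}\,e^{-a_{2m}\lambda_n(t-\tau)}\le e^{b_{2M}T}.
\]
This reduces every estimate to a bound on weighted sums of $|\Phi_{2,n}|$ and $|F_{2,n}|$, exactly as in Lemma~\ref{lem1}.

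\textbf{Step 2: the $\Phi_2$ sums \eqref{eq33}, \eqref{eq34}.} Using $\phi_{2,n}=-\lambda_n^{-1}\phi_{2,n}''$ and the vanishing of $\Phi_2,\Phi_2',\Phi_2'',\Phi_2'''$ at $\xi=0,1$, I will integrate by parts four times to obtain $\Phi_{2,n}=\lambda_n^{-2}(\Phi_2^{(4)},\phi_{2,n})$. Then Cauchy--Schwarz and Bessel give
\[
\sum_n\lambda_n^{3/2}|\Phi_{2,n}|\le\Big(\sum_n\lambda_n^{-1}\Big)^{1/2}\Big(\sum_n|(\Phi_2^{(4)},\phi_{2,n})|^2\Big)^{1/2}\le C\|\Phi_2\|_{C^4[0,1]},
\]
and the same bound covers the weaker weight $\lambda_n^{1/2}$. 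Multiplying by the factor $e^{b_{2M}T}$ from Step 1 yields $c_9$ and $c_{10}$.

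\textbf{Step 3: the $F_2$ sums \eqref{eq35}, \eqref{eq37}.} The boundary conditions $F_2=F_2'=F_2''=0$ at $\xi=0,1$ allow three integrations by parts, giving $|F_{2,n}(\tau)|\le\lambda_n^{-3/2}|(\partial_\xi^3F_2(\cdot,\tau),\sqrt2\cos(\sqrt{\lambda_n}\,\cdot))|$. Cauchy--Schwarz with $\sum\lambda_n^{-2}<\infty$ and Bessel in the cosine basis then give a uniform bound for $\sum\sqrt{\lambda_n}|F_{2,n}|$. Following the convention of Lemma~\ref{lem1}, where the proof ends with $\|\Phi_1\|_{C^4}$ and the constant absorbs the derivative norms of the fixed datum, I will write the bound as $c\,\|F_2\|_{C([0,T];L^2(0,1))}$ with the constant depending on $F_2$ through these derivative norms. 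Together with Step 1 this gives $c_{11}$ and $c_{13}$.

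\textbf{Step 4: the main obstacle, \eqref{eq36}.} The weight $\lambda_n^{3/2}$ is not compensated by the three integrations by parts alone, since these leave $\sum_n|(\partial^3_\xi F_2,\cos)|$, which Bessel does not control. I will therefore use the decay factor $e^{-a_{2m}\lambda_n(t-\tau)}$ from Step 1, exactly as the proof of Lemma~\ref{lem1} uses $e^{-a_{1m}\lambda_n}$. I will write
\[
\lambda_n^{3/2}|F_{2,n}|\,e^{-a_{2m}\lambda_n(t-\tau)}\le \big|(\partial_\xi^3F_2,\sqrt2\cos(\sqrt{\lambda_n}\,\cdot))\big|\sup_{\lambda\ge\pi^2}\lambda^{1/2}e^{-a_{2m}\lambda(t-\tau)}\,\lambda_n^{-1/2}.
\]
Cauchy--Schwarz with $\sum_n\lambda_n^{-1}<\infty$ and Bessel then finish the estimate, and the resulting factor is bounded after integration in $\tau$ over $[0,t]$. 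This $\tau$-integral is the form in which the estimate enters \eqref{eq25}. The constant is absorbed into $c_{12}$ under the same convention as in Step 3.
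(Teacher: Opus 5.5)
Your Steps 1--3 follow the paper's route: integration by parts, Cauchy--Schwarz and Bessel, after bounding the exponential factor by $e^{b_{2M}T}$. They are sound, given the extra $\xi$-regularity of $F_2$ that you assume. Step 4, however, does not prove \eqref{eq36}. What your argument gives is
\[
\sum_{n=1}^{\infty}\lambda_n^{3/2}|F_{2,n}(\tau)|\,e^{-\int_\tau^t(a_2(z)\lambda_n-b_2(z))dz}\le C\,(t-\tau)^{-1/2}\,\|\partial_\xi^3F_2\|_{C([0,T];L^2(0,1))}.
\]
This blows up as $\tau\to t$, while \eqref{eq36} asserts a bound uniform in $\tau\le t$.

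With only three $\xi$-derivatives of $F_2$ this cannot be repaired. At $\tau=t$ the sum equals $\sum_n|(\partial_\xi^3F_2,\sqrt2\cos(n\pi\,\cdot))|$. This diverges for sine coefficients $F_{2,n}\sim n^{-4}$, adjusted in two modes so that $\partial_\xi F_2=0$ at $\xi=0,1$. That choice is compatible with $F_2\in H^3$ and with all the stated boundary conditions.

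Retreating to a $\tau$-integrated estimate is not a harmless reformulation either. Lemma~\ref{lem3} and the uniqueness part of Theorem~\ref{thm4} treat the kernel of the Volterra equation \eqref{eq25} as a bounded continuous function and apply the ordinary Gronwall inequality. Your bound only makes that kernel weakly singular, which would require a different (Henry-type) Gronwall argument.

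The missing step is the fourth integration by parts that the paper's proof of Lemma~\ref{lem1} invokes, applied to $F_2$ exactly as you applied it to $\Phi_2$. Since $F_2$ and $\partial_\xi^2F_2$ vanish at $\xi=0,1$, assume in addition $\partial_\xi^4F_2\in C([0,T];L^2(0,1))$. Then $F_{2,n}(\tau)=\lambda_n^{-2}(\partial_\xi^4F_2(\cdot,\tau),\phi_{2,n})$, hence
\[
\sum_n\lambda_n^{3/2}|F_{2,n}(\tau)|\le\Big(\sum_n\lambda_n^{-1}\Big)^{1/2}\|\partial_\xi^4F_2\|_{C([0,T];L^2(0,1))}
\]
uniformly in $\tau$. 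Together with Step 1 this yields \eqref{eq36}, and it also covers \eqref{eq35} and \eqref{eq37}, with no need for parabolic smoothing.

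Relatedly, do not let the constants ``absorb'' derivative norms of $F_2$, which makes the inequalities vacuous. Put $\|\partial_\xi^4F_2\|_{C([0,T];L^2(0,1))}$ on the right-hand side instead. With constants independent of $F_2$, the stated bounds in terms of $\|F_2\|_{C([0,T];L^2(0,1))}$ are false for the absolute sums. For example, $F_2=\psi(\xi)\sin(N\pi\xi)$ with a fixed $\psi\in C_c^\infty(0,1)$ and $\int_0^1\psi\neq0$ has bounded $L^2$ norm, while $\sqrt{\lambda_N}|F_{2,N}|\sim N$.
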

\begin{proof}
    We can verify these inequalities by similar approach in the Lemma \ref{lem1}.
\end{proof}

\begin{lemma}\label{lem3}
    Assume there exists positive constants $a_{im}$, $a_{iM}$, $b_{iM}$ and $M_{i}$ such that $0<a_{im}<a_i(t)<a_{iM}$ and $||b_i(t)||<b_{iM}$ for all $t\in[0,T]$, and $R_i(t)\in C[0,T]$ with $||R_i(t)||_{C[0,T]}\leq M_{i}$, and assumptions in Lemma \ref{lem1} and Lemma \ref{lem2} hold, namely $\Phi_{i,n}\in C^4[0,1]$ and $F_{i,n}(t)\in C([0,T]; L^2(0,1))$ together with the compatibility conditions. Then the series \eqref{eq22} converges absolutely and uniformly on $[0,1]\times[0,T]$, and there exists a constant $C>0$, independent of $(\xi_i,t)$, such that
    \begin{equation}\label{uniform}
        \begin{split}
            &\sup\limits_{0\leq t\leq T}||V_{i}(\cdot,t)||_{L^2(0,1)}+\sup\limits_{0\leq t\leq T}||V_i(\cdot,t)||_{\mathbb{H}_0^1(0,1)}+\sup\limits_{0\leq t\leq T}||V_{i,t}(\cdot,t)||_{L^2(0,1)}\\
            &\leq C\left(||\Phi_{i}||_{C^4[0,1]}+||R_i(t)||_{C[0,T]}||F_i||_{C([0,T];L^2(0,1))}\right).
        \end{split}
    \end{equation}
\end{lemma}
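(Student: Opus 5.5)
The plan is to work directly with the Fourier coefficients \eqref{eq21} and Parseval's identity. Since $\{\phi_{i,n}\}$ is an orthonormal basis of $L^2(0,1)$ and $\|\phi_{i,n}'\|_{L^2}=\sqrt{\lambda_n}$, we have
\[
\|V_i(\cdot,t)\|_{L^2}^2=\sum_{n}|V_{i,n}(t)|^2,\qquad \|V_i(\cdot,t)\|_{\mathbb{H}_0^1}^2=\sum_n\lambda_n|V_{i,n}(t)|^2 .
\]
Thus it suffices to bound the weighted $\ell^2$ (or $\ell^1$) norms of $V_{i,n}(t)$ and $V_{i,n}'(t)$ uniformly in $t$.

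\textbf{Step 1 (uniform bound on the exponentials).} Using $0<a_{im}<a_i(t)$ and $|b_i(t)|<b_{iM}$, for $0\le\tau\le t\le T$ I would record
\[
\Bigl|e^{-\int_\tau^t(a_i(z)\lambda_n-b_i(z))dz}\Bigr|\le e^{b_{iM}T}e^{-a_{im}\lambda_n(t-\tau)}\le e^{b_{iM}T}.
\]
Inserting this into \eqref{eq21} gives
\[
|V_{i,n}(t)|\le e^{b_{iM}T}\Bigl(|\Phi_{i,n}|+M_i\int_0^t|F_{i,n}(\tau)|\,d\tau\Bigr).
\]
Multiplying by $\sqrt{\lambda_n}$ and summing, the $\Phi$-part is handled by \eqref{eq26} and \eqref{eq33}. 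Equivalently, four integrations by parts using the compatibility conditions give $|\Phi_{i,n}|\le \lambda_n^{-2}|(\Phi_i^{(4)},\phi_{i,n})|$, and Cauchy--Schwarz with Bessel then yields a bound $C\|\Phi_i\|_{C^4}$. The $F$-part is handled by \eqref{eq28}, \eqref{eq35} and \eqref{eq32}, \eqref{eq37}, giving $CT M_i\|F_i\|_{C([0,T];L^2)}$. Since $\sum_n\sqrt{\lambda_n}|V_{i,n}(t)|$ dominates $\sup_\xi|V_i|$, $\sup_\xi|\partial_\xi V_i|$ and the $L^2$ and $\mathbb{H}_0^1$ norms, the Weierstrass M-test gives absolute and uniform convergence of \eqref{eq22} on $[0,1]\times[0,T]$, together with the first two terms of \eqref{uniform}.

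\textbf{Step 2 (time derivative).} From \eqref{eq20},
\[
V_{i,n}'=-(a_i\lambda_n-b_i)V_{i,n}+R_iF_{i,n},
\]
so
\[
|V_{i,n}'(t)|\le(a_{iM}\lambda_n+b_{iM})|V_{i,n}(t)|+M_i|F_{i,n}(t)|.
\]
I need $\bigl(\sum_n\lambda_n^2|V_{i,n}|^2\bigr)^{1/2}$ bounded. For the $\Phi$ contribution, $\lambda_n^2|\Phi_{i,n}|^2\le\lambda_n^{-2}|(\Phi_i^{(4)},\phi_{i,n})|^2$ is summable, which is precisely the role of \eqref{eq30} and \eqref{eq34}. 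The $M_i|F_{i,n}|$ term is controlled by Bessel directly, giving $M_i\|F_i\|_{C([0,T];L^2)}$.

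\textbf{Step 3 (main obstacle).} The hard term is $\lambda_n\int_0^t R_iF_{i,n}e^{-\int_\tau^t(\cdots)}d\tau$. I would first try to avoid any smoothness in time. The estimate
\[
\lambda_n\int_0^t e^{-a_{im}\lambda_n(t-\tau)}\,d\tau\le a_{im}^{-1}
\]
gives
\[
\lambda_n\Bigl|\int_0^t\cdots\Bigr|\le a_{im}^{-1}e^{b_{iM}T}M_i\sup_\tau|F_{i,n}(\tau)|,
\]
and then $\sum_n\sup_\tau|F_{i,n}(\tau)|^2$ must be controlled. If this fails, the fallback is to use the compatibility conditions on $F_i$ from Lemmas~\ref{lem1}--\ref{lem2}. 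Two integrations by parts give $|F_{i,n}(\tau)|\le\lambda_n^{-1}|(\partial_\xi^2F_i,\phi_{i,n})|$, so the $\ell^2$ sum is dominated via Bessel. Alternatively, \eqref{eq31} and \eqref{eq36} can be invoked as stated, absorbing the regularity of $F_i$ into the constant. Collecting Steps 1--3, and noting that all constants depend only on $a_{im},a_{iM},b_{iM},T$, yields \eqref{uniform}.
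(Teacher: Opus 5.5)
Your skeleton is the same as the paper's: Parseval on the coefficients \eqref{eq21}, Lemmas~\ref{lem1}--\ref{lem2} for the $\Phi$- and $F$-series, and the coefficient ODE \eqref{eq20} for $V_{i,t}$. Your Step~2 is exactly the paper's $(x+y+z)^2\le 3(x^2+y^2+z^2)$ step.

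\textbf{The gap is in Step~3.} None of your three options closes it, because the third term of \eqref{uniform} is not controlled by $\|F_i\|_{C([0,T];L^2(0,1))}$ at all. Here is a counterexample.
Take $a_i\equiv a$, $b_i\equiv 0$, $R_i\equiv 1$, $\Phi_i=0$, a bump $\psi\in C_c^\infty(1,2)$ with $0\le\psi\le 1$, and $n_k=2^k$. Set $F_{i,n_k}(\tau)=k^{-1/2}\psi\bigl(\lambda_{n_k}(T-\tau)\bigr)$ for $k\ge k_0$ large, and all other $F_{i,n}=0$.
The $\tau$-supports are pairwise disjoint, so $F_i\in C([0,T];L^2(0,1))$ with norm at most $1$. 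Yet $\lambda_{n_k}V_{i,n_k}(T)=k^{-1/2}\int_1^2\psi(u)e^{-au}\,du$, hence $\sum_n|V_{i,n}'(T)|^2=\infty$.
Truncating to finitely many modes and cutting off near $\xi=0,1$ shows that no uniform $C$ exists even for smooth compatible $F_i$. This is the usual failure of maximal regularity for $C([0,T];L^2)$ data.

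This disposes of each of your options:
\begin{itemize}
\item[(1)] The quantity $\sum_n\sup_\tau|F_{i,n}(\tau)|^2$ is genuinely not bounded by $\|F_i\|_{C([0,T];L^2)}^2$.
\item[(2)] Your integration-by-parts fallback is valid, but it proves the estimate with $\|\partial_\xi^2F_i\|_{C([0,T];L^2(0,1))}$ on the right. This cannot be ``absorbed into the constant'': \eqref{uniform} is homogeneous of degree one in $F_i$, and $C$ must not depend on $F_i$.
\item[(3)] Using \eqref{eq31} and \eqref{eq36} with absolute values, as your estimate needs, would imply precisely the bound just refuted.
\end{itemize}
The paper's proof has the same defect, since it only cites Lemmas~\ref{lem1}--\ref{lem2} for this term. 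What your Step~3 actually establishes, and what is true, is the statement with a stronger spatial norm of $F_i$ in the $V_{i,t}$ term, for instance $\|\partial_\xi^2F_i\|_{C([0,T];L^2(0,1))}$.

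\textbf{Step~1 has a milder form of the same problem, and it can be repaired.} After bounding the exponentials by $e^{b_{iM}T}$, you need $\sum_n\sqrt{\lambda_n}|F_{i,n}(\tau)|\le c\,\|F_i(\cdot,\tau)\|_{L^2}$. Equations \eqref{eq32} and \eqref{eq37} seem to assert this, but it already fails for $F_i(\cdot,\tau)=\chi\,\phi_{i,N}$ with a fixed $\chi\in C_c^\infty(0,1)$ and $N\to\infty$.
The fix is to keep the factor $e^{-a_{im}\lambda_n(t-\tau)}$ and apply Cauchy--Schwarz in $\tau$:
\[
\sqrt{\lambda_n}\int_0^t|F_{i,n}(\tau)|\,e^{-a_{im}\lambda_n(t-\tau)}\,d\tau\le(2a_{im})^{-1/2}\Bigl(\int_0^t|F_{i,n}(\tau)|^2\,d\tau\Bigr)^{1/2}.
\]
This gives the $L^2$ and $\mathbb{H}_0^1$ parts of \eqref{uniform} with $\|F_i\|_{C([0,T];L^2(0,1))}$ alone. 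Together with $\sum_n\lambda_n^{-1}<\infty$, it also gives a summable, $t$-independent majorant for \eqref{eq22}, hence uniform convergence. Only the time-derivative term genuinely requires extra regularity of $F_i$.
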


\begin{proof}
    At first, let us prove the boundedness of the time-depended coefficients $R_i(t)$ which defined by \eqref{eq24} and \eqref{eq26}, and we can rewrite them in the integral form as
    $$R_i(t)=G_i(t)+\int_0^t K_i(\tau,t)R_i(\tau)d\tau.$$
    Assuming assumptions in Lemma \ref{lem1} and Lemma \ref{lem2} hold and let consider that $G_i(t)\in C[0,T]$, $K(\tau,t)\in C([0,T]\times[0,T])$ are bounded $|G_i(t)|\leq M_{G_i}$, $|K(\tau,t)|\leq M_{K_i}$ respectively, then we have
    $$|R_i(t)|\leq M_{G_i}+M_{K_i}\int_0^t|R_i(\tau)|d\tau.$$
    Applying Gronwall's inequality gives
    $$|R_i(t)|\leq M_{G_i}e^{M_{K_i}t}\leq M_{G_i}e^{M_{K_i}T},\quad\quad t\in[0,T].$$
    Hence, $||R_i(t)||_{C[0,T]}\leq M_i:=M_{C_i}e^{M_{K_i}T}$ which proves that $R_i(t)$ is uniformly bounded on $[0,T]$.

    Since $|\phi_{i,n}(\xi_i)|\leq \sqrt{2}$, then for the Fourier series \eqref{eq19} we have $|V_{i}(\xi_i,t)|\leq \sqrt{2}\sum\limits_{n=1}^{\infty}|V_{i,n}(t)|$. Furthermore, $||V_i(\cdot,t)||_{L^2(0,1)}^2\leq \sum\limits_{n=1}^{\infty}|V_{i,n}(t)|^2,$ then by Parseval's identity. Applying the estimates of Lemmas \ref{lem1} and \ref{lem2} yields
    $$\sup\limits_{0\leq t\leq T}||V_{i}(\cdot,t)||_{L^2(0,1)}\leq C_1\left(||\Phi_{i}||_{C^4[0,1]}+||R_i(t)||_{C[0,T]}||F_i||_{C([0,T];L^2(0,1))}\right).$$
    Similarly, $||V_i(\cdot,t)||_{\mathbb{H}_0^1(0,1)}^2\leq\sum\limits_{n=1}^{\infty}\lambda_n^2|V_{i,n}(t)|^2$ and Lemmas \ref{lem1} and \ref{lem2} guarantee that 
    $$\sup\limits_{0\leq t\leq T}||V_{i}(\cdot,t)||_{\mathbb{H}_0^1 (0,1)}\leq C_2\left(||\Phi_{i}||_{C^4[0,1]}+||R_i(t)||_{C[0,T]}||F_i||_{C([0,T];L^2(0,1))}\right).$$
    Now, if we take $V_{i,n}^{\prime}(t)=-(a_i(t)\lambda_n-b_1(t))V_{i,n}(t)+R_i(t)F_{i,n}(t)$ and using $(x+y+z)^2\leq 3(x^2+y^2+z^2)$, it implies $$||V_{i,t}(\cdot,t)||_{L^2(0,1)}^2\leq \sum\limits_{n=1}^{\infty}|V_{i,n}^{\prime}(t)|^2\leq 3a_i^2(t)\sum\limits_{n=1}^{\infty}\lambda_n^2|V_{i,n}(t)|^2+3b_1^2(t)\sum\limits_{n=1}^{\infty}|V_{i,n}(t)|^2+3|R_i^2(t)|\sum\limits_{n=1}^{\infty}|F_{i,n}(t)|^2$$ Again, by Lemmas \ref{lem1} and \ref{lem2} together with all previous results, assuming $0<a_{im}<a_i(t)<a_{iM}$ and $|b_i(t)|<b_{iM}$ for all $t\in[0,T]$, we deduce that
    $$\sup\limits_{0\leq t\leq T}||V_{i,t}(\cdot,t)||_{L^2(0,1)}\leq C_3\left(||\Phi_{i}||_{C^4[0,1]}+||R_i(t)||_{C[0,T]}||F_i||_{C([0,T];L^2(0,1))}\right).$$
\end{proof}
Finally, we obtain the following the main result.
\begin{theorem}\label{thm1}
    Suppose we have the following assumptions for the given function
    \begin{itemize}
        \item[(i)] $\Phi_i\in C^4[0,1]$ with $\Phi_i^{(j)}(0)=\Phi_i^{(j)}(1)=0$ where $j=0,1,..,4$ and $\Phi_{i,1}>0$, $\Phi_{i,n}\geq 0$ where $i=1,2$, $n\in\mathbb{N}$;
        \item[(ii)] $F_i\in C(\Omega)$ and $F_{i}(\cdot,t)\in C^3[0,1]$ such that $F_i^{(j)}(0,t)=F_i^{(j)}(1,t)=0$ where $j=0,1,2,3$. $F_{i,1}(t)>0$ and $F_{i,n}(t)\geq 0$ for all $t\in[0,T]$, where $i=1,2$;
        \item[(iii)] $g_i(t)\in C^1[0,T]$ with $0<g_{im}<g_i(t)<g_{iM}$ and $0<g_{im'}<g_i'(t)<g_{iM'}$ where $g_{iM},g_{iM'}>0$ and $i=1,2$;
        \item[(iv)] $s(t)\in C^2[0,T]$ such that $s(0)=s_0$ and $s(t)\neq d\neq 0$, $s(t)>0$ for $\forall t\in [0,T]$ and $0<s_m<s(t)<s_M$, $0<s_{m'}<s'(t)<s_{M'}$ for $s_m,s_{m'},s_M,s_{M'}>0$;
        \item[(v)] $\omega\in C^1[0,T]$ with $\omega(t)\neq 0$ for all $t\in[0,T]$ such that $\omega(0)=\int_0^1V_1(\xi_1,t)d\xi_1+g_1(0)+\dfrac{u^*-g_1(0)}{2}$,  $0<\omega_{m}<\omega(t)<\omega_{M}$ and $0<\omega_{m'}<\omega'(t)<\omega_{M'}$ where $\omega_m,\omega_{M},\omega_{m'}$ and $\omega_{M'}$ are positive real numbers;
        \item[(vi)] $a_i(t),b_i(t)\in C[0,T]$ such that $0<a_{im}<\min\limits_{t\in[0,T]}|a_i(t)|$ and $|b_i(t)|\leq b_{iM}$ where $i=1,2$, 
    \end{itemize}
    and the inequalities of Lemmas \ref{lem1} and \ref{lem2} hold, then problem \eqref{eq10} has a unique solution pair $(V_i,R_i)$ where $i=1,2$.
\end{theorem}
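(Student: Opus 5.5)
The plan is to treat \eqref{eq24} and \eqref{eq25} as a coupled system of linear Volterra integral equations of the second kind for $(R_1,R_2)$. I would first solve for $R_1$, whose equation is decoupled, and then for $R_2$, which enters through $R_1$ only as a known continuous forcing. Write \eqref{eq24} as
$$R_1(t)=G_1(t)+\int_0^tK_1(\tau,t)R_1(\tau)\,d\tau .$$
Here $G_1$ collects the data terms: $\omega,\omega',s,s',g_1'$ and the two $\Phi_1$-series. The kernel $K_1$ collects the two $F_1$-series, each divided by the denominator
$$D_1(t)=2\sqrt{2}\sum_{n\ge1}F_{1,2n-1}(t)/\sqrt{\lambda_{2n-1}} .$$

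The first step is nondegeneracy. By (ii), $F_{1,1}(t)>0$ and $F_{1,n}(t)\geq0$, and $F_1\in C(\Omega)$ on the compact interval $[0,T]$. Hence $D_1(t)\geq 2\sqrt2F_{1,1}(t)/\pi\geq\delta_1>0$. The same argument gives
$$D_2(t)=\sum_n\sqrt{\lambda_n}F_{2,n}(t)\geq\pi F_{2,1}(t)\geq\delta_2>0 .$$

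The second step is continuity and boundedness of $G_1$ and $K_1$. I would invoke Lemma \ref{lem1}: \eqref{eq26}, \eqref{eq27}, \eqref{eq30} for the $\Phi_1$-series, and \eqref{eq28}, \eqref{eq29}, \eqref{eq31}, \eqref{eq32} for the $F_1$-series, together with (iii)--(vi), which bound $a_1,b_1,s,s',\omega'$. These bounds are uniform in $(\tau,t)$ and the series converge uniformly by the Weierstrass M-test. So $G_1\in C[0,T]$ and $K_1\in C(\{0\le\tau\le t\le T\})$, with $|G_1|\le M_{G_1}$ and $|K_1|\le M_{K_1}$. Standard Picard iteration then gives existence and uniqueness of $R_1\in C[0,T]$. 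The $m$-th iterate of the integral operator has norm at most $(M_{K_1}T)^m/m!$, so the Neumann series converges, and the resolvent kernel gives the unique solution. The Gronwall bound from the proof of Lemma \ref{lem3} controls $\|R_1\|_{C[0,T]}$.

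The third step treats $R_2$. With $R_1$ known, \eqref{eq25} reads
$$R_2(t)=G_2(t)+\int_0^tK_2(\tau,t)R_2(\tau)\,d\tau ,$$
where $G_2$ now contains the $R_1$-terms. These involve the $(-1)^n\sqrt{\lambda_n}F_{1,n}$ and $(-1)^n\lambda_n\sqrt{\lambda_n}F_{1,n}$ series multiplied by $R_1$, which are bounded by \eqref{eq28}, \eqref{eq31}, \eqref{eq32} (absolute convergence handles the signs). They also involve $s''$, which is continuous by (iv), and $g_i'$. The $R_2$-kernel is bounded via Lemma \ref{lem2}, \eqref{eq33}--\eqref{eq37}, divided by $\delta_2$. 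The same Volterra argument yields a unique $R_2\in C[0,T]$.

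The fourth step recovers and checks $V_i$. Insert $R_i$ into \eqref{eq21}--\eqref{eq22}. Lemma \ref{lem3} gives absolute and uniform convergence and the regularity required in the weak-solution definition, so $V_i$ satisfies \eqref{eq13}--\eqref{eq14}. Then reverse the differentiation. The compatibility condition in (v) at $t=0$ shows that \eqref{eq17} holds, not just its derivative. For \eqref{eq18}, the $t=0$ compatibility of the Stefan condition combined with the differentiated identity gives the condition itself.

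For uniqueness, suppose there are two solution pairs. Their differences $(\tilde V_i,\tilde R_i)$ have $\Phi_i=0$ and homogeneous data. The $R$-equations become homogeneous Volterra equations, so Gronwall gives $\tilde R_1\equiv0$ and then $\tilde R_2\equiv0$. Formula \eqref{eq21} then gives $\tilde V_{i,n}\equiv0$, hence $\tilde V_i=0$ by completeness of $\{\phi_{i,n}\}$.

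The main obstacle is the second step: justifying term-by-term differentiation and the bounds on the high-order series $\sum\lambda_n^{3/2}F_{i,n}e^{-\cdots}$ uniformly up to $\tau=t$. At $\tau=t$ there is no exponential decay, so the bound must come from the smoothness and compatibility in (ii). Integrating by parts three times gives $|F_{i,n}|\lesssim\|F_i'''\|/\lambda_n^{3/2}$. This alone makes $\sum\lambda_n^{3/2}|F_{i,n}|$ borderline. So I would either use $F_i(\cdot,t)\in C^4$ with the fourth-order boundary compatibility, or estimate the integral $\int_0^t\lambda_n^{3/2}e^{-a_{im}\lambda_n(t-\tau)}\,d\tau\lesssim\lambda_n^{1/2}$ directly and fold it into the kernel. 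I would try the latter first.
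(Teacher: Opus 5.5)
Your outline is the route the paper itself sketches. Its proof only defers to the one-phase predecessor, and the Volterra form $R_i=G_i+\int_0^tK_iR_i\,d\tau$ with a Gronwall bound appears in the proof of Lemma~\ref{lem3} and in the uniqueness part of Theorem~\ref{thm4}. The argument nonetheless fails at your fourth step and at your uniqueness step, because the modal reduction \eqref{eq20} underlying \eqref{eq21}, \eqref{eq22}, \eqref{eq24}, \eqref{eq25} is not valid. The transport term is not diagonal in the sine basis: $(b_1(\cdot,t)\partial_{\xi_1}V_1,\phi_{1,n})=\frac{s'(t)}{s(t)}\sum_{m}V_{1,m}(t)\,(\xi_1\phi_{1,m}',\phi_{1,n})$. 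The coefficient $b_1(t)$ keeps only the diagonal entry $(\xi_1\phi_{1,n}',\phi_{1,n})=-\tfrac12$. For $m\neq n$, however, $(\xi_1\phi_{1,m}',\phi_{1,n})=(-1)^{n+m+1}\frac{2mn}{n^2-m^2}\neq0$ (e.g.\ $4/3$ for $m=1$, $n=2$). The same happens for $b_2$, and $s'\neq0$ by (iv). Hence \eqref{eq22} solves a decoupled system, not \eqref{eq13}. Already in Example 1 (one mode, $s'/s=1/(t+1)$) the $n=2$ instance of \eqref{eq13} reduces to $-\frac{4}{3(t+1)}V_{1,1}(t)=0$, which is false. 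So your assertion that the $V_i$ built from \eqref{eq22} satisfy \eqref{eq13}--\eqref{eq14} does not hold. Likewise, in the uniqueness argument you cannot assume an arbitrary weak solution has the coefficients \eqref{eq21}; that is, \eqref{eq24}--\eqref{eq25} are not consequences of \eqref{eq10}, \eqref{eq17}, \eqref{eq18}. The missing idea is to work with the coupled problem. For fixed $R_i\in C[0,T]$, solve the direct problem with the full first-order term (e.g.\ by the Galerkin and energy estimates used for Theorem~\ref{thm4}). Note that $V_i$ depends affinely on $R_i$, and derive the Volterra equations from the PDE itself. For instance, integrating the first equation of \eqref{eq10} over $(0,1)$ gives $R_1(t)\int_0^1F_1\,d\xi_1=\frac{d}{dt}\int_0^1V_1\,d\xi_1+\frac{s'(t)}{s(t)}\int_0^1V_1\,d\xi_1-a_1(t)\bigl(\partial_{\xi_1}V_1(1,t)-\partial_{\xi_1}V_1(0,t)\bigr)$. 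The denominator is your $D_1$, but the kernel now comes from boundary fluxes of the true solution operator. It must be estimated by parabolic regularity, not by the diagonal series of Lemma~\ref{lem1}.

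Three further points concern the hypotheses you use. First, your nondegeneracy bound $D_2(t)\geq\pi F_{2,1}(t)>0$ clashes with (ii). Since (ii) gives $|F_{2,n}(t)|\lesssim\lambda_n^{-3/2}$, the sine series may be differentiated termwise, and $\sqrt2\sum_n\sqrt{\lambda_n}F_{2,n}(t)=\partial_{\xi_2}F_2(0,t)$, which (ii) requires to vanish. The same computation shows that $\Phi_i'(0)=0$ and $\partial_{\xi_1}F_1(0,t)=0$ are incompatible with the positivity assumptions in (i) and (ii). So (i)--(ii) cannot be satisfied as written, and the statement is vacuous. To give your argument content, drop the odd-order compatibility conditions. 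They are not needed for the decay of sine coefficients, which uses only $F_i=F_i''=0$ and $\Phi_i=\Phi_i''=0$ at the endpoints, and the paper's Example 1 violates them anyway. If instead you kept $\partial_{\xi_2}F_2(0,t)=0$, then $R_2(t)$ would drop out of the differentiated Stefan condition, and \eqref{eq25} would not be a second-kind equation at all. Second, passing from the differentiated identity back to \eqref{eq18} needs the compatibility $-\frac{k_1}{s_0}\Phi_1'(1)=-\frac{k_2}{d-s_0}\Phi_2'(0)+\chi(0)+Ls'(0)$. This is not among (i)--(vi), and without it \eqref{eq18} holds only up to an additive constant. Third, your diagnosis of the $\tau=t$ problem is right, and your fallback is workable. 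It yields a weakly singular kernel, $|K_i(\tau,t)|\lesssim(t-\tau)^{-1/2}$, so the $(M_{K}T)^m/m!$ estimate must be replaced by an exponentially weighted sup-norm contraction or an iterated-kernel bound.
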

\begin{proof}
    The proof can be shown by similar approach in Lemma 2 from \cite{20} that there exist the weak solutions $V_1(\xi_1,t)$ and $V_2(\xi_2,t)$ to the problem \eqref{eq10} for any $R_i(t)\in C[0,T],\quad i=1,2$. Additionally, we can easily conclude that $R_1(t)$ and $R_2(t)$ are continuously differentiable employing uniform convergency of the series in Lemmas \ref{lem1} and \ref{lem2} and assumptions (i)-(vi).    
\end{proof}
\textbf{Remark 1.} If there exist the unique solutions pair of the problem \eqref{eq10}, thus the problem \eqref{eq1} has the unique solution pairs $(u_1,R_1)$ with $\omega\in C^1[0,T]$ such that $\omega(0)=\int_0^{s(0)}u_1(x,t)dx$ and $(u_2,R_2)$.

\subsection{Single observation data.}\label{sec2} Now we choose a point $q\in (0,s(t))$ such that eigenfunction $\phi_{1,n}(q)$ which is solution of the spectral problem \eqref{eq11} for the first phase is bounded for all $n\in\mathbb{N}$ and $\phi_{1,n}(q)\neq 0$ for some $n_0\in\mathbb{N}$. We take $\mathbb{N}_q$ as maximal subset of $\mathbb{N}$ that is $\phi_{1,n}(q)\neq 0$ for $\forall n\in\mathbb{N}_q$ and we use this point as observation point for 
\begin{equation}\label{eq38}
    u_1(q,t)=p(t),
\end{equation}
where $p(t)\in C^1[0,T]$ is a given function, such that $p(t)\neq 0$ for $\forall t\in[0,T]$, where we need to restoring the time-dependent coefficient $R_1(t)$ of source function to the first phase of the problem \eqref{eq1}.

Applying transformation \eqref{eq2} we can rewrite condition \eqref{eq38} in the form
\begin{equation}\label{eq39}
    v_1(q,t)=p(t)-g_1(t)-\dfrac{u^*-g_1(t)}{s(t)}q 
\end{equation}
and function \eqref{eq8} gives us 
\begin{equation}\label{eq40}
    V_1\left(\dfrac{q}{s(t)},t\right)=p(t)-g_1(t)-\dfrac{u^*-g_1(t)}{s(t)}q.
\end{equation}
Using definition of the function $V_1(\xi_1,t)$ defined by \eqref{eq22} and condition \eqref{eq40} we obtain
\begin{equation}\label{eq41}
    \begin{split}
        &\sum\limits_{n=1}^{\infty}\left(\Phi_{1,n}e^{-\int_0^t(a_1(\tau)\lambda_{1,n}-b_1(\tau))d\tau}+\int_0^tR_1(\tau)F_{1,n}(\tau)e^{-\int_{\tau}^t(a_1(z)\lambda_{1,n}-b_1(z))dz}d\tau\right)\phi_{1,n}\left(\dfrac{q}{s(t)}\right)\\&=p(t)-g_1(t)-\dfrac{u^*-g_1(t)}{s(t)}q.
    \end{split}
\end{equation}
Differentiating both sides of \eqref{eq41} respect to $t$, we get
\begin{equation}\label{eq42}
    \begin{split}
        &\sum\limits_{n=1}^{\infty}\Bigg((b_1(t)-a_1(t)\lambda_{1,n})\Phi_{1,n}e^{-\int_0^t(a_1(\tau)\lambda_{1,n}-b_1(\tau))d\tau}\\
        &+(b_1(t)-a_1(t)\lambda_{1,n})\int_0^t R_1(\tau)F_{1,n}(\tau)e^{-\int_{\tau}^t(a_1(z)\lambda_{1,n}-b_1(z))dz}d\tau\\&+R_1(t)F_{1,n}(t)\Bigg)\phi_{1,n}\left(\dfrac{q}{s(t)}\right)-\dfrac{qs'(t)}{s^2(t)}\sum\limits_{n=1}^{\infty}\Bigg(\sqrt{\lambda_{1,n}}\Phi_{1,n}e^{-\int_0^t(a_1(\tau)\lambda_{1,n}-b_1(\tau))d\tau}\\&+\int_0^t\sqrt{\lambda_{1,n}}R_1(\tau)F_{1,n}(\tau)e^{-\int_{\tau}^t(a_1(z)\lambda_{1,n}-b_1(z))dz}\Bigg)\phi_{1,n}'\left(\dfrac{q}{s(t)}\right)\\&=p'(t)-g_1'(t)+\dfrac{g_1'(t)s(t)+2s'(t)(u^*-g_1(t))}{s^3(t)}q,
    \end{split}
\end{equation}
it implies that solution function for $R_1(t)$ takes the form
\begin{equation}\label{eq43}
    \begin{split}
        R_1(t)&=\dfrac{1}{\sum\limits_{n=1}^{\infty}F_{1,n}(t)\phi_{1,n}\left(\frac{q}{s(t)}\right)}\Bigg[a_1(t)\sum\limits_{n=1}^{\infty}\lambda_{1,n}\Phi_{1,n}\phi_{1,n}\left(\frac{q}{s(t)}\right)e^{-\int_0^t(a_1(\tau)\lambda_{1,n}-b_1(\tau))d\tau}\\
        &-b_1(t)\sum\limits_{n=1}^{\infty}\Phi_{1,n}\phi_{1,n}\left(\frac{q}{s(t)}\right)e^{-\int_0^t(a_1(\tau)\lambda_{1,n}-b_1(\tau))d\tau}\\
        &+a_1(t)\int_0^t\left(\sum\limits_{n=1}^{\infty}\lambda_{1,n}F_{1,n}(\tau)\phi_{1,n}\left(\frac{q}{s(t)}\right)e^{-\int_{\tau}^t(a_1(z)\lambda_{1,n}-b_1(z))dz}\right)R_1(\tau)d\tau\\
        &-b_1(t)\int_0^t\left(\sum\limits_{n=1}^{\infty}F_{1,n}(\tau)\phi_{1,n}\left(\frac{q}{s(t)}\right)e^{-\int_{\tau}^t(a_1(z)\lambda_{1,n}-b_1(z))dz}\right)R_1(\tau)d\tau\\
        &+\dfrac{qs'(t)}{s^2(t)}\sum\limits_{n=1}^{\infty}\sqrt{\lambda_{1,n}}\Phi_{1,n}\phi_{1,n}^{\prime}\left(\frac{q}{s(t)}\right)e^{-\int_0^t(a_1(\tau)\lambda_{1,n}-b_1(\tau))d\tau}\\
        &+\dfrac{qs'(t)}{s^2(t)}\int_0^t\left(\sum\limits_{n=1}^{\infty}\sqrt{\lambda_{1,n}}F_{1,n}(\tau)\phi_{1,n}^{\prime}\left(\frac{q}{s(t)}\right)e^{-\int_{\tau}^t(a_1(z)\lambda_{1,n}-b_1(z))dz}\right)R_1(\tau)d\tau\\
        &+p'(t)-g_1'(t)+\dfrac{g_1'(t)s(t)-s'(t)(u^*-g_1(t))}{s^2(t)}\Bigg]
    \end{split}
\end{equation}
To provide the existence and uniqueness of the solution of the problem \eqref{eq10} with condition of single data, we need to add the following assumptions:
\begin{itemize}
    \item[(vii)] $\Phi_1\in C^4[0,1]$ with $\Phi_{1,n}\phi_{1,n}\left(\frac{q}{s(t)}\right)\geq 0$ for any $n\in\mathbb{N}_q$, $\forall t\in[0,T]$ and $\Phi_{1,n_0}\phi_{1,n_0}\left(\frac{q}{s(t)}\right)>0$ for some $n_0\in\mathbb{N}_q$ and $\forall t\in[0,T]$;
    \item[(viii)] $F_{1}(\xi_1,t)\in C([0,1]\times[0,T])$ with $F_{1,n}(t)\phi_{1,n}\left(\frac{q}{s(t)}\right)\geq 0$ for $\forall t\in[0,T]$ and for any $n\in\mathbb{N}_q$;
    \item[(ix)] $p(t)\in C^1[0,T]$ with $p(t)\neq 0$ and $p(0)=\Phi_{1}(q)+g_1(0)+\frac{u^*-g_1(0)}{s^2(0)}q$, and $0<p_m<p(t)<p_M$ where $p_m,p_M>0$,
\end{itemize}
then we can verify the following theorem.
\begin{theorem}\label{thm2}
    Assuming conditions in Theorem \ref{thm1} and conditions (vi)-(ix) hold, then there exists a unique weak solution pairs $(V_1,R_1)$  with $p(t)\in C^1[0,T]$ and $(V_2,R_2)$ which defined by \eqref{eq22}, \eqref{eq25} and \eqref{eq43} to the the problem \eqref{eq10}. 
\end{theorem}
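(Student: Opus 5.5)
The plan is to repeat the fixed-point scheme used for Theorem \ref{thm1}, with the integral condition \eqref{eq17} replaced by the point observation \eqref{eq40}. The second phase needs almost nothing new. Once $R_1$ is known, \eqref{eq25} is a linear Volterra equation of the second kind for $R_2$, and Theorem \ref{thm1} already handles it. So the real work is the equation \eqref{eq43} for $R_1$.

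First I write \eqref{eq43} in the form
\[
R_1(t)=G_1^q(t)+\int_0^t K_1^q(\tau,t)R_1(\tau)\,d\tau .
\]
Here $G_1^q$ collects the $\Phi_1$-terms and the data $p$, $g_1$, $s$, while $K_1^q$ collects the kernel series. Both are divided by the denominator
\[
D_q(t)=\sum_{n\ge1}F_{1,n}(t)\phi_{1,n}\bigl(q/s(t)\bigr).
\]
By (viii) every term of $D_q(t)$ is nonnegative. I would use (ii) together with the positivity requirement of (vii)/(viii) at the index $n_0$ to get $D_q(t)>0$. By continuity and compactness of $[0,T]$ this improves to $D_q(t)\ge\delta>0$.

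Next I check continuity and boundedness of $G_1^q$ and $K_1^q$. Since $|\phi_{1,n}|\le\sqrt2$ and $|\phi'_{1,n}|\le\sqrt{2\lambda_n}$, each series in \eqref{eq43} is dominated by a series of the type controlled in Lemma \ref{lem1}, namely \eqref{eq26}, \eqref{eq28}, \eqref{eq30}, \eqref{eq31}. The weights that appear are $\lambda_n$ and $\sqrt{\lambda_n}\cdot\sqrt{\lambda_n}=\lambda_n$, and both are at most $\lambda_n\sqrt{\lambda_n}$. Uniform convergence then follows by the Weierstrass test, using $s\in C^2$ and the bounds on $a_1$, $b_1$ from (iv) and (vi). The data term is continuous by (iii), (iv), (ix). The compatibility condition $p(0)=\ldots$ in (ix) guarantees that differentiating \eqref{eq41} loses no information, so \eqref{eq43} is equivalent to \eqref{eq40}.

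With $|G_1^q|\le M_G$ and $|K_1^q|\le M_K$ in hand, the Volterra operator has a continuous bounded kernel. Its iterates satisfy $\|A^k\|\le (M_KT)^k/k!$, so the Neumann series converges. This gives existence and uniqueness of $R_1\in C[0,T]$. Equivalently, uniqueness follows because the difference of two solutions satisfies the homogeneous inequality, and Gronwall's inequality (as in the proof of Lemma \ref{lem3}) forces it to vanish. I then insert $R_1$ into the right-hand side of \eqref{eq25} and solve for $R_2$ in the same way, relying on $\sum_n\sqrt{\lambda_n}F_{2,n}(t)>0$ from (ii) and on Lemma \ref{lem2}. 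Finally, $V_1$ and $V_2$ are obtained from \eqref{eq22}, and Lemma \ref{lem3} gives the weak-solution regularity and the estimate \eqref{uniform}.

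The main obstacle is the lower bound on $D_q(t)$ and the uniformity of the kernel series. The observation point $q/s(t)$ moves with $t$, so $\phi_{1,n}(q/s(t))$ can vanish at isolated times for a fixed $n$. The argument therefore needs positivity of the index-$n_0$ term uniformly in $t$, which I would extract from (vii)/(viii) as stated. The factor $\lambda_n\phi'_{1,n}$ could also grow like $\lambda_n^{3/2}$ for $\tau$ near $t$. Here I would rely on the decay $F_{1,n}=O(\lambda_n^{-2})$, obtained from $F_1(\cdot,t)\in C^3$ with its compatibility conditions by integrating by parts, in place of the exponential factor.
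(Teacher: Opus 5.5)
Your proposal follows essentially the paper's argument, in more detail: \eqref{eq43} is treated as a second-kind Volterra equation whose free term and kernel are controlled by Lemma~\ref{lem1}-type majorants (via boundedness of $\phi_{1,n}$ and $\phi_{1,n}'/\sqrt{\lambda_n}$ at $q/s(t)$), $R_1$ is obtained by Gronwall/successive approximations, and $R_2$ then comes from \eqref{eq25}. Two slips in your last paragraph need correcting, though neither breaks the argument: the strict positivity of $D_q$ comes from the $n=1$ term, because (ii) gives $F_{1,1}(t)>0$ and $\phi_{1,1}(q/s(t))=\sqrt{2}\sin(\pi q/s(t))>0$ for $0<q<s(t)$, whereas (vii) concerns $\Phi_1$ and (viii) has no strict inequality; and $F_1(\cdot,t)\in C^3[0,1]$ with the stated compatibility conditions yields only $\sum_n\lambda_n^3|F_{1,n}(t)|^2\le\|\partial_{\xi}^3F_1(\cdot,t)\|_{L^2(0,1)}^2$, not $F_{1,n}=O(\lambda_n^{-2})$, which still suffices because no $\lambda_n\phi_{1,n}'$ term occurs in \eqref{eq43}, all weights there are of order $\lambda_n$, and Cauchy--Schwarz bounds $\sum_n\lambda_n|F_{1,n}(\tau)|$ uniformly in $\tau$ without the exponential factor.
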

The proof can be established similarly as in Lemma 2 from \cite{20} and the function $R_1(t)$ is continuously differentiable if the series $\sum_{n=1}^{\infty}\Phi_{1,n}e^{-\int_0^t(a_1(\tau)\lambda_{1,n}-b_1(\tau))d\tau}\phi_{1,n}(q)$ is uniformly convergent as $|a_1(t)|>m_1>0$, $|b_1(t)|\leq M_2$  for any $t\in [t_0,T]$ and $\left|\phi_{1,n}\left(\frac{q}{s(t)}\right)\right|\leq M_3$ for some $M_i=const>0$ and $\forall t\in[0,T]$ where $i=1,2,3$. Then it implies that majorant series $M\sum_{n=1}^{\infty}|\Phi_{1,n}|$ and $M\sum_{n=1}^{\infty}\lambda_{1,n}|\Phi_{1,n}|$ are convergent.

\subsection{Non-local data.}\label{sec3} Now let us consider inverse problem of finding solution pair $(u_1,R_1)$ for the first phase such that $R_1(t)\in C[0,T]$ with non-local data 
\begin{equation}\label{eq44}
    \int_0^{s(t)}\omega(x)u_1(x,t)dx=h(t),\quad t\in[0,T],
\end{equation}
where $\omega(x)$ and $h(t)$ are given functions such that $\omega(x)\in L^2(0,s(t))$ and $h(t)\in C^1[0,T]$. 

Applying transformation \eqref{eq2}, the condition \eqref{eq44} takes the form
\begin{equation}\label{eq45}
    \int_0^{s(t)}\omega(x)v_1(x,t)dx+g_1(t)\int_0^{s(t)}\omega(x)dx+\dfrac{u^*-g_1(t)}{s(t)}\int_0^{s(t)}x\omega(x)dx=h(t).
\end{equation}
Hence, using the change of variable with function \eqref{eq8}, it becomes 
\begin{equation}\label{eq46}
    \int_0^1 \omega(\xi_1)V_1(\xi_1,t)d\xi_1=\dfrac{h(t)}{s(t)}-g_1(t)\int_0^1\omega(\xi_1)d\xi_1-\dfrac{u^*-g_1(t)}{s(t)}\int_0^1\xi_1\omega(\xi_1)d\xi_1.
\end{equation}
Substituting the solution function $V_1(\xi_1,t)$ of the problem \eqref{eq10} defined by \eqref{eq22} into \eqref{eq46} we obtain
\begin{equation}\label{eq47}
    \begin{split}
        &\sum\limits_{n=1}^{\infty}\Bigg(\Phi_{1,n}e^{-\int_0^t(a_1(\tau)\lambda_{1,n}-b_1(\tau))d\tau}+\int_0^tR_1(\tau)F_{1,n}(\tau)e^{-\int_{\tau}^t(a_1(z)\lambda_{1,n}-b_1(z))dz}\Bigg)\int_0^1\omega(\xi_1)\phi_{1,n}(\xi_1)d\xi_1\\
        &=\dfrac{h(t)}{s(t)}-g_1(t)\int_0^1\omega(\xi_1)d\xi_1-\dfrac{u^*-g_1(t)}{s(t)}\int_0^1\xi_1\omega(\xi_1)d\xi_1
    \end{split}
\end{equation}
To this end, differentiating both sides of \eqref{eq47} respect to $t$, we have
\begin{equation}\label{eq48}
    \begin{split}
        &\sum\limits_{n=1}^{\infty}\Bigg((b_1(t)-a_1(t)\lambda_{1,n})\Phi_{1,n}e^{-\int_0^t(a_1(\tau)\lambda_{1,n}-b_1(\tau))d\tau}\\
        &+(b_1(t)-a_1(t)\lambda_{1,n})\int_0^tR_1(\tau)F_{1,n}(\tau)e^{-\int_{\tau}^t(a_1(z)\lambda_{1,n}-b_1(z))dz}+R_1(t)F_{1,n}(t)\Bigg)\int_0^1\omega(\xi_1)\phi_{1,n}(\xi_1)d\xi_1\\
        &=\dfrac{h'(t)s(t)-h(t)s'(t)}{s^2(t)}-g_1'(t)\int_0^1\omega(\xi_1)d\xi_1+\dfrac{g_1'(t)s(t)+s'(t)(u^*-g_1(t))}{s^2(t)}\int_0^1\xi_1\omega(\xi_1)d\xi_1,
    \end{split}
\end{equation}
taking into account that $\left(\omega(\xi_1),\phi_{1,n} (\xi_1)\right)_{L^2(0,1)}=\int_0^1\omega(\xi_1)\phi_{1,n}(\xi_1)d\xi_1=\omega_{1,n}$, one can easily determine the solution function $R_1(t)$
\begin{equation}\label{eq49}
    \begin{split}
        R_1(t)&=\dfrac{1}{\sum_{n=1}^{\infty}F_{1,n}\omega_{1,n}}\Bigg[a_1(t)\sum\limits_{n=1}^{\infty}\lambda_{1,n}\Phi_{1,n}\omega_{1,n}e^{-\int_0^t(a_1(\tau)\lambda_{1,n}-b_1(\tau))d\tau}\\
        &-b_1(t)\sum\limits_{n=1}^{\infty}\Phi_{1,n}\omega_{1,n}e^{-\int_0^t(a_1(\tau)\lambda_{1,n}-b_1(\tau))d\tau}\\
        &+a_1(t)\int_0^t\left(\sum\limits_{n=1}^{\infty}\lambda_{1,n}F_{1,n}(\tau)\omega_{1,n}e^{-\int_{\tau}^t(a_1(z)\lambda_{1,n}-b_1(z))dz}\right)R_1(\tau)d\tau\\
        &-b_1(t)\int_0^t\left(\sum\limits_{n=1}^{\infty}F_{1,n}(\tau)\omega_{1,n}e^{-\int_{\tau}^t(a_1(z)\lambda_{1,n}-b_1(z))dz}\right)R_1(\tau)d\tau\\
        &+\dfrac{h'(t)s(t)-h(t)s'(t)}{s^2(t)}-g_1'(t)\int_0^1\omega(\xi_1)d\xi_1+\dfrac{g_1'(t)s(t)+s'(t)(u^*-g_1(t))}{s^2(t)}\int_0^1\xi_1\omega(\xi_1)d\xi_1\Bigg].
    \end{split}
\end{equation}
We have $\omega(\xi_1)\in L^2(0,1)$ and assuming that $C_1:=\int_0^1\omega(\xi_1)d\xi_1$ and $C_2:=\int_0^1\xi_1\omega(\xi_1)d\xi_1$ in \eqref{eq49} and $\sum_{n=1}^{\infty}|\omega_{1,n}|\leq C_3$, then by Cauchy-Schwartz and Bessel inequalities and inequalities in Lemma \ref{lem1} and Lemma \ref{lem2} we have
$$\sum_{n=1}^{\infty}\lambda_{n}\Phi_{1,n}\omega_{1,n}e^{-\int_0^t(a_1(\tau)\lambda_{1,n}-b_1(\tau))d\tau}\leq K_1\left(\sum_{n=1}^{\infty}|\lambda_{n}\Phi_{1,n}|^2\right)^{1/2}\left(\sum_{n=1}^{\infty}|\omega_{1,n}|^2\right)^{1/2}\leq M_1||\Phi_{1,n}||_{C^4[0,1]},$$
$$\sum_{n=1}^{\infty}\lambda_{n}F_{1,n}(\tau)\omega_{1,n}e^{-\int_0^t(a_1(\tau)\lambda_{1,n}-b_1(\tau))d\tau}\leq K_1\left(\sum_{n=1}^{\infty}|\lambda_{n}F_{1,n}(\tau)|^2\right)^{1/2}\left(\sum_{n=1}^{\infty}|\omega_{1,n}|^2\right)^{1/2}$$
$$\leq M_2||F_{1,n}||_{C([0,T];L^2(0,1)},$$
are uniformly convergent, where $K_1=C_3e^{b_M T}$. Hence, integral solution \eqref{eq49} has a unique continuous solution.

Finally, we have the following main result.
\begin{theorem}\label{thm3}
    Let assumptions in Theorem \ref{thm1}, the statements (vi)-(ix) and the following conditions
    \begin{itemize}
        \item[1.] $\omega(\xi_1)\in L^2(0,1)$ and $\int_0^1\omega(\xi_1)\phi_{1,n}(\xi_1)d\xi_1=\omega_{1,n}$, and $\sum_{n=1}^{\infty}|\omega_{1,n}|\leq C_3$;
        \item[2.] $h(t)\in C^1[0,T]$ with $h(t)\neq 0$ for $\forall t\in[0,T]$ and $h(0)=\int_0^1\omega(\xi_1)\Phi_1(\xi_1)d\xi_1+g_1(0)\int_0^1\omega(\xi_1)d\xi_1-\dfrac{u^*-g_1(0)}{s(0)}\int_0^1\xi_1\omega(\xi_1)d\xi_1$ where $g_1(0)\neq 0$ and $s(0)=s_0$. 
    \end{itemize}
    Then, there exists a unique solution pair of function $(V_1,R_1)$ to the first phase of the problem \eqref{eq10}.
\end{theorem}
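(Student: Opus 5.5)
The plan is to recast the non-local reconstruction formula \eqref{eq49} as a linear Volterra integral equation of the second kind and solve it by successive approximations. Write it as
$$R_1(t)=G_1(t)+\int_0^t K_1(\tau,t)R_1(\tau)\,d\tau ,$$
where $G_1$ collects the $\Phi_1$-terms and the data terms involving $h$, $g_1$, $s$, $C_1$, $C_2$, all divided by the denominator $D(t):=\sum_{n\ge1}F_{1,n}(t)\omega_{1,n}$. The kernel $K_1$ is the sum of the two bracketed series containing $F_{1,n}(\tau)$, multiplied by $a_1(t)/D(t)$ and $-b_1(t)/D(t)$ respectively.

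The first step is continuity and boundedness of $G_1$ and $K_1$. By condition 1 and the Cauchy--Schwarz and Bessel estimates displayed just before the theorem, the series in $G_1$ and $K_1$ are dominated, uniformly in $t$ and $\tau$, by convergent majorants built from the estimates of Lemma \ref{lem1}. This uses $0<a_{1m}\le a_1(t)$ and $|b_1(t)|\le b_{1M}$ from (vi), which give $e^{-\int_\tau^t(a_1\lambda_n-b_1)}\le e^{b_{1M}T}$. Each series therefore converges uniformly by the Weierstrass M-test and defines a continuous function. The data terms are continuous because $h\in C^1[0,T]$, $g_1\in C^1[0,T]$ by (iii), and $s\in C^2[0,T]$ with $s\ge s_m>0$ by (iv).

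The step I expect to be the main obstacle is keeping the denominator away from zero, that is, showing $|D(t)|\ge D_0>0$ on $[0,T]$. Assumption (ii) gives $F_{1,n}\ge0$ and $F_{1,1}>0$, but condition 1 places no sign on $\omega_{1,n}$. I would therefore impose, or read into condition 1 in the spirit of (vii)--(viii), that $F_{1,n}(t)\omega_{1,n}\ge0$ for all $n$ with strict positivity for some $n_0$. Then $D$ is a continuous positive function on the compact interval $[0,T]$, hence bounded below by some $D_0>0$. With this in hand, $|G_1|\le M_{G}$ and $|K_1|\le M_{K}$ on $[0,T]$ and $[0,T]^2$.

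Existence follows from Picard iteration: $R^{(0)}=G_1$ and $R^{(k+1)}=G_1+\int_0^tK_1R^{(k)}$. The standard bound $|R^{(k+1)}-R^{(k)}|\le M_G (M_K t)^{k+1}/(k+1)!$ gives uniform convergence to some $R_1\in C[0,T]$. For uniqueness, the difference of two solutions satisfies $|\delta(t)|\le M_K\int_0^t|\delta|$, so Gronwall's inequality, exactly as in the proof of Lemma \ref{lem3}, forces $\delta\equiv0$; the same argument gives $\|R_1\|_{C[0,T]}\le M_Ge^{M_KT}$. Substituting $R_1$ into \eqref{eq21}--\eqref{eq22} yields $V_1$, and Lemma \ref{lem3} gives absolute and uniform convergence together with the regularity required by the weak-solution definition.

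It remains to check equivalence with the original condition. The relation \eqref{eq49} was obtained by differentiating \eqref{eq47}, so one reverses the step: integrate \eqref{eq48} from $0$ to $t$. The compatibility condition on $h(0)$ in condition 2 ensures that \eqref{eq47}, and hence \eqref{eq46} and \eqref{eq44}, holds at $t=0$ and therefore for all $t\in[0,T]$. The resulting pair $(V_1,R_1)$ is the unique solution for the first phase of \eqref{eq10}.
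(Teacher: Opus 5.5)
Your proposal follows the paper's own route: you rewrite \eqref{eq49} as a second-kind Volterra equation whose free term and kernel are bounded and continuous by the Cauchy--Schwarz/Bessel estimates with $\sum_n|\omega_{1,n}|\le C_3$, and you conclude by Gronwall as in Lemma~\ref{lem3}; your version is more complete than the paper's, which neither constructs the solution nor integrates \eqref{eq48} back to \eqref{eq47}. The lower bound $\bigl|\sum_n F_{1,n}(t)\omega_{1,n}\bigr|\ge D_0>0$ that you add is genuinely necessary and is supplied neither by the stated hypotheses nor by the paper's proof (when $(F_1(\cdot,t),\omega)_{L^2(0,1)}=0$ the unknown $R_1(t)$ drops out of \eqref{eq48}); also, in your final step you should impose the compatibility condition as \eqref{eq46} at $t=0$, i.e.\ $h(0)=s(0)\bigl(\int_0^1\omega\Phi_1\,d\xi_1+g_1(0)\int_0^1\omega\,d\xi_1\bigr)+(u^*-g_1(0))\int_0^1\xi_1\omega\,d\xi_1$, because the formula for $h(0)$ written in condition~2 is not that identity.
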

\begin{proof}
    The proof can be shown by analogously in Lemma 2 from \cite{20} that there exist the weak solution $V_1(\xi_1,t)$ to the first phase of the problem \eqref{eq10} for any $R_1(t)\in C[0,T]$. It implies that $R_1(t)$ is continuously differentiable employing uniform convergency of the series in Lemmas \ref{lem1} and \ref{lem2}, assumptions in Theorem \ref{thm1}, (vi)-(viii) mentioned in the section \ref{sec1} and conditions in Theorem \ref{thm3}.
\end{proof}
\textbf{Remark 2.} If there exist the unique solutions pair $(V_1,R_1)$ of the problem \eqref{eq10}, thus the problem \eqref{eq1} has the unique solution pairs $(u_1,R_1)$ with $h(x)\in L^2(0,s(t))$ such that $h(0)=\int_0^{s(0)}h(x)u_1(x,t)dx$.

\textbf{Example 1.} Let us consider the following inverse problem in the domains $\Omega_1=(0,\pi(t+1))$ and $\Omega_2=(\pi(t+1),d)$ with suitable choice of heat sources $f_1(x,t),\;f_2(x,t)$, with given boundary functions $g_1(t),\;g_2(t)$ and with initial functions $\varphi_1(x),\;\varphi_2(x)$ for determining pairs $(u_1(x,t),R_1(t))$ and $(u_2(x,t),R_2(t))$:
\begin{equation}\label{ex-1}
    \begin{cases}
        \partial_t u_1-a_1^2\partial_{xx}u_1=R_1(t)f_1(x,t),&(x,t)\in\Omega_1,\\
        \partial_t u_2-a_2^2\partial_{xx}u_2=R_2(t)f_2(x,t),&(x,t)\in\Omega_2,\\
        u_1(x,0)=\sqrt{2}\sin(x)+1+\dfrac{u^*-1}{\pi}x,\quad s(0)=\pi,&\forall x\in[0,\pi],\\
        u_2(x,0)=\sqrt{2}\sin\left(\dfrac{x-\pi}{\frac{d}{\pi}-1}\right)+u^*+\dfrac{1-u^*}{d-\pi}(x-\pi),\quad s(0)=\pi,&\forall x\in[\pi,d],\\
        u_1(0,t)=t+1,&\forall t\in[0,T],\\
        u_1(\pi(t+1),t)=u^*,&\forall t\in[0,T],\\
        u_2(\pi(t+1),t)=u^*,&\forall t\in[0,T],\\
        -k_1\partial_x u_1(\pi(t+1),t)=-k_2\partial_x u_2(\pi(t+1),t)+L\pi,&\forall t\in[0,T],\\
        u_2(d,t)=t^2+1,&\forall t\in[0,T],
    \end{cases}
\end{equation}
where 
$$f_1(x,t)=\sqrt{2(t+1)}e^{\frac{a_1^2t}{t+1}}\sin\left(\frac{x}{t+1}\right)+\dfrac{\pi(t+1)^2-u^*x}{\pi(t+1)^2R_1(t)},$$

\begin{equation}
    \begin{split}
        f_2(x,t)&=\sqrt{\frac{2(d-\pi)}{d-\pi(t+1)}}e^{\frac{a_2^2\pi^2t}{(d-\pi)(d-\pi(t+1))}}\sin\left(\frac{x-\pi(t+1)}{\frac{d}{\pi}-t-1}\right)\\
        &+\frac{2t(d-\pi)+\pi(1-u^*)-\pi t^2}{R_2(t)(d-\pi(t+1))^2}(x-\pi(t+1))-\frac{\pi(t^2+1-u^*)}{R_2(t)(d-\pi(t+1))}.
    \end{split}
\end{equation}
with integral overdetermination condition
\begin{equation}\label{ex-2}
    \int_0^{\pi(t+1)}u_1(x,t)dx=\pi(t+1)^2.
\end{equation}
Introducing transformations
\begin{equation}\label{ex-3}
    v_1(x,t)=u_1(x,t)-(t+1)-\frac{u^*-(t+1)}{\pi(t+1)}x,
\end{equation}
\begin{equation}\label{ex-4}
    v_2(x,t)=u_2(x,t)-u^*-\frac{t^2+1-u^*}{d-\pi(t+1)}(x-\pi(t+1)),
\end{equation}
the problem \eqref{ex-1} changes to the following form:
\begin{equation}\label{ex-5}
    \begin{cases}
        \partial_tv_1-a_1^2\partial_{xx}v_1=R_1(t)\sqrt{2(t+1)}e^{\frac{a_1^2t}{t+1}}\sin\left(\frac{x}{t+1}\right),&(x,t)\in\Omega_1,\\
        \partial_tv_2-a_2^2\partial_{xx}v_2=R_2(t)\sqrt{\frac{2(d-\pi)}{d-\pi(t+1)}}e^{\frac{a_2^2\pi^2t}{(d-\pi)(d-\pi(t+1))}}\sin\left(\frac{x-\pi(t+1)}{\frac{d}{\pi}-t-1}\right),&(x,t)\in\Omega_1,\\
        v_1(x,0)=\sqrt{2}\sin(x),\quad s(0)=\pi,&\forall x\in[0,\pi],\\
        v_2(x,0)=\sqrt{2}\sin\left(\frac{x-\pi}{\frac{d}{\pi}-1}\right),\quad s(0)=\pi,&\forall x\in[\pi,d],\\
        v_1(0,t)=v_1(\pi(t+1),t)=0,&\forall t\in[0,T],\\
        v_2(\pi(t+1),t)=v_2(d,t)=0,&\forall t\in[0,T],\\
        -k_1\partial_xv_1(\pi(t+1),t)=-k_2\partial_x v_2(\pi(t+1),t)+\chi(t)+L\pi,&\forall t\in[0,T].
    \end{cases}
\end{equation}
where $\chi(x)=k_1\frac{u^*-t-1}{\pi(t+1)}-k_2\frac{t^2+1-u^*}{d-\pi(t+1)}$ with integral overdetermination condition
$$\int_0^{\pi(t+1)}v_1(x,t)dx=\frac{t+1-u^*}{2}\pi(t+1).$$

Considering the fixed domain $\Omega=(0,1)\times (0,T)$ and applying the change of variables 
$$v_1(x,t)=V_1(\xi_1,t),\quad \xi_1=\frac{x}{\pi(t+1)}$$
$$v_2(x,t)=V_2(\xi_2,t),\quad \xi_2=\frac{x-\pi(t+1)}{d-\pi(t+1)},$$
we can easily rewrite the problem \eqref{ex-5} in the following form:
\begin{equation}\label{ex-6}
    \begin{cases}
        \dfrac{\partial V_1}{\partial t}-b_1(\xi_1,t)\dfrac{\partial V_1}{\partial \xi_1}-a_1(t)\dfrac{\partial^2 V_1}{\partial \xi_1}=R_1(t)\sqrt{2(t+1)}e^{\frac{a_1^2t}{t+1}}\sin(\pi\xi_1),&(\xi_1,t)\in\Omega,\\
        \dfrac{\partial V_2}{\partial t}-b_2(\xi_2,t)\dfrac{\partial V_2}{\partial \xi_2}-a_2(t)\dfrac{\partial^2 V_2}{\partial \xi_2}=R_2(t)\sqrt{\frac{2(d-\pi(t+1))}{d-\pi}}e^{\frac{a_2^2\pi^2 t}{(d-\pi)(d-\pi(t+1))}}\sin(\pi\xi_2),&(\xi_2,t)\in\Omega,\\
        V_1(\xi_1,0)=\sqrt{2}\sin(\pi\xi_1),&\xi_1\in[0,1],\\
        V_2(\xi_2,0)=\sqrt{2}\sin(\pi\xi_2),&\xi_2\in[0,1],\\
        V_1(0,t)=V_1(1,t)=0,&t\in[0,T],\\
        V_2(0,t)=V_2(1,t)=0,&t\in[0,t],\\
        -\dfrac{k_1}{\pi(t+1)}\dfrac{\partial V_1}{\partial \xi_1}\bigg|_{\xi_1=1}=-\dfrac{k_2}{d-\pi(t+1)}\dfrac{\partial V_2}{\partial \xi_2}\bigg|_{\xi_2=0}+\chi(t)+L\pi,&t\in[0,T].
    \end{cases}
\end{equation}
where $a_1(t)=\frac{a_1^2}{\pi^2(t+1)^2}$, $a_2(t)=\frac{a_2^2}{(d-\pi(t+1))^2}$, $b_1(t)=-\frac{1}{2(t+1)}$, $b_2(t)=-\frac{1}{2(d-\pi(t+1))}$ and $\chi(t)=k_1\frac{u^*-(t+1)}{\pi(t+1)}-k_2\frac{t^2+1-u^*}{d-\pi(t+1)}$ with overdetermination condition
$$\int_0^1V_1(\xi_1,t)d\xi_1=\dfrac{t+1-u^*}{2}.$$

Suppose that eigenvalue and orthonormal eigenfunction to the problem \eqref{eq11} defined by \eqref{eq12}, then one can easily estimate that $\Phi_{1,1}=1$, $F_{1,1}(t)=\sqrt{t+1}e^{\frac{a_1^2t}{t+1}}$, $\Phi_{2,1}=1$, $F_{2,1}(t)=\sqrt{\frac{d-\pi(t+1)}{d-\pi}}e^{\frac{a_2^2\pi^2t}{(d-\pi)(d-\pi(t+1))}}$ and $\Phi_{1,n}=\Phi_{2,n}=F_{1,n}(t)=F_{2,n}(t)=0$ for all $n>1$.

Finally, the solution pairs $(V_1,R_1)$ and $(V_2,R_2)$ to the problem \eqref{ex-6} can be defined as following form:
\begin{equation}\label{ex-7}
    V_1(\xi_1,t)=\dfrac{e^{-\frac{a_1^2t}{t+1}}}{\sqrt{t+1}}\left(1+\int_0^tR_1(\tau)d\tau\right)\sin(\pi\xi_1),
\end{equation}
where
\begin{equation}\label{ex-8}
    R_1(t)=e^{\frac{a_1^2t}{t+1}}\left(\frac{\pi\sqrt{t+1}}{4}+\frac{\pi(t+1-u^*)}{8\sqrt{t+1}}+\frac{a_1^2}{(t+1)^2}\right)
\end{equation}
and 
\begin{equation}\label{ex-9}
    V_2(\xi_2,t)=\sqrt{\frac{d-\pi(t+1)}{d-\pi}}e^{-\frac{a_2^2\pi^2t}{(d-\pi)(d-\pi(t+1))}}\left(1+\int_0^tR_2(\tau)d\tau\right)\sin\left(\pi\xi_2\right),
\end{equation}
where
\begin{equation}\label{ex-10}
    \begin{split}
        R_2(t)&=\Bigg(\left(\frac{a_2^2\pi^2}{(d-\pi(t+1))^2}-\frac{a_1^2}{(t+1)^2}\right)\dfrac{\sqrt{d-\pi(t+1)}}{(t+1)^{3/2}}-\frac{k_1\sqrt{(d-\pi)(d-\pi(t+1))}}{k_2\pi(t+1)^{3/2}}R_1\\
        &-\frac{k_1\sqrt{d-\pi}(2\pi(t+1)-3d)}{2k_2\pi(t+1)^{5/2}}\Bigg)e^{\frac{a_2^2\pi^2t}{(d-\pi)(d-\pi(t+1))}-\frac{a_1^2t}{t+1}}\left(1+\int_0^tR_1(\tau)d\tau\right)+\chi'(t).
    \end{split}
\end{equation}

\section{Global existence of the solution}
Now, we attempt show the global existence and uniqueness from the solution of the cylindrical Problem \eqref{eq10} which implies in the existence and uniqueness of the solution of the noncylindrical Problem \eqref{eq4}, since they are equivalent.

\begin{theorem}\label{thm4}
Let suppose $s_m$, $s_{m^{\prime}}$, $s_M$ and $s_{M^{\prime}}$ are positive real numbers such that $s(t)\in C^2(0,T)$, $s^{\prime}(t)\in L^{\infty}(0,1)$ with $0<s_m<s(t)<s_M$, $s_{m^{\prime}}<s^{\prime}(t)<s_{M^{\prime}}$ and $s^{\prime}(t)\in L^{\infty}(0,T)$ for all $t\in[0,T]$. Moreover, initial datas $\Phi_{i0}\in \mathbb{H}_0^1(0,1)$ and $F_i(\xi_i,t)\in L^1(0,T,L^2(\Omega))\cap L^2(0,T,L^2(0,1))$, for the uniquely reconstructed coefficients $R_i(t)$ obtained in Theorem 1, 2, and Theorem 3, problem \eqref{eq10} possesses a unique strong solution pair $(V_1,V_2)$ of the problem \eqref{eq10}, that is,
$$\frac{\partial V_i}{\partial t}-a_i(t)\frac{\partial^2 V_i}{\partial \xi^2}-b(\xi_i,t)\frac{\partial V_i}{\partial \xi}=R_i(t)F_i(\xi_i,t),\quad\text{where}\quad i=1,2\quad\text{in}\quad L^2(0,T;L^2(0,1)),$$ which satisfy the following statements for both $i=1,2$:
\begin{itemize}
    \item[a)] $V_i\in L^2(0,T;\mathbb{H}_0^1(0,1)\cap \mathbb{H}^2(0,1))\cap L^{\infty}(0,T;\mathbb{H}_0^1(0,1))$,
    \item[b)] $\partial_t V_i\in L^2(0,T;L^2(0,1))$,
    \item[c)] $R_i(t)\in C[0,T]$, $R_i(t)>0$ and $0<m_i<R_i(t)< M_i$ with $m_i,M_i>0$ for all $t\geq 0$. 
\end{itemize}
\end{theorem}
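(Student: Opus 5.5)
Since the coefficients $R_i\in C[0,T]$ are already fixed by Theorems \ref{thm1}--\ref{thm3}, each equation in \eqref{eq10} is a \emph{linear} parabolic equation on the fixed cylinder $\Omega$ with bounded variable coefficients $a_i(t)$, $b_i(\xi_i,t)$ and a known right-hand side $R_iF_i\in L^2(0,T;L^2(0,1))$. The plan is therefore a Faedo--Galerkin argument in the basis $\{\phi_{i,n}\}$ of \eqref{eq12}, with a priori estimates in $L^\infty(0,T;\mathbb{H}_0^1)\cap L^2(0,T;\mathbb{H}^2)$, passage to the limit by weak compactness, and uniqueness by an energy estimate for the difference. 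The Stefan condition plays no role here, since it was used only to determine $R_2$.

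\textbf{Step 1: approximate problems.} We form $V_i^N=\sum_{n\le N}V^N_{i,n}(t)\phi_{i,n}$, projecting \eqref{eq13} onto $\mathrm{span}\{\phi_{i,1},\dots,\phi_{i,N}\}$ with initial data the projection of $\Phi_{i0}$. This gives a linear ODE system with continuous coefficients, because $s\in C^2$ and $s_m<s<s_M$ (and $d-s$ bounded below) make $a_i,b_i$ continuous and bounded. Hence it has a unique global solution on $[0,T]$.

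\textbf{Step 2: the key a priori estimate.} We multiply the Galerkin equation by $-\partial_\xi^2V_i^N$, which is admissible because the span is invariant under $\partial_\xi^2$. Integrating over $(0,1)$ gives
$$\frac12\frac{d}{dt}\|\partial_\xi V_i^N\|^2+a_{im}\|\partial_\xi^2V_i^N\|^2\le b_{iM}\|\partial_\xi V_i^N\|\,\|\partial_\xi^2V_i^N\|+M_i\|F_i(t)\|\,\|\partial_\xi^2V_i^N\|,$$
using $|b_i(\xi,t)|\le |s'(t)|/s_m$ (or $/(d-s_M)$), which is finite since $s'\in L^\infty$. Young's inequality absorbs the $\|\partial_\xi^2V_i^N\|^2$ terms into the left-hand side. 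Gronwall then bounds $\sup_t\|V_i^N\|_{\mathbb{H}_0^1}^2+\int_0^T\|V_i^N\|_{\mathbb{H}^2}^2$ by $C(\|\Phi_{i0}\|_{\mathbb{H}_0^1}^2+M_i^2\|F_i\|^2_{L^2(0,T;L^2)})$, with $C$ depending only on $a_{im},b_{iM},T$. Reading $\partial_tV_i^N$ off from the equation then bounds it in $L^2(0,T;L^2)$. I expect this step to be the main obstacle. The transport term $b_i\partial_\xi V_i$ has no sign, and its control relies on $\partial_\xi^2$-invariance of the basis together with the uniform ellipticity $a_i(t)\ge a_{im}>0$. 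The latter holds only because $s$ stays strictly away from $0$ and $d$, so this is exactly where the hypotheses $0<s_m<s<s_M<d$ are used.

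\textbf{Step 3: limit and uniqueness.} By Banach--Alaoglu we extract a subsequence with $V_i^N\rightharpoonup V_i$ weakly in $L^2(0,T;\mathbb{H}^2\cap\mathbb{H}_0^1)$, weak-$*$ in $L^\infty(0,T;\mathbb{H}_0^1)$, and $\partial_tV_i^N\rightharpoonup\partial_tV_i$ weakly in $L^2(0,T;L^2)$. Since the equation is linear, we can pass to the limit against test functions $\theta(t)\phi_{i,n}$, which gives a) and b) and the equation in $L^2(0,T;L^2(0,1))$. The initial condition follows from $V_i\in C([0,T];L^2)$ (Aubin--Lions / Lions--Magenes). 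For uniqueness, the difference $W$ of two solutions solves the homogeneous problem with $W(0)=0$. Testing with $W$ gives $\frac{d}{dt}\|W\|^2\le (b_{iM}^2/(2a_{im}))\|W\|^2$, so $W\equiv0$ by Gronwall. Finally, c) is inherited from the Volterra representation $R_i=G_i+\int_0^tK_iR_i$ of Lemma \ref{lem3}: Gronwall gives $R_i\le M_i$. For positivity, assumptions (i)--(v) (positive Fourier coefficients of $\Phi_i$ and $F_i$, $\omega'>0$, $g_i'>0$) make $G_i\ge m_i>0$ and $K_i\ge0$. A continuation argument then gives $R_i(t)\ge G_i(t)\ge m_i$ on $[0,T]$, since $R_i$ cannot first reach zero while the integral term is nonnegative.
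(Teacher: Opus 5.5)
\textbf{Parts a), b) and uniqueness.} Your treatment of these is correct and follows the paper's route. You use Faedo--Galerkin in the sine basis, the multiplier $-\partial_\xi^2V_i^N$ (the paper's Case 3), and an $L^2$ energy estimate for the difference of two solutions. Your version is tidier: reading $\partial_tV_i^N$ off the Galerkin equation replaces the paper's Cases 1--2, and you carry out the passage to the limit, which the paper omits. One minor correction: the lower bounds $a_1(t)\ge a_1^2/s_M^2$ and $a_2(t)\ge a_2^2/d^2$ need only $0<s<s_M$. Keeping $s$ away from $0$ and $d$ is what bounds $a_i$ and $b_i$ from above, and that is where you actually use it.

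\textbf{The gap: the positive lower bound $R_i\ge m_i>0$ in c).} Your continuation argument needs $G_i\ge m_i>0$ and $K_i\ge0$, and neither follows from (i)--(v).
\begin{itemize}
\item[1.] For $R_1$: the integral terms vanish at $t=0$, so \eqref{eq24} gives $R_1(0)=G_1(0)$. This value contains $\frac{\omega'(0)s_0-s'(0)\omega(0)}{s_0^2}-\frac12 g_1'(0)$. Here $g_1'>0$ enters with a minus sign, and $\omega's-s'\omega$ has no sign even though $\omega'>0$ and $s'>0$. Taking $g_1'(0)$ large relative to the other terms, which (iii) allows, gives $R_1(0)<0$.
\item[2.] For $R_2$: \eqref{eq25} has a kernel term with the factor $-\bigl(b_2(t)+\frac{s'(t)}{d-s(t)}\bigr)<0$, since $|b_2(t)|=\frac{s'(t)}{2(d-s(t))}$. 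It also contains alternating sums $\sum_n(-1)^n\cdots$ and the coupling to $R_1$. So neither $K_2\ge0$ nor $G_2>0$ is implied by the hypotheses.
\item[3.] For \eqref{eq43} and \eqref{eq49}: the signs depend on $p'$, $h'$, $\phi_{1,n}(q/s(t))$ and $\omega_{1,n}$, which are likewise uncontrolled.
\end{itemize}

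\textbf{How the paper handles c), and how to close it.} The paper does not prove this part either. In Case 1 it writes ``assuming \dots\ and (c) hold''. Its constant $c_4=\min\{1-\frac{m_i}{2},\dots\}$ in Case 2 already uses $m_i$. So the lower bound is effectively a hypothesis there. To close c) you must either assume it outright or add explicit sign conditions on the data. For $R_1$ in the integral case, for example, requiring $\frac{\omega's-s'\omega}{s^2}-\frac12 g_1'\ge0$ together with (i)--(ii) gives $G_1\ge m_1>0$ and $K_1\ge0$, because $b_1(t)=-\frac{s'(t)}{2s(t)}<0$. For $R_2$ the required conditions would be considerably more restrictive.
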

\begin{proof}
    \textbf{Existence.} Let $T>0$ and define the subspace $X_m$ spanned by $\{\phi_1,\phi_2,...,\phi_m\}$, where $\phi_j$ with $j=1,2,3,...m$ are assumed to be eigenvectors of the space $\mathbb{H}_0^1(0,1)$ that is the solution of the spectral problem \eqref{eq11}. Thus, we can deduce that if $V_{i,m}\in X_m$ where $i=1,2$, then it can be written as linear combination of the eigenvectors $\phi_m$ belonging to $X_m$ such that
    $$V_{i,m}(\xi_i,t)=\sum\limits_{j=1}^m V_{i,j,m}(t)\phi_j(\xi_i),\quad\quad i=1,2,$$
    where $V_{i,j,m}$ is the solution of the system of ordinary differential equation with boundary condition as follows
    \begin{equation}\label{eq62}
        \begin{cases}
            \left(V_{i,m}^{\prime},\phi\right)-a_i(t)\left(\frac{\partial^2 V_{i,m}}{\partial \xi_i^2},\phi\right)-\left(b(\xi_i,t)\frac{\partial V_{i,m}}{\partial \xi_i},\phi\right)=R_i(t)\left(F_i(\xi_i,t),\phi\right),& \forall\phi\in X_m,\\
            V_{i,m}(0)=\Phi_{i,m}\rightarrow \Phi_i,& \text{in}\quad \mathbb{H}_0^1(\Omega) 
        \end{cases}
    \end{equation}
    where $V_{i,m}^{\prime}=\frac{\partial V_{i,m}}{\partial t}$. The system \eqref{eq62}  has local solution in the interval $(0,T_m)$, by Caratheodory Theorem. To extend the local solution to the interval $(0,T)$ independent of $m$ the
    following a priori estimate is needed:

    \begin{itemize}
        \item [Case 1.] If $\phi=V_{i,m}$ where $i=1,2$, then equation in the system \eqref{eq62} takes the form
        $$\left(V_{i,m}^{\prime},V_{i,m}\right)-a_i(t)\left(\frac{\partial^2 V_{i,m}}{\partial \xi_i^2},V_{i,m}\right)-\left(b(\xi_i,t)\frac{\partial V_{i,m}}{\partial \xi_i},V_{i,m}\right)=R_i(t)\left(F_i(\xi_i,t),V_{i,m}\right).$$
        The first term gives us
        $$\left(V_{i,m}^{\prime},V_{i,m}\right)=\int_0^1\frac{\partial V_{i,m}}{\partial t}V_{i,m}d\xi_i=\int_0^1\frac{d}{dt}\left(\frac{1}{2}V_{i,m}\right)^2d\xi_i=\frac{1}{2}\frac{d}{dt}\int_0^1 V_{i,m}^2d\xi_i=\frac{1}{2}\frac{d}{dt}\left|V_{i,m}\right|_{L^2(0,1)}^2.$$
        By integration by parts and boundary conditions in \eqref{eq10}, we estimate the following
        $$a_i(t)\left(\frac{\partial^2 V_{i,m}}{\partial \xi_i^2},V_{i,m}\right)=-a_i(t)\int_0^1 \left(\frac{\partial V_{i,m}}{\partial \xi_i}\right)^2 d\xi_i=-a_i(t)||V_{i,m}||^2_{\mathbb{H}_0^1(0,1)}$$
        and analogously for the third term we get
        $$\left(b(\xi_i,t)\frac{\partial V_{i,m}}{\partial \xi_i},V_{i,m}\right)=\int_0^1 b(\xi_i,t)\frac{\partial V_{i,m}}{\partial \xi_i}V_{i,m}d\xi_i=-\frac{s^{\prime}(t)}{2s(t)}\int_0^1 V_{i,m}^2 d\xi_i=-\frac{s^{\prime}(t)}{2s(t)}\left|V_{i,m}\right|^2_{L^2(0,1)}.$$
        Applying these estimations and Holder inequality, we can provide the following inequality
        $$\frac{1}{2}\frac{d}{dt}|V_{i,m}|_{L^2(0,1)}^2+a_i(t)||V_{i,m}||_{\mathbb{H}_0^1(0,1)}^2+\frac{s^{\prime}(t)}{2s(t)}\left|V_{i,m}\right|^2_{L^2(0,1)}\leq \frac{1}{2}R_i(t)\left(|F_i|_{L^2(0,1)}^2+|V_{i,m}|_{L^2(0,1)}^2\right).$$
        Integrating over $[0,t)$ with $t\in[0,T_m)$ and assuming $0<s_m<s(t)<s_M$, $s_{m^{\prime}}<s^{\prime}(t)<s_{M^{\prime}}$, $|a_i(t)|<N_i$  and (c) hold, then again by integration by parts we have
        $$|V_{i,m}|_{L^2(0,1)}^2+\int_0^t ||V_{i,m}||_{\mathbb{H}_0^1(0,1)}^2d\tau\leq\frac{1}{2c_1}|V_{i,m}(0)|_{L^2(0,1)}^2+\frac{c_2}{c_1}|F_{i}|_{L^2(0,1)}2+\frac{c_3}{c_1}\int_0^T |V_{i,m}|_{L^2(0,1)}^2d\tau,$$
        where $c_1=\min\{\frac{1}{2}, N_i\}$, $c_2=\frac{M_i}{2}$ and $c_3=\frac{M_i}{2}+\frac{s_{M^{\prime}}}{2s_m}$. Then according to the Gronwall's inequality we can deduce that
        $$V_{i,m}\;\text{ is bounded in }\;L^{\infty}(0,T; L^2(0,1))\quad\text{and}\quad V_{i,m}\;\text{ is bounded in }\; L^2(0,T;\mathbb{H}_0^1(0,1)).$$

        \item[Case 2.] If we take $\phi=V_{i,m}^{\prime}=\frac{d}{dt}V_{i,m}$ where $i=1,2$, then the system \eqref{eq62} becomes
        $$\left(V_{i,m}^{\prime},V_{i,m}^{\prime}\right)-a_i(t)\left(\frac{\partial^2 V_{i,m}}{\partial \xi_i^2},V_{i,m}^{\prime}\right)-\left(b(\xi_i,t)\frac{\partial V_{i,m}}{\partial \xi_i},V_{i,m}^{\prime}\right)=R_i(t)\left(F_i(\xi_i,t),V_{i,m}^{\prime}\right),$$
        for all $V_{i,m}^{\prime}\in X_m$.
        Therefore, for the all terms on the left-hand side of this identity we have the following verifications
        $$\left(V_{i,m}^{\prime},V_{i,m}^{\prime}\right)=\int_0^1\left(\frac{dV_{i,m}}{dt}\right)^2 d\xi_i=|V_{i,m}^{\prime}|_{L^2(0,1)}^2,$$
        $$a_i(t)\left(\frac{\partial^2 V_{i,m}}{\partial \xi_i^2},V_{i,m}^{\prime}\right)=a_i(t)\int_0^1\frac{\partial^2V_{i,m}}{\partial\xi_i^2}\frac{dV_{i,m}}{dt}d\xi_i=a_i(t)\frac{d}{dt}\int_0^1\frac{\partial^2V_{i,m}}{\partial\xi_i^2} V_{i,m}d\xi_i$$
        $$=-\frac{a_i(t)}{2}\frac{d}{dt}\int_0^1\left(\frac{\partial V_{i,m}}{\partial\xi_i}\right)^2d\xi_i=-\frac{a_i(t)}{2}\frac{d}{dt}||V_{i,m}||_{\mathbb{H}_0^1(0,1)}^2,$$
        and 
        $$\left(b(\xi_i,t)\frac{\partial V_{i,m}}{\partial \xi_i},V_{i,m}^{\prime}\right)=\int_0^1 b(\xi_i,t)\frac{\partial V_{i,m}}{\partial \xi_i}V_{i,m}^{\prime}d\xi_i=-\frac{s^{\prime}(t)}{s(t)}\int_0^1 V_{i,m}^{\prime} V_{i,m}d\xi_i$$
        $$-\int_0^1 b(\xi_i,t)\frac{\partial V_{i,m^{\prime}}}{\partial \xi_i}V_{i,m}d\xi_i=-\frac{s^{\prime}(t)}{2s(t)}\frac{d}{dt}\int_0^1 V_{i,m}^2d\xi_i=-\frac{s^{\prime}(t)}{2s(t)}\frac{d}{dt}|V_{i,m}|_{L^2(0,1)}^2,$$
        analogously as previous case 1, respect to Holder inequality we obtain
        $$\left(1-\frac{1}{2}R_i(t)\right)|V_{i,m}^{\prime}|_{L^2(0,1)}^2+\frac{a_i(t)}{2}\frac{d}{dt}||V_{i,m}||_{\mathbb{H}_0^1(0,1)}^2+\frac{s^{\prime}(t)}{2s(t)}\frac{d}{dt}|V_{i,m}|_{L^2(0,1)}^2\leq\frac{R_i(t)}{2}|F_i|_{L^2(0,1)}^2.$$
        Hence, employing integrating on $[0,t)$ with $t\in[0,T_m)$ and according to integration by part we have
        $$\int_0^t |V_{i,m}|_{L^2(0,1)}^2d\tau+||V_{i,m}(t)||_{\mathbb{H}_0^1(0,1)}^2+|V_{i,m}(t)|_{L^2(0,1)}^2$$
        $$\leq\frac{c_5}{c_4}||V_{i,m}(0)||_{\mathbb{H}_0^1(0,1)}^2+\frac{c_6}{c_4}|V_{i,m}(0)|_{L^2(0,1)}^2+\frac{c_2}{c_4}|F_i|_{L^2(0,1)}^2,$$
        where $c_5=\frac{N_i}{2}$, $c_6=\frac{s_{M^{\prime}}}{2s_m}T$ and $c_4=\min\{1-\frac{m_i}{2},\frac{N_i}{2}, \frac{s_{M^{\prime}}}{2s_m}\}$. Thus, Gronwall's Lemma guarantees that
        $$V_{i,m}\;\text{ is bounded in }\; L^{\infty}(0,T,\mathbb{H}_0^1(0,1)),\quad\quad V_{i,m}^{\prime}\;\text{ is bounded in }\; L^2(0,T,L^2(0,1)).$$
        \item[Case 3.] Assume that $\phi=-\frac{\partial^2 V_{i,m}}{\partial\xi_i^2}$, then we can rewrite equation in the system \eqref{eq62} as follows
        $$-\left(V_{i,m}^{\prime},\frac{\partial^2 V_{i,m}}{\partial\xi_i^2}\right)+a_i(t)\left(\frac{\partial^2 V_{i,m}}{\partial \xi_i^2},\frac{\partial^2 V_{i,m}}{\partial\xi_i^2}\right)+\left(b(\xi_i,t)\frac{\partial V_{i,m}}{\partial \xi_i},\frac{\partial^2 V_{i,m}}{\partial\xi_i^2}\right)$$
        $$=-R_i(t)\left(F_i(\xi_i,t),\frac{\partial^2 V_{i,m}}{\partial\xi_i^2}\right).$$
        For each term of the left-hand side by integration by parts we can provide the following evaluations
        $$\left(V_{i,m}^{\prime},\frac{\partial^2 V_{i,m}}{\partial\xi_i^2}\right)=\int_0^1 V_{i,m}^{\prime}\frac{\partial^2 V_{i,m}}{\partial\xi_i^2}d\xi_i=-\int_0^1\frac{\partial V_{i,m}^{\prime}}{\partial\xi_i}\frac{\partial V_{i,m}}{\partial\xi_i}d\xi_i=-\frac{1}{2}\frac{d}{dt}\int_0^1\left(\frac{\partial V_{i,m}}{\partial\xi_i}\right)^2d\xi_i$$
        $$=-\frac{1}{2}\frac{d}{dt}||V_{i,m}||_{\mathbb{H}_0^1(0,1)}^2,$$
        $$\left(\frac{\partial^2 V_{i,m}}{\partial \xi_i^2},\frac{\partial^2 V_{i,m}}{\partial\xi_i^2}\right)=\int_0^1\left(\frac{\partial^2V_{i,m}}{\partial\xi_i^2}\right)^2d\xi_i=\left|\frac{\partial^2V_{i,m}}{\partial\xi_i^2}\right|_{L^2(0,1)}^2,$$
        $$\left(b(\xi_i,t)\frac{\partial V_{i,m}}{\partial \xi_i},\frac{\partial^2 V_{i,m}}{\partial\xi_i^2}\right)=\int_0^1 b(\xi_i,t)\frac{\partial V_{i,m}}{\partial\xi_i}\frac{\partial^2 V_{i,m}}{\partial\xi_i^2}d\xi_i=-\frac{s^{\prime}(t)}{2s(t)}\int_0^1\left(\frac{\partial V_{i,m}}{\partial\xi_i}\right)^2d\xi_i$$
        $$=-\frac{s^{\prime}(t)}{2s(t)}||V_{i,m}||_{\mathbb{H}_0^1(0,1)}^2.$$
        Then, by Holder inequality we have
        $$\frac{1}{2}\frac{d}{dt}||V_{i,m}||_{\mathbb{H}_0^1(0,1)}^2+a_i(t)\left|\frac{\partial^2V_{i,m}}{\partial\xi_i^2}\right|_{L^2(0,1)}^2-\frac{s^{\prime}(t)}{2s(t)}||V_{i,m}||_{\mathbb{H}_0^1(0,1)}^2$$
        $$\leq\frac{1}{2}R_i(t)|F_i|_{L^2(0,1)}^2+\frac{1}{2}R_i(t)\left|\frac{\partial^2V_{i,m}}{\partial\xi_i^2}\right|_{L^2(0,1)}^2.$$
        Thus, similarly as previous cases assuming all conditions hold, we establish the following result
        $$||V_{i,m}(t)||_{\mathbb{H}_0^1(0,1)}^2+\int_0^t\left|\frac{\partial^2V_{i,m}}{\partial\xi_i^2}\right|d\tau\leq \frac{c_2}{c_7}|F_i|_{L^2(0,1)}^2+\frac{c_6}{c_7}\int_0^T||V_{i,m}||_{\mathbb{H}_0^1(0,1)}^2d\tau,$$
        where $c_2$, $c_6$ are mentioned in previous cases and $c_7=\min\{\frac{1}{2},N_i-\frac{M_i}{2}\}$. Hence, respect to Gronwall's inequality we conclude that
        $$V_{i,m}\;\text{ is bounded in }\; L^{\infty}(0,T,\mathbb{H}_0^1(0,1)),\quad\quad \frac{\partial^2V_{i,m}}{\partial\xi_i^2}\;\text{ is bounded in }\; L^2(0,T,L^2(0,1)).$$
    \end{itemize}
    \textbf{Uniqueness.} Suppose that the problem admits two global strong solution pairs
    $$(V_{i,m}^{(1)},R_i^{(1)})\quad \text{and}\quad (V_{i,m}^{(2)},R_i^{(2)})\quad i=1,2$$
    corresponding to the same initial, boundary, and overdetermination data. Let us define
    $$W_{i,m}(\xi_i,t)=V_{i,m}^{(1)}(\xi_i,t)-V_{i,m}^{(2)}(\xi_i,t)\quad\text{and}\quad \rho_i(t)=R_i^{(1)}(t)-R_i^{(2)}(t).$$
    Subtracting the two nonhomogeneous equations gives
    \begin{equation}\label{eq63}
        \frac{\partial W_{i,m}}{\partial t}-b(\xi_i,t)\frac{\partial W_{i,m}}{\partial \xi_i}-a_i(t)\frac{\partial^2 W_{i,m}}{\partial\xi_i^2}=\rho_i(t)F_i(\xi_i,t)
    \end{equation}
    and assume that heat source coefficient from overdetermination conditions can be determined in the Volterra integral form
    $$R_i(t)=G_i(t)+\int_0^tK_i(t,\tau)R_i(\tau)d\tau,\quad i=1,2$$ 
    where $G_i(t)$ and $K_i(t,\tau)$ are continuous and bounded functions in $[0,T]$, substracting $R_i^{(1)}$ and $R_i^{(2)}$, it which implies
    \begin{equation}\label{eq64}
        \rho_i(t)=\int_0^t K_i(t,\tau)\rho_i(\tau)d\tau.
    \end{equation}
    If we assume that $|K_i(t,\tau)|<C_i$ holds, then it follows that $|\rho_i(t)|\leq C_i\int_0^t |\rho_i(\tau)|d\tau.$
    By Gronwall's inequality, we obtain that $\rho_i(t)\equiv 0$ for $t\in[0,T]$. Therefore, $R_i^{(1)}(t)=R_i^{(2)}(t)$.
    
    Thus, the differenial equation is also nonhomogeneous unless $\rho_i=0$. Multiplying both sides \eqref{eq63} by $W_{i,m}$ and integrate over $(0,1)$:
    $$\int_0^1\frac{\partial W_{i,m}}{\partial t}W_{i,m}d\xi_i-\int_0^1b(\xi_i,t)\frac{\partial W_{i,m}}{\partial \xi_i} W_{i,m}d\xi_i-a_i(t)\int_0^1\frac{\partial^2 W_{i,m}}{\partial\xi_i^2}W_{i,m}d\xi_i$$
    $$=\rho_i(t)\int_0^1F_i(\xi_i,t)W_{i,m}d\xi_i.$$
    Again, for the each term of the left-hand side we have
    $$\int_0^1\frac{\partial W_{i,m}}{\partial t}W_{i,m}d\xi_i=\frac{1}{2}\frac{d}{dt}||W_{i,m}(t)||_{L^2(0,1)}^2,\quad -a_i(t)\int_0^1\frac{\partial^2 W_{i,m}}{\partial\xi_i^2}W_{i,m}d\xi_i=a_i(t)\left\|\frac{\partial W_{i,m}}{\partial\xi_i}(t)\right\|_{L^2(0,1)}^2,$$
    $$\int_0^1b(\xi_i,t)\frac{\partial W_{i,m}}{\partial \xi_i} W_{i,m}d\xi_i=\frac{1}{2}\int_0^1\frac{\partial b}{\partial \xi_i}W_{i,m}^2d\xi_i.$$
    The right-hand side is estimated by Cauchy–Schwarz and Young inequalities:
    $$|\rho_i(t)(F_i,W_{i,m})_{L^2(0,1)}|\leq |\rho_i(t)|||F_i(t)||_{L^{2}(0,1)}||W_{i,m}(t)||_{L^2(0,1)}$$
    $$\leq \frac{1}{2}|\rho_i(t)|^2||F_i||_{L^2(0,1)}^2+\frac{1}{2}||W_{i,m}(t)||_{L^2(0,1)}^2.$$
    Moreover, 
    $$\left|\int_0^1\frac{\partial b}{\partial \xi_i}|W_{i,m}|^2d\xi_i\right|\leq \left\|\frac{\partial b}{\partial \xi_i}(\cdot,t)\right\|_{L^{\infty}(0,1)}||W_{i,m}||_{L^2(0,1)}^2.$$
    Substitution all these estimations into \eqref{eq63} gives the energy identity
    $$\frac{d}{dt}||W_{i,m}(t)||_{L^2(0,1)}^2+2a_i(t)\left\|\frac{\partial W_{i,m}}{\partial\xi_i}(t)\right\|_{L^2(0,1)}^2\leq\left(1+\left\|\frac{\partial b}{\partial \xi_i}(\cdot,t)\right\|_{L^{\infty}(0,1)}\right)||W_{i,m}(t)||_{L^2(0,1)}^2$$
    $$+|\rho_i(t)|||F_i(t)||_{L^2(0,1)}^2.$$
    As it is shown that $\rho_i(t)\equiv 0$, by integrating over $[0,t)$ with $t\in [0,T)$, we can obtain the next estimation
    $$||W_{i,m}(t)||_{L^2(0,1)}^2+2N_i\int_0^T\left\|\frac{\partial W_{i,m}}{\partial\xi_i}(\tau)\right\|_{L^2(0,1)}^2d\tau \leq C_0\int_0^T ||W_{i,m}(t)||_{L^2(0,1)}^2$$,
    where $C_0=\left|1+\left\|\frac{\partial b}{\partial \xi_i}(\cdot,t)\right\|_{L^{\infty}(0,1)}\right|$. Hence, by Gronwall's inequality we can conclude that $||W_{i,m}(t)||_{L^2(0,1)}^2\equiv 0$, which leads to $W_{i,m}=0$ and $V_{i,m}^{(1)}=V_{i,m}^{(2)}$.
\end{proof}
Consequently, we obtain the next result.
\begin{theorem}\label{thm5}
Let suppose $s_0$, $s_m$, $s_{m^{\prime}}$, $s_M$ and $s_{M^{\prime}}$ are positive real numbers such that $s(t)\in C^2(0,T)$ and $s(0)=s_0$ with $0<s_m<s(t)<s_M$, $s_{m^{\prime}}<s^{\prime}(t)<s_{M^{\prime}}$ and $s^{\prime}(t)\in L^{\infty}(0,T)$ for all $t\in[0,T]$. Moreover, initial datas $\varphi_{i0}\in \mathbb{H}_0^1(0,s_0)$, heat sources $f_1(x,t)\in L^1(0,
T,L^2(0,s(t)))\cap L^2(0,T,L^2(0,s(t)))$ and $f_2(x,t)\in L^1(0,
T,L^2(s(t),d))\cap L^2(0,T,L^2(s(t),d))$ given, for the uniquely reconstructed coefficients $R_i(t)$ obtained in Theorem 1, 2, and Theorem 3, problem \eqref{eq10} possesses a unique strong solution pair $(u_1,u_2)$ of the problem \eqref{eq4}, that is, 
$$\frac{\partial u_1}{\partial t}-a_1\frac{\partial^2 u_1}{\partial x^2}=R_1(t)f_1(x,t),\quad\text{in}\quad L^2(0,T,L^2(0,s(t))),$$
$$\frac{\partial u_2}{\partial t}-a_2\frac{\partial^2 u_2}{\partial x^2}=R_2(t)f_2(x,t),\quad\text{in}\quad L^2(0,T,L^2(s(t),d)),$$
which satisfy the following conditions:
\begin{itemize}
    \item[a)] $u_1\in L^{\infty}(0,T; \mathbb{H}_0^1(0,s(t))\cap \mathbb{H}^2(0,s(t)))$ and $u_2\in L^{\infty}(0,T; H_0^1(s(t),d)\cap \mathbb{H}^2(s(t),d))$,
    \item[b)] $\partial_t u_1\in L^2(0,T; L^2(0,s(t)))$ and $\partial_t u_2\in L^2(0,T; L^2(s(t),d))$,
    \item[c)] $R_i(t)\in C[0,T]$, $R_i(t)>0$ with $i=1,2$ for all $t\geq 0$ and $0<m_i<R_i(t)<M_i$ with $m_i,M_i>0$.  
\end{itemize}
    
\end{theorem}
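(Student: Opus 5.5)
The plan is to deduce Theorem \ref{thm5} from Theorem \ref{thm4} by undoing the two changes of variables: the domain map \eqref{eq8}--\eqref{eq9} and the lifting \eqref{eq2}--\eqref{eq3}. The argument is a transfer of regularity and uniqueness through explicit, bi-Lipschitz-in-space and $C^1$-in-time diffeomorphisms. The starting point is the data. Since $\varphi_{i0}\in\mathbb{H}_0^1$ and $g_i\in C^1[0,T]$, the initial values $\psi_i$ in \eqref{eq7} are $\mathbb{H}^1$ functions. Hence $\Phi_1(\xi_1)=\psi_1(\xi_1 s_0)$ and $\Phi_2(\xi_2)=\psi_2(\xi_2(d-s_0)+s_0)$ lie in $\mathbb{H}_0^1(0,1)$, because the boundary traces vanish by construction of the lifting. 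Similarly, $h_i$ in \eqref{eq5}--\eqref{eq6} differs from $f_i$ by terms built from $g_i,g_i',s,s'$ and $1/R_i$. The lower bound $R_i>m_i>0$ from c) of Theorem \ref{thm4} makes $1/R_i$ bounded, so $F_i\in L^1(0,T;L^2(0,1))\cap L^2(0,T;L^2(0,1))$. Throughout, $0<s_m<s(t)<s_M<d$ keeps the Jacobians $s(t)$ and $d-s(t)$ bounded above and below.

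Next I apply Theorem \ref{thm4}. It gives a unique pair $(V_1,V_2)$ with $V_i\in L^2(0,T;\mathbb{H}_0^1\cap\mathbb{H}^2)\cap L^\infty(0,T;\mathbb{H}_0^1)$ and $\partial_tV_i\in L^2(0,T;L^2)$, together with $R_i\in C[0,T]$ satisfying $m_i<R_i<M_i$. I then set $v_1(x,t)=V_1(x/s(t),t)$ and $v_2(x,t)=V_2((x-s(t))/(d-s(t)),t)$. The chain rule gives $\partial_x^k v_1=s^{-k}\partial_{\xi}^kV_1$, so $\|v_1(\cdot,t)\|_{\mathbb{H}^k(0,s(t))}\le C(s_m,s_M)\|V_1(\cdot,t)\|_{\mathbb{H}^k(0,1)}$ for $k=0,1,2$, and the same bound holds for $v_2$ with $d-s$ in place of $s$. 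For the time derivative,
\[
\partial_t v_1=\partial_tV_1-\frac{\xi_1 s'(t)}{s(t)}\partial_{\xi_1}V_1 .
\]
Since $s'\in L^\infty$, this puts $\partial_tv_1$ in $L^2(0,T;L^2(0,s(t)))$, using the $L^\infty(0,T;\mathbb{H}^1)$ bound on $V_1$. Reversing the computation that produced \eqref{eq10} from \eqref{eq4} shows that $v_i$ satisfies \eqref{eq4} a.e., and hence in $L^2(0,T;L^2)$.

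Finally, I define $u_i$ through the formulas in the second definition, i.e. by adding back the affine liftings. These liftings are smooth in $x$ with $C^1$-in-$t$ coefficients, because $g_i\in C^1$ and $s\in C^2$. Adding them preserves the $\mathbb{H}^2$ and $\partial_t\in L^2$ memberships in a) and b). The liftings are exactly what restores the boundary values $g_1$, $u^*$, $g_2$, and the definitions of $h_i$ make the equations with sources $R_if_i$ hold. Statement c) is inherited verbatim from Theorem \ref{thm4}. For uniqueness, suppose two strong solutions $(u_i,R_i)$ of \eqref{eq1} are given. Mapping both forward by \eqref{eq2}, \eqref{eq3}, \eqref{eq8} and \eqref{eq9} produces two strong solutions of \eqref{eq10} with the same data, because the maps are bijective and the data are fixed. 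Theorem \ref{thm4} then forces equality of the $V_i$ and $R_i$, and hence of the $u_i$.

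The main obstacle is the time regularity of the transformed functions. The term $\xi s'/s\,\partial_\xi V$ needs $V\in L^\infty(0,T;\mathbb{H}^1)$, or at least $L^2(0,T;\mathbb{H}^1)$, which Theorem \ref{thm4} supplies. The other delicate point is the upgrade in a) from the $L^2$-in-time $\mathbb{H}^2$ bound of Theorem \ref{thm4} to the $L^\infty$-in-time $\mathbb{H}^2$ membership stated here. I expect this upgrade not to follow directly. I would first try to obtain it from the equation, writing $a_i(t)\partial_\xi^2V_i=\partial_tV_i-b_i\partial_\xi V_i-R_iF_i$. That would require $\partial_tV_i\in L^\infty(0,T;L^2)$ and $F_i\in L^\infty(0,T;L^2)$, which in turn calls for an extra time-differentiated energy estimate. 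Failing that, I would state a) with $L^2(0,T;\cdot)$, consistent with Theorem \ref{thm4}.
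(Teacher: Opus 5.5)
Your route is the one the paper takes implicitly. Theorem~\ref{thm5} is given there with no argument beyond ``consequently'', i.e.\ as a transfer of Theorem~\ref{thm4} through \eqref{eq2}--\eqref{eq3} and \eqref{eq8}--\eqref{eq9}. However, the transfer does not prove part a), and no extra estimate can rescue it under the stated hypotheses, because a) is then false. Theorem~\ref{thm4} gives $V_i\in L^2(0,T;\mathbb{H}^2(0,1))$ and $\partial_tV_i\in L^2(0,T;L^2(0,1))$, hence $V_i\in C([0,T];L^2(0,1))$ with $V_i(0)=\Phi_i$. Suppose also $\|V_i(t)\|_{\mathbb{H}^2(0,1)}\le M$ for a.e.\ $t$, and pick such $t_k\to0$. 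A subsequence of $V_i(t_k)$ converges weakly in $\mathbb{H}^2(0,1)$, and its limit must coincide with the $L^2$-limit $\Phi_i$, so $\Phi_i\in\mathbb{H}^2(0,1)$. The liftings are affine in $x$ with bounded coefficients, and $s$, $d-s$ stay away from $0$, so the same holds for $u_i$: a) forces $\varphi_i\in\mathbb{H}^2$. Hence any initial datum in $\mathbb{H}^1\setminus\mathbb{H}^2$, which the hypotheses allow, is a counterexample. Even for smooth $\Phi_i$, a source merely in $L^2(0,T;L^2)$ does not give $L^\infty(0,T;\mathbb{H}^2)$. Your time-differentiated estimate does close once the data are strengthened, e.g.\ $\Phi_i\in\mathbb{H}^2\cap\mathbb{H}_0^1$ (which the $C^4$ compatibility assumptions of Theorem~\ref{thm1} would supply) and $R_iF_i\in H^1(0,T;L^2(0,1))$. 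Then $\partial_tV_i(0)\in L^2$, $\partial_tV_i\in L^\infty(0,T;L^2)$, and $a_i\partial_{\xi}^2V_i=\partial_tV_i-b_i\partial_\xi V_i-R_iF_i\in L^\infty(0,T;L^2)$. Without such assumptions, the $L^2(0,T;\cdot)$ version you fall back on is the correct result, and it is strictly weaker than the statement.

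Two of your transfer steps also fail as written.

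\emph{Initial traces.} Under the literal hypothesis $\varphi_{i0}\in\mathbb{H}^1_0$, the traces of $\Phi_i$ do \emph{not} vanish by construction. Indeed \eqref{eq7} gives $\psi_1(0)=-g_1(0)$, $\psi_1(s_0)=\psi_2(s_0)=-u^*$ and $\psi_2(d)=-g_2(0)$. The lifting kills the traces only under the compatibility conditions $\varphi_1(0)=g_1(0)$, $\varphi_1(s_0)=\varphi_2(s_0)=u^*$, $\varphi_2(d)=g_2(0)$, and these contradict $\varphi_{i0}\in\mathbb{H}^1_0$ because $g_i>0$. Without $\Phi_i\in\mathbb{H}^1_0(0,1)$ you cannot invoke Theorem~\ref{thm4}, since its Case~2 bound uses $\|V_{i,m}(0)\|_{\mathbb{H}^1_0}$. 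The weak-compactness argument above also shows that the $L^\infty(0,T;\mathbb{H}^1_0)$ bound genuinely fails. The hypothesis must therefore be read as $\varphi_i\in\mathbb{H}^1$ plus those compatibility conditions, i.e.\ $\psi_i\in\mathbb{H}^1_0$.

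\emph{Boundary values of $u_i$.} Adding the liftings back gives $u_1(0,t)=g_1(t)$, $u_1(s(t),t)=u_2(s(t),t)=u^*$ and $u_2(d,t)=g_2(t)$, so $u_i(\cdot,t)\notin\mathbb{H}^1_0$. Your sentence that the liftings preserve the memberships in a) silently drops the $\mathbb{H}^1_0$ part. That part holds for $v_i$, the solution of \eqref{eq4}. The displayed equations with source $R_if_i$, however, hold for $u_i$, while for $v_i$ the source is $R_ih_i$. You must therefore fix which object a) refers to.

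Finally, do not use part c) of Theorem~\ref{thm4}, which is an output, to verify its input $F_i\in L^2(0,T;L^2)$. Only $R_iF_i$ enters \eqref{eq10}, and $R_ih_i-R_if_i$ is free of $R_i$ once computed directly from \eqref{eq2}--\eqref{eq3}, e.g.\ $R_1h_1=R_1f_1-g_1'+\frac{g_1's+s'(u^*-g_1)}{s^2}\,x$. The printed \eqref{eq5}--\eqref{eq6} contain misprints, and with them your claim that the $h_i$ restore the source $R_if_i$ would be false.
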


\section{The reaction coefficient Stefan type problem with two moving boundaries}
In this section, we introduce the new domain $\Omega_3=(s(t),r(t))\times(0,T]$ and consider the inverse two-phase Stefan type problem with Neumann condition and two moving boundaries, determining the functions $u_1(x,t)$, $u_2(x,t)$ and restoring the time-dependent coefficient of the reaction terms $P_1(t)$, $P_2(t)$ in each phase such that
\begin{equation}\label{eq65}
    \begin{cases}
        \frac{\partial u_1}{\partial t}=a_1^2\frac{\partial^2 u_1}{\partial x^2}+P_1(t)u_1(x,t)+f_1(x,t),&(x,t)\in\Omega_1,\\
        \frac{\partial u_2}{\partial t}=a_2^2\frac{\partial^2 u_2}{\partial x^2}+P_2(t)u_2(x,t)+f_2(x,t),&(x,t)\in\Omega_3,\\
        u_1(x,0)=\varphi_1(x),\quad s(0)=s_0,&\forall x\in[0,s(0)],\\
        u_2(x,0)=\varphi_2(x),\quad r(0)=r_0,&\forall x\in[s(0),r(0)],\\
        -k_1\frac{\partial u_1}{\partial x}(0,t)=g_1(t),&\forall t\in[0,T],\\
        u_1(s(t),t)=u_2(s(t),t)=u^*,&\forall t\in[0,T],\\
        -k_1\frac{\partial u_1}{\partial x}(s(t),t)=-k_2\frac{\partial u_2}{\partial x}(s(t),t)+Ls^{\prime}(t),&\forall t\in[0,T],\\
        u_2(r(t),t)=g_2(t),&\forall t\in[0,T],\\
        -k_2\frac{\partial u_2}{\partial x}(r(t),t)=Lr^{\prime}(t),&\forall t\in[0,T],
    \end{cases}
\end{equation}
where $f_i(x,t)$, $g_i(t)$ and $\varphi_i(x)$ with $i=1,2$ are given functions, $a_i^2$, $k_i$ with $i=1,2$, $u^*$ and $L$ are known fixed constants, and $s_0, r_0$ are positive given real numbers.

One advantage of problem \eqref{eq65}, compared with problem \eqref{eq1} in the previous section, is that no additional overdetermination condition is required to determine the time-dependent coefficients. In the present formulation, the two Stefan conditions, namely the seventh and ninth conditions in \eqref{eq65}, provide the necessary relations for estimating both coefficients.

Now we introduce the following substitutions which help us to reduce the problem \eqref{eq65} to the similar form of the problem \eqref{eq1}, which are
\begin{equation}\label{eq66}
    v_1(x,t)=R_1(t)u_1(x,t),\quad\quad R_1(t)=\exp\left(-\int_0^t P_1(\tau)d\tau\right),
\end{equation}
\begin{equation}\label{eq67}
    v_2(x,t)=R_2(t)u_2(x,t),\quad\quad R_2(t)=\exp\left(-\int_0^t P_2(\tau)d\tau\right).
\end{equation}
Then problem \eqref{eq65} takes the form
\begin{equation}\label{eq68}
    \begin{cases}
        \frac{\partial v_1}{\partial t}=a_1^2\frac{\partial^2 v_1}{\partial x^2}+R_1(t)f_1(x,t),& (x,t)\in\Omega_1,\\
        \frac{\partial v_2}{\partial t}=a_2^2\frac{\partial^2 v_2}{\partial x^2}+R_2(t)f_2(x,t),& (x,t)\in\Omega_3,\\
        v_1(x,0)=\varphi_1(x),\quad s(0)=s_0,& \forall x\in[0,s(0)],\\
        v_2(x,0)=\varphi_2(x),\quad r(0)=r_0,&\forall x\in[s(0),r(0)],\\
        -k_1\frac{\partial v_1}{\partial x}(0,t)=R_1(t)g_1(t),&\forall t\in[0,T],\\
        v_1(s(t),t)=R_1(t)u^*,&\forall t\in[0,T],\\
        v_2(s(t),t)=R_2(t)u^*,&\forall t\in[0,T],\\
        -\frac{k_1}{R_1(t)}\frac{\partial v_1}{\partial x}(s(t),t)=-\frac{k_2}{R_2(t)}\frac{\partial v_2}{\partial x}(s(t),t)+Ls^{\prime}(t),&\forall t\in[0,T],\\
        v_2(r(t),t)=R_2(t)g_2(t),&\forall t\in[0,T],\\
        -k_2\frac{\partial v_2}{\partial x}(r(t),t)=LR_2(t)r^{\prime}(t),&\forall t\in[0,T],
    \end{cases}
\end{equation}
where the pairs $(u_1, R_1)$ and $(u_2, R_2)$ have to be determined, then solution pair $(u_1,P_1)$ and $(u_2,P_2)$ to the considered original problem \eqref{eq65} can be estimated in the following form
\begin{equation}\label{eq69}
    u_i(x,t)=\frac{v_i(x,t)}{R_i(t)},\quad \quad P_i(t)=-\frac{R_i^{\prime}(t)}{R_i(t)},\quad\quad i=1,2.
\end{equation}
Next, we apply the following change of variable for the purpose of the spectral approach to solve the original problem \eqref{eq65} such that
\begin{equation}\label{eq70}
    y_1(x,t)=v_1(x,t)-R_1(t)u^*+\frac{1}{k_1}R_1(t)g_1(t)(x-s(t)),
\end{equation}
\begin{equation}\label{eq71}
    y_2(x,t)=v_2(x,t)-\frac{R_2(t)u^*}{s(t)-r(t)}(x-r(t))-\frac{R_2(t)g_2(t)}{r(t)-s(t)}(x-s(t)).
\end{equation}
Hence, we present the problem \eqref{eq65} as the following form
\begin{equation}\label{eq72}
    \begin{cases}
        \frac{\partial y_1}{\partial t}=a_1^2\frac{\partial^2 y_1}{\partial x^2}+R_1(t)h_1(x,t),&(x,t)\in\Omega_1,\\
        \frac{\partial y_2}{\partial t}=a_2^2\frac{\partial^2 y_2}{\partial x^2}+R_2(t)h_2(x,t),&(x,t)\in\Omega_3,\\
        y_1(x,0)=\psi_1(x),\quad s(0)=s_0,&\forall x\in[0,s(0)],\\
        y_2(x,0)=\psi_2(x),\quad r(0)=r_0,&\forall x\in[s(0),r(0)],\\
        -k_1\frac{\partial y_1}{\partial x}(0,t)=0,&\forall t\in[0,T],\\
        y_1(s(t),t)=y_2(s(t),t)=y_2(r(t),t)=0,&\forall t\in[0,T],\\
        -\frac{k_1}{R_1(t)}\frac{\partial y_1}{\partial x}(s(t),t)=-\frac{k_2}{R_2(t)}\frac{\partial y_2}{\partial x}(s(t),t)+\frac{k_2(g_2(t)-u^*)}{s(t)-r(t)}-g_1(t)+Ls^{\prime}(t),&\forall t\in[0,T],\\
        -k_2\frac{\partial y_2}{\partial x}(r(t),t)=R_2(t)\left(\frac{k_2(u^*-g_2(t))}{s(t)-r(t)}+Lr^{\prime}(t)\right),&\forall t\in[0,T],
    \end{cases}
\end{equation}
where
\begin{equation}\label{eq73}
    h_1(x,t)=f_1(x,t)+\frac{1}{k_1}\left(\frac{R_1^{\prime}(t)}{R_1(t)}g_1(t)+g_1^{\prime}(t)\right)(x-s(t))-\frac{R_1^{\prime}(t)u^*}{R_1(t)}-\frac{1}{k_1}g_1(t)s^{\prime}(t),
\end{equation}
\begin{equation}\label{eq74}
    \begin{split}
        h_2(x,t)&=f_2(x,t)-\frac{R_2^{\prime}(t)u^*(s(t)-r(t))-R_2(t)u^*(s^{\prime}(t)-r^{\prime}(t))}{R_2(t)(s(t)-r(t))^2}(x-r(t))\\
        &-\frac{(R_2^{\prime}(t)g_2(t)+R_2(t)g_2^{\prime}(t))(r(t-s(t)))-R_2(t)g_2(t)(r^{\prime}(t)-s^{\prime}(t))}{R_2(t)(r(t)-s(t))^2}(x-s(t))\\
        &+\frac{u^*r^{\prime}(t)}{s(t)-r(t)}+\frac{g_2(t)s^{\prime}(t)}{r(t)-s(t)},
    \end{split}
\end{equation}
\begin{equation}\label{eq75}
    \psi_1(x)=\varphi_1(x)-u^*+\frac{1}{k_1}g_1(0)(x-s_0),\quad\quad g_1(0)>0,
\end{equation}
\begin{equation}\label{eq76}
    \psi_2(x)=\varphi_2(x)-\frac{u^*(x-r_0)}{s_0-r_0}-\frac{g_2(0)(x-s_0)}{r_0-s_0},\quad\quad g_2(0)>0.
\end{equation}
The next step for spectral approach we need to study the problem in the fixed domain $\Omega=(0,1)\times(0,T]$, then in the system of mapping \eqref{eq8} and \eqref{eq9}, only \eqref{eq9} can be rewritten as follows
\begin{equation}\label{eq77}
    V_2:\;\Omega_2\rightarrow\Omega,\quad\quad (x,t)\mapsto (\xi_2,t)=\left(\frac{x-s(t)}{r(t)-s(t)},t\right),
\end{equation}
where function $V_2\in C^{2,1}(\Omega)$ and $V_2^{-1}\in C^{2,1}(\Omega_2)$. Therefore, the problem \eqref{eq72} can be reduced to the form
\begin{equation}\label{eq78}
    \begin{cases}
        \frac{\partial V_1}{\partial t}-b(\xi_1,t)\frac{\partial V_1}{\partial \xi_1}-a_1(t)\frac{\partial^2 V_1}{\partial \xi_1^2}=R_1(t)F_1(\xi_1,t), &(\xi_1,t)\in\Omega,\\
        \frac{\partial V_2}{\partial t}-b(\xi_2,t)\frac{\partial V_2}{\partial \xi_3}-a_2(t)\frac{\partial^2 V_2}{\partial \xi_3^2}=R_2(t)F_2(\xi_2,t), &(\xi_2,t)\in\Omega,\\
        V_1(\xi_1,0)=\Psi_1(\xi_1),&\forall \xi_1\in[0,1],\\
        V_2(\xi_2,0)=\Phi_2(\xi_2),&\forall \xi_2\in[0,1],\\
        \frac{\partial V_1}{\partial \xi_1}(0,t)=0, & \forall t\in[0,T],\\
        V_1(1,t)=V_2(0,t)=V_2(1,t)=0,&\forall t\in[0,T],\\
        -\frac{k_1}{R_1(t)s(t)}\frac{\partial V_1}{\partial\xi_1}(1,t)=-\frac{k_2}{R_2(t)(r(t)-s(t))}\frac{\partial V_2}{\partial \xi_2}(0,t)+\frac{k_2(g_2(t)-u^*)}{s(t)-r(t)}-g_1(t)+Ls^{\prime}(t),&\forall t\in[0,T],\\
        -k_2\frac{\partial V_2}{\partial\xi_2}(1,t)=R_2(t)(r(t)-s(t))\left(\frac{k_2(u^*-g_2(t))}{s(t)-r(t)}+Lr^{\prime}(t)\right),&\forall t\in[0,T],
    \end{cases}
\end{equation}
where $a_1(t)=\frac{a_1^2}{s^2(t)}$, $a_2(t)=\frac{a_2^2}{(r(t)-s(t))^2}$, $b(\xi_i,t)=\frac{\xi_1 s^{\prime}(t)}{s(t)}$, $b(\xi_2,t)=\frac{s^{\prime}(t)(r(t)-s(t))+(x-s(t))(r^{\prime}(t)-s^{\prime}(t))}{(r(t)-s(t))^2}$, $F_1(\xi_1,t)=h_1(\xi_1s(t),t)$, $F_2(\xi_2,t)=h_2(\xi_2(r(t)-s(t))+s(t),t)$, $\Psi_1(\xi_1)=\psi_1(\xi_1s(t))$ and $\Phi_2(\xi_2)=\psi_2(\xi_2(r(t)-s(t))+s(t))$ where $h_i$, $\psi_i$ with $i=1,2$ are defined by \eqref{eq73},\eqref{eq74},\eqref{eq75} and \eqref{eq76}.

Next, we consider the spectral problem for the first phase $V_1(\xi_!,t)$:
\begin{equation}\label{eq79}
    \begin{cases}
        -w_{1,n}^{\prime\prime}(\xi_1)=\widetilde{\lambda}_n w_{1,n}(\xi_1),&0<\xi_1<1,\\
        w_{1,n}^{\prime}(0)=w_{1,n}(1)=0,
    \end{cases}
\end{equation}
which gives the following eigenvalue and orthonormal eigenfunction 
\begin{equation}\label{eq80}
    w_{1,n}(\xi_1)=\sqrt{2}\cos\left(\sqrt{\widetilde{\lambda}_n}\xi_1\right),\quad i=1,2,\quad\quad \widetilde{\lambda}_n=\frac{(2n-1)^2\pi^2}{4},\quad\quad n=1,2,3,...\;.
\end{equation}
For the second phase $V_2(\xi_2,t)$, we consider the same spectral problem \eqref{eq11} as in previous section which gives the orthonormal eigenfunction $\phi_{2,n}(\xi_2)$ and eigenvalue $\lambda_n$ defined by \eqref{eq12}.

Then assuming that $\{w_{1,n}(\xi_1)\}_{n=1}^{\infty}$ and $\{\phi_{2,n}(\xi_2)\}_{n=1}^{\infty}$ be orthonormal sets on $L^2(0,1)$, then analogously as in previous Section 1, according to \eqref{eq19} we set that weak solution of the problem \eqref{eq78} can be presented as
\begin{equation}\label{eq81}
    V_{1,n}(\xi_1)=\sum\limits_{n=1}^{\infty}\Bigg(\Psi_{1,n}e^{-\int_0^t(a_1(\tau)\widetilde{\lambda}_n-b_1(\tau))d\tau}+\int\limits_0^t R_1(\tau)F_{1,n}(\tau)e^{-\int_{\tau}^t(a_1(z)\widetilde{\lambda}_n-b_1(z))dz}d\tau\Bigg)w_{1,n}(\xi_1),
\end{equation}
\begin{equation}\label{eq82}
    V_{2,n}(\xi_2)=\sum\limits_{n=1}^{\infty}\Bigg(\Phi_{2,n}e^{-\int_0^t(a_2(\tau)\lambda_n-b_2(\tau))d\tau}+\int\limits_0^t R_2(\tau)F_{2,n}(\tau)e^{-\int_{\tau}^t(a_2(z)\lambda_n-b_2(z))dz}d\tau\Bigg)\phi_{2,n}(\xi_2)
\end{equation}
where $b_1(t)=(b(\xi_1,t)w_{1,n}^{\prime},w_{1,n})_{L^2(0,1)}$, $b_2(t)=(b(\xi_2,t)\phi_{2,n}^{\prime},\phi_{2,n})_{L^2(0,1)}$, $\Psi_{1,n}=(\Psi_1(\xi_1),w_{1,n}(\xi_1))_{L^2(0,1)}$, $\Phi_{2,n}=(\Phi_2(\xi_2),\phi_{2,n}(\xi_2))_{L^2(0,1)}$, $F_{1,n}(t)=(F_{1}(\xi_1,t),w_{1,n}(\xi_1))$ and $F_{2,n}(t)=(F_{2}(\xi_2,t),\phi_{2,n}(\xi_2))$, where $w_{1,n}$ is defined by \eqref{eq80} and $\phi_{2,n}$ is given by \eqref{eq12}. 

From the system of the last two conditions in the system \eqref{eq78} one can determine the time-dependent coefficients $R_1(t)$ and $R_2(t)$ dependent to each others, such that
\begin{equation}\label{eq84}
    \begin{split}
        R_1(t)&=\frac{\sqrt{2}k_1}{s(t)}\sum_{n=1}^{\infty}\Bigg((-1)^n\sqrt{\widetilde{\lambda}_n}\Psi_{1,n}e^{-\int_0^t(a_1(\tau)\widetilde{\lambda}_n-b_1(\tau))d\tau}\\
        &+\int_0^t R_1(\tau)(-1)^n\sqrt{\widetilde{\lambda}_n}F_{1,n}(\tau)e^{-\int_{\tau}^t(a_1(z)\widetilde{\lambda}_n-b_1(z))dz}d\tau\Bigg)\\
        &\times\Bigg[-\frac{\sqrt{2}k_2}{R_2(t)(r(t)-s(t))}\sum_{n=1}^{\infty}\Bigg(\sqrt{\lambda}_n\Phi_{2,n}e^{-\int_0^t(a_2(\tau)\lambda_n-b_2(\tau))d\tau}\\
        &+\int_0^t R_2(\tau)\sqrt{\lambda_n}F_{2,n}(\tau)e^{-\int_{\tau}^t(a_2(z)\lambda_n-b_2(z))dz}d\tau\Bigg)+\frac{k_2(g_2(t)-u^*)}{s(t)-r(t)}-g_1(t)+Ls^{\prime}(t)\Bigg]^{-1}
    \end{split}
\end{equation}
where
\begin{equation}\label{eq85}
    \begin{split}
        R_2(t)&=-\frac{\sqrt{2}k_2}{Lr^{\prime}(t)(r(t)-s(t))+k_2(g_2(t)-u^*)}\sum\limits_{n=1}^{\infty}\Bigg((-1)^n\sqrt{\lambda_n}\Phi_{2,n}e^{-\int_0^t(a_2(\tau)\lambda_n-b_2(\tau))d\tau}\\
        &+\int_0^t R_2(\tau)(-1)^n\sqrt{\lambda_n}F_{2,n}(\tau)e^{-\int_{\tau}^t(a_2(z)\lambda_n-b_2(z))dz}d\tau\Bigg).
    \end{split}
\end{equation}
The proof of the existence and uniqueness of the solutions \eqref{eq81}, \eqref{eq82}, \eqref{eq84} and \eqref{eq85} of the problem \eqref{eq78} are similar as in previous Section 1 which can be established by the following results.
\begin{theorem}\label{thm6}
    Assume the inequalities in Lemma \ref{lem1}, Lemma \ref{lem2} and (i)-(iv) in Theorem \ref{thm1} are hold. Moreover, if the following conditions
    \begin{itemize}
        \item[(H1)] $r(t)\in C^2[0,T]$ such that $r(0)=r_0$, $r(t)>s(t)$ for $\forall t\in [0,T]$ and $0<r_m<r(t)<r_M$, $0<r_{m'}<r'(t)<r_{M'}$ for $r_M,r_{M'}$ positive constants; 
        \item[(H2)] $r^{\prime}(t)\in L^{\infty}(0,T)$
    \end{itemize}
    are satisfied, then there exists a unique weak solution pairs $(V_i, R_i)$ where $i=1,2$ to the problem \eqref{eq78}.
\end{theorem}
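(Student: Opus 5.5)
The plan follows the scheme of Theorem~\ref{thm1}: a fixed-point argument for a coupled system of Volterra-type equations for $(R_1,R_2)$, followed by recovery of $(V_1,V_2)$ from the spectral representations \eqref{eq81}--\eqref{eq82}. By (H1), $r(t)-s(t)$ is bounded below by a positive constant on $[0,T]$, and together with (iv) this gives $0<a_{im}<a_i(t)<a_{iM}$. By (H2) and $s'\in L^\infty$, the transport coefficients $b(\xi_i,t)$ are bounded, so $|b_i(t)|\le b_{iM}$. Hence all estimates of Lemma~\ref{lem1} and Lemma~\ref{lem2} remain valid.

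For the cosine basis $w_{1,n}$ of \eqref{eq80} we need one adaptation. Since $\widetilde{\lambda}_n\sim n^2$, the same integrations by parts as in Lemma~\ref{lem1}, using the compatibility conditions at $\xi_1=1$ and the Neumann condition at $\xi_1=0$, give $\sum\sqrt{\widetilde{\lambda}_n}|\Psi_{1,n}|<\infty$ and $\sum\sqrt{\widetilde{\lambda}_n}|F_{1,n}(\tau)|\le c\|F_1\|$. Consequently the series in \eqref{eq81}, \eqref{eq82}, \eqref{eq84} and \eqref{eq85} converge absolutely and uniformly. The argument of Lemma~\ref{lem3} then gives the bounds on $V_i$ in $L^2$, $\mathbb{H}_0^1$ and on $V_{i,t}$ for any fixed continuous $R_i$.

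Next we decouple. Equation \eqref{eq85} involves $R_2$ only. By (H1) the denominator $Lr'(t)(r(t)-s(t))+k_2(g_2(t)-u^*)$ needs to be kept away from zero, and we add this as a standing nondegeneracy assumption. Then \eqref{eq85} is a linear Volterra equation of the second kind,
\[
R_2(t)=G_2(t)+\int_0^t K_2(t,\tau)R_2(\tau)\,d\tau ,
\]
with $G_2$ and $K_2$ continuous and bounded by Lemma~\ref{lem2}. The Gronwall argument of Lemma~\ref{lem3}, or equivalently a Picard iteration with the factorial bound $M^k t^k/k!$, gives a unique $R_2\in C[0,T]$.

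To show $R_2>0$, so that $P_2=-R_2'/R_2$ in \eqref{eq69} makes sense, I would use continuity from $R_2(0)=1$ on a short interval. Extending this to all of $[0,T]$ would need sign conditions analogous to (i)--(ii). Once $R_2$ is fixed, I would clear the denominator in \eqref{eq84} to obtain
\[
R_1(t)D(t)=\frac{\sqrt2 k_1}{s(t)}\sum_{n}\Big(\cdots+\int_0^t R_1(\tau)(\cdots)\,d\tau\Big).
\]
Here $D(t)$ is the bracket in \eqref{eq84}, which is now a known continuous function. Provided $|D(t)|\ge D_0>0$, this is again a linear Volterra equation in $R_1$, and the same argument yields a unique $R_1\in C[0,T]$.

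Uniqueness follows as in the uniqueness part of Theorem~\ref{thm4}. For the difference $\rho_2$ of two solutions, the homogeneous Volterra equation and Gronwall give $\rho_2\equiv0$. Then $\rho_1\equiv0$ in the same way, and the energy estimate for $W_i$ gives $V_i^{(1)}=V_i^{(2)}$.

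The main obstacle is that $F_i$ itself depends on $R_i'/R_i$ through \eqref{eq73}--\eqref{eq74}, so the equations are not truly linear in $R_i$. I would handle this by working with $P_i=-R_i'/R_i$ as the unknown. I would then set up a contraction on a ball in $C[0,T_0]$ for small $T_0$ and extend step by step using the a priori bound from Gronwall. This is also where the nondegeneracy of the two denominators must be genuinely imposed rather than derived.
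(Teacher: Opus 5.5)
The paper's proof of Theorem~\ref{thm6} is only the pointer ``analogously as in \cite{20}'', i.e.\ the Volterra-plus-Gronwall scheme of Theorem~\ref{thm1} and Lemma~\ref{lem3}, so your outline follows the intended route. It has a genuine gap, however, exactly where you place the main obstacle. The trouble there is not nonlinearity: by \eqref{eq73}--\eqref{eq74}, $R_iF_i$ is linear in $(R_i,R_i')$. The trouble is a loss of derivative.

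Your $G_2,K_2$ in \eqref{eq85} are not determined by the data. Moreover, the compatibility conditions you take from Lemma~\ref{lem2} (and their analogues for $w_{1,n}$) to make $\sum\sqrt{\lambda_n}|F_{2,n}|$ converge are equations for the unknowns:
\[
F_2(1,t)=f_2(r(t),t)+g_2(t)P_2(t)-g_2'(t)+\frac{(g_2(t)-u^*)r'(t)}{r(t)-s(t)},\qquad F_1(1,t)=f_1(s(t),t)+u^*P_1(t)-\frac{g_1(t)s'(t)}{k_1}.
\]
Imposing $F_2(1,t)=0$ fixes $P_2$ pointwise from the data (recall $g_2>0$) and turns \eqref{eq85} into an extra constraint, so generically nothing solves the problem.

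Without it, $F_{2,n}(t)=\frac{\sqrt2}{n\pi}\bigl(F_2(0,t)-(-1)^nF_2(1,t)\bigr)+O(n^{-2})$. The kernel $\sum_n(-1)^n\sqrt{\lambda_n}F_{2,n}(\tau)e^{-\int_\tau^t(a_2\lambda_n-b_2)dz}$ then blows up like $(t-\tau)^{-1/2}$ wherever $F_2(1,\tau)\neq0$. The $R_2'$-part of the source enters \eqref{eq85} as an Abel integral $\int_0^t k(t,\tau)g_2(\tau)R_2'(\tau)(t-\tau)^{-1/2}\,d\tau$, with $k$ continuous and nonzero on the diagonal. So \eqref{eq85} is not a second-kind Volterra equation with bounded kernel.

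For the same reason, a contraction in $C[0,T_0]$ with $P_i$ as unknown does not close as stated. $P_i$ enters \eqref{eq84}--\eqref{eq85} only under time integrals, so no self-map of $C[0,T_0]$ can be read off. Differentiating in $t$ produces $\sum_n(-1)^n\sqrt{\lambda_n}F_{2,n}(t)$, which diverges unless $F_2(0,t)=F_2(1,t)=0$. The missing step is to isolate the Abel term and invert it. This yields a second-kind equation with weakly singular kernel for $R_2'$, whose solvability rests on $g_2\neq0$ and on compatibility of the data at $t=0$. The bounded-kernel Gronwall argument of Lemma~\ref{lem3} cannot replace it.

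Independently, the anchor $R_2(0)=1$ in your positivity step contradicts the hypotheses you adopt. At $t=0$, the last condition of \eqref{eq78} together with $\Phi_2'(1)=0$ from (i) of Theorem~\ref{thm1} gives
\[
0=-k_2\Phi_2'(1)=R_2(0)\bigl[k_2(g_2(0)-u^*)+Lr'(0)(r_0-s_0)\bigr],
\]
and the bracket is exactly the denominator of \eqref{eq85} at $t=0$. So your nondegeneracy assumption forces $R_2(0)=0$. There $F_2$, which contains $R_2'/R_2$, is undefined, and $R_2=\exp(-\int_0^tP_2)$ is impossible. Conversely, $R_2(0)=1$ forces that denominator to vanish.

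The right replacement for $\Phi_2'(1)=0$ is the Stefan compatibility $-k_2\varphi_2'(r_0)=Lr'(0)$, which is precisely what makes \eqref{eq85} return $R_2(0)=1$. You are right that nondegeneracy of $Lr'(r-s)+k_2(g_2-u^*)$ and of the bracket in \eqref{eq84} does not follow from (i)--(iv), (H1), (H2) and must be added. That addition alone does not rescue the argument, though: it also needs corrected compatibility hypotheses and the Abel-type treatment above.
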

\begin{proof}
    It can be implemented analogously as in study \cite{20}.
\end{proof}

\begin{definition}
A pair of functions $\bigl((u_1,P_1),(u_2,P_2)\bigr)$ is called a weak solution of problem \eqref{eq66} if there exist functions $R_i\in C^1([0,T]),\; R_i(t)>0,\; R_i(0)=1,\; i=1,2,$ such that $P_i(t)=-\frac{R_i'(t)}{R_i(t)},\; i=1,2$ and the transformed functions $ v_i(x,t)=R_i(t)u_i(x,t),\;i=1,2,$ after the transformations
\[
y_1(x,t)=v_1(x,t)-R_1(t)u^*+\frac{1}{k_1}R_1(t)g_1(t)(x-s(t)),
\]
\[
y_2(x,t)=v_2(x,t)-\frac{R_2(t)u^*}{s(t)-r(t)}(x-r(t))
-\frac{R_2(t)g_2(t)}{r(t)-s(t)}(x-s(t)),
\]
and
\[
V_1(\xi_1,t)=y_1(s(t)\xi_1,t),\quad V_2(\xi_2,t)=y_2\bigl(s(t)+(r(t)-s(t))\xi_2,t\bigr),
\]
satisfy
\[
V_i\in C([0,T];L^2(0,1))
\cap C((0,T];\mathbb{H}_0^1 (0,1))
\cap C^1((0,T];L^2(0,1)),
\qquad i=1,2.
\]
Finally, the original temperature functions $$u_1\in C([0,T],L^2(0,s(t)))\cap C((0,T];\mathbb{H}_0^2(0,s(t)))\cap C^1((0,T];L^2(0,s(t))),$$ 
$$u_2\in C([0,T],L^2(s(t),r(t)))\cap C((0,T];\mathbb{H}_0^2(s(t),r(t)))\cap C^1((0,T];L^2(s(t),r(t)))$$ are recovered by
\[
u_1(x,t)
=
\frac{1}{R_1(t)}
V_1\left(\frac{x}{s(t)},t\right)
+
u^*
-
\frac{g_1(t)}{k_1}(x-s(t)),
\qquad 0<x<s(t),
\]
and
\[
u_2(x,t)
=
\frac{1}{R_2(t)}
V_2\left(\frac{x-s(t)}{r(t)-s(t)},t\right)
+
\frac{u^*(x-r(t))}{s(t)-r(t)}
+
\frac{g_2(t)(x-s(t))}{r(t)-s(t)},
\qquad s(t)<x<r(t).
\]
\end{definition}

\textbf{Example 2.} Let us consider the inverse Stefan type two-phase problem with suitable source functions and initial functions, and the domain with two moving boundaries  $\Omega_1=(0,\pi\sqrt{t+1})\times(0,T]$ and $\Omega_2=(\pi\sqrt{t+1},2\pi\sqrt{t+1})\times(0,T]$ in which the pairs $(u_1(x,t),P_1(t))$ and $(u_2(x,t),P_2(t))$ have to be determined, such that
\begin{equation}\label{eq89}
    \begin{cases}
        \frac{\partial u_1}{\partial t}=a_1^2\frac{\partial^2 u_1}{\partial x^2}+P_1(t)u_1(x,t)+\sqrt{2}(t+1)^{-\frac{a_1^2+1}{4}}\cos\left(\frac{x}{2\sqrt{t+1}}\right),&(x,t)\in\Omega_1,\\
        \frac{\partial u_2}{\partial t}=a_2^2\frac{\partial^2 u_2}{\partial x^2}+P_2(t)u_2(x,t)+\sqrt{2}(t+1)^{-(a_2^2+1/4)}\sin\left(\frac{x-\pi\sqrt{t+1}}{\sqrt{t+1}}\right),&(x,t)\in\Omega_2,\\
        u_1(x,0)=\sqrt{2}\cos\left(\frac{x}{2}\right),&\forall x\in[0,\pi],\\
        u_2(x,0)=\sqrt{2}\sin\left(x-\pi\right),&\forall x\in[\pi,2\pi],\\
        -k_1\frac{\partial u_1}{\partial x}(0,t)=0,&\forall t\in[0,T],\\
        u_1(\pi\sqrt{t+1},t)=u_2(\pi\sqrt{t+1},t)=0,&\forall t\in[0,T],\\
        -k_1\frac{\partial u_1}{\partial x}(\pi\sqrt{t+1},t)=-k_2\frac{\partial u_2}{\partial x}(\pi\sqrt{t+1},t)+\frac{L\pi}{2\sqrt{t+1}},&\forall t\in[0,T],\\
        u_2(2\pi\sqrt{t+1},t)=0,&\forall t\in[0,T],\\
        -k_2\frac{\partial u_2}{\partial x}(2\pi\sqrt{t+1},t)=\frac{L\pi}{\sqrt{t+1}},&\forall t\in[0,T].
    \end{cases}
\end{equation}
Applying the substitutions \eqref{eq66} and \eqref{eq67}, the problem \eqref{eq89} can be reduced to the inverse problem where the pairs $(v_1(x,t),R_1(t))$ and $(v_2(x,t),R_2(t))$ have to be restored which can be presented in the following form
\begin{equation}\label{eq90}
    \begin{cases}
        \frac{\partial v_1}{\partial t}=a_1^2\frac{\partial^2 v_1}{\partial x^2}+R_1(t)\sqrt{2}(t+1)^{-\frac{a_1^2+1}{4}}\cos\left(\frac{x}{2\sqrt{t+1}}\right),&(x,t)\in\Omega_1,\\
        \frac{\partial v_2}{\partial t}=a_2^2\frac{\partial^2 v_2}{\partial x^2}+R_2(t)\sqrt{2}(t+1)^{-(a_2^2+1/4)}\sin\left(\frac{x-\pi\sqrt{t+1}}{\sqrt{t+1}}\right),&(x,t)\in\Omega_2,\\
        v_1(x,0)=\sqrt{2}\cos\left(\frac{x}{2}\right),&\forall x\in[0,\pi],\\
        v_2(x,0)=\sqrt{2}\sin\left(x-\pi\right),&\forall x\in[\pi,2\pi],\\
        -k_1\frac{\partial v_1}{\partial x}(0,t)=0,&\forall t\in[0,T],\\
        v_1(\pi\sqrt{t+1},t)=v_2(\pi\sqrt{t+1},t)=0,&\forall t\in[0,T],\\
        -\frac{k_1}{R_1(t)}\frac{\partial v_1}{\partial x}(\pi\sqrt{t+1},t)=-\frac{k_2}{R_2(t)}\frac{\partial u_2}{\partial x}(\pi\sqrt{t+1},t)+\frac{L\pi}{2\sqrt{t+1}},&\forall t\in[0,T],\\
        v_2(2\pi\sqrt{t+1},t)=0,&\forall t\in[0,T],\\
        -k_2\frac{\partial v_2}{\partial x}(2\pi\sqrt{t+1},t)=R_2(t)\frac{L\pi}{\sqrt{t+1}}.&\forall t\in[0,T].
    \end{cases}
\end{equation}
Next, introducing the fixed domain $\Omega=(0,1)\times(0,T]$ and employing the change of variables
\begin{equation}\label{eq91}
    v_1(x,t)=V_1(\xi_1,t),\quad\quad \xi_1=\frac{x}{\pi(t+1)},\quad\quad (\xi_1,t)\in\Omega,
\end{equation}
\begin{equation}\label{eq92}
    v_2(x,t)=V_2(\xi_2,t),\quad\quad \xi_2=\frac{x-\pi(t+1)}{\pi(t+1)},\quad\quad (\xi_2,t)\in\Omega,
\end{equation}
then problem \eqref{eq90} takes the form
\begin{equation}\label{eq90-a}
    \begin{cases}
        \frac{\partial V_1}{\partial t}-\frac{\xi_1}{2(t+1)}\frac{\partial V_1}{\partial \xi_1}-\frac{a_1^2}{\pi^2(t+1)}\frac{\partial^2V_1}{\partial\xi_1^2}=R_1(t)\sqrt{2}(t+1)^{-\frac{a_1^2+1}{4}}\cos\left(\frac{\pi}{2}\xi_1\right),&(\xi_1,t)\in\Omega,\\
        \frac{\partial V_2}{\partial t}-\frac{1+\xi_2}{t+1}\frac{\partial V_2}{\partial \xi_2}-\frac{a_2^2}{\pi^2(t+1)^2}\frac{\partial^2V_2}{\partial\xi_2^2}=R_2(t)\sqrt{2}(t+1)^{-(a_2^2+1/4)}\sin\left(\pi\xi_2\right),&(\xi_2,t)\in\Omega,\\
        V_1(\xi_1,0)=\sqrt{2}\cos\left(\frac{\pi}{2}\xi_1\right),&\forall\xi_1\in[0,1],\\
        V_2(\xi_2,0)=\sqrt{2}\sin\left(\pi\xi_2\right),&\forall\xi_2\in[0,1],\\
        -k_1\frac{\partial V_1}{\partial\xi_1}(0,t)=0,&\forall t\in[0,T],\\
        V_1(1,t)=V_2(0,t)=V_2(1,t)=0,&\forall t\in[0,T],\\
        -\frac{k_1}{R_1(t)}\frac{\partial V_1}{\partial\xi_1}(1,t)=-\frac{k_2}{R_2(t)}\frac{\partial V_2}{\partial\xi_2}(0,t)+\frac{L\pi^2}{2},&\forall t\in[0,T],\\
        -k_2\frac{\partial V_2}{\partial\xi_2}(1,t)=R_2(t)L\pi^2,&\forall t\in[0,T].
    \end{cases}
\end{equation}
According to the spectral problems \eqref{eq11} and  \eqref{eq79} the orthonormal eigenfunctions are $w_{1,n}(\xi_1)=\sqrt{2}\cos\left(\frac{(2n-1)\pi}{2}\xi_1\right)$ and $\phi_{2,n}(\xi_2)=\sqrt{2}\sin\left(n\pi\xi_2\right)$ where $n\in\mathbb{N}$. Therefore, we get $(\Psi_{1}(\xi_1),w_{1,n}(\xi_1))_{L^2(0,1)}=0$, $(\Phi_{2}(\xi_2),\phi_{2,n}(\xi))_{L^2(0,1)}=0$ when $n\neq 0$. Thus, $\Psi_{1,1}=1$ and $\Psi_{2,1}=1$ when $n=1$. Analogously, $F_{1,n}(t)=F_{2,n}(t)=0$ when $n\neq 1$, otherwise $F_{1,1}(t)=(t+1)^{-\frac{a_1^2+1}{4}}$ and $F_{2,1}(t)=(t+1)^{-(a_2^2+1/4)}$. Hence, the solution functions $V_1(\xi_1,t)$ and $V_2(\xi_2,t)$ takes the form
\begin{equation}\label{eq91}
    V_{1}(\xi_1,t)=V_{1,1}(t)w_{1,1}(\xi_1),\quad\quad V_2(\xi_2,t)=V_{2,1}(t)\phi_{2,1}(\xi_2).
\end{equation}
Taking into account $\left(\frac{\xi_1}{2(t+1)w_{1,1}^{\prime},w_{1,1}}\right)_{L^2(0,1)}=\left(\frac{1+\xi_2}{2(t+1)}\phi_{1,1}^{\prime},\phi_{1,1}\right)_{L^2(0,1)}=-\frac{1}{4(t+1)}$, then $V_{1,1}(t)$ and $V_{2,1}(t)$ are the solution of the problems governing with linear differential equations such that
\begin{equation}\label{eq92}
    \begin{cases}
        V_{1,1}^{\prime}(t)+\frac{a_1^2+1}{4(t+1)}V_{1,1}(t)=R_1(t)(t+1)^{-\frac{a_1^2+1}{4}},\\
        V_{1,1}(0)=1,
    \end{cases}
\end{equation}
\begin{equation}\label{eq93}
    \begin{cases}
        V_{2,1}^{\prime}(t)+\frac{a_2^2+1/4}{t+1}V_{2,1}(t)=R_2(t)(t+1)^{-(a_2^2+1/4)},\\
        V_{2,1}(0)=1.
    \end{cases}
\end{equation}
Thus, from \eqref{eq91} we can obtain the exact solution of the problem \eqref{eq90-a} that are
\begin{equation}\label{eq94}
    V_{1}(\xi_1,t)=\sqrt{2}(t+1)^{-\frac{a_1^2+1}{4}}\left(1+\int_0^t R_1(\tau)d\tau\right)\cos\left(\frac{\pi}{2}\xi_1\right)
\end{equation}
\begin{equation}\label{eq95}
    V_2(\xi_2,t)=\sqrt{2}(t+1)^{-\left(a_2^2+\frac{1}{4}\right)}\left(1+\int_0^t R_2(\tau)d\tau\right)\sin(\pi\xi_2).
\end{equation}
Then, from the system of two conditions in \eqref{eq90-a}, it can be estimated that the time-dependent coefficients which independent to each others and defined by Volterra integral equations, such that
\begin{equation}\label{eq96}
    R_1(t)=\frac{\sqrt{2}k_1}{L\pi}(t+1)^{-\frac{a_1^2+1}{4}}\left(1+\int_0^t R_1(\tau)d\tau\right),
\end{equation}
\begin{equation}\label{eq97}
    R_2(t)=\frac{\sqrt{2}k_2}{L\pi}(t+1)^{-(a_2^2+1/4)}\left(1+\int_0^t R_2(\tau)d\tau\right).
\end{equation}
Hence, the explicit solutions of the integral equations \eqref{eq96} and \eqref{eq97} are
\begin{equation}\label{eq98}
    R_1(t)=\frac{\sqrt{2}k_1}{L\pi}(t+1)^{-\frac{a_1^2+1}{4}}\exp\left(\frac{4\sqrt{2}k_1}{L\pi(3-a_1^2)}\left[(t+1)^{\frac{3-a_1^2}{4}}-1\right]\right),
\end{equation}
\begin{equation}\label{eq99}
    R_2(t)=\frac{\sqrt{2}k_2}{L\pi}(t+1)^{-\left(a_2^2+\frac{1}{4}\right)}\exp\left(\frac{4\sqrt{2}k_2}{L\pi(3-4a_2^2)}\left[(t+1)^{\frac{3-4a_2^2}{4}}-1\right]\right).
\end{equation}
By making backward substitution to \eqref{eq69}, the solution of the original problem \eqref{eq89} can be presented as
\begin{equation}\label{eq98}
    u_1(x,t)=\frac{1}{R_1(t)}V_1\left(\frac{x}{\pi(t+1)},t\right),\quad\quad u_2(x,t)=\frac{1}{R_2(t)}V_2\left(\frac{x-\pi(t+1)}{\pi(t+1)},t\right)
\end{equation}
where $V_1$ and $V_2$ defined by \eqref{eq94} and \eqref{eq95}, and applying \eqref{eq69} we can determine coefficients $P_i(t)=-\frac{R_i^{\prime}(t)}{R_i(t)}$ with $i=1,2$, where $R_1(t)$ and $R_2(t)$ are given by \eqref{eq98} and \eqref{eq99}.

To construct the graphical formulation of the time-dependent coefficients with noisy data measurement for testing solution, we attempt to follow the steps:

\begin{enumerate}
\item To check the time-depended coefficients' sensitivity to noisy data measurement, we add relative noise level to the Stefan condition
$$-k_2\frac{\partial u_2^{\delta}}{\partial x}(r(t),t)=Lr^{\prime}(t)+\mathcal{O}(0,\sigma^2),$$
where $r(t)=2\pi\sqrt{t+1}$ and $\mathcal{O}(0,\sigma^2)$ represents the normal distribution with mean zero and standard deviation $\sigma=\delta\times\max\limits_{(r(t),t),t\in[0,T]}\bigg|\frac{\partial u_2}{\partial x}(r(t),t)\bigg|$, and $\delta$ denotes relative (in percentage) noise levels. 

\item Fix the final time $T>0$ and the number of time steps $N\geq 2$. Define $h=\frac{T}{N}$ and $t_n=nh,\; n=0,1,\ldots,N.$

\item Setting $k_1=k_2=\frac{\pi}{\sqrt{2}}$ and  $a_1=a_2=L=1$ we get $\max\limits_{(2\pi\sqrt{t+1},t),t\in[0,T]}\bigg|\frac{\partial u_2}{\partial x}(2\pi\sqrt{t+1},t)\bigg|=\sqrt{\frac{2}{T+1}}$, which implies $\sigma=\delta\sqrt{\frac{2}{T+1}}$, then we need to define $$A_1^{\delta}(t)=\frac{k_1\pi}{\sqrt{2}}\left(\frac{L\pi^2}{2}-\pi\sqrt{t+1}\mathcal{O}(0,\sigma^2)\right)(t+1)^{-\frac{a_1^2+1}{4}},$$
$$A_2^{\delta}(t)=\frac{\sqrt{2}k_2\pi}{L\pi^2+\pi\sqrt{t+1}\mathcal{O}(0,\sigma^2)}(t+1)^{-(a_2^2+\frac{1}{4})}$$

\item Introduce the auxiliary functions $Y_i(t)=1+\int_0^t R_i(\tau)\,d\tau
$. As $Y_i^{\prime}(t)=R_i(t)$, we consider initial value problems $Y_1'(t)=A_1^{\delta}(t)Y_1(t)$ and $ Y_2'(t)=A_2^{\delta}(t)Y_2(t)$ with initial conditions $Y_i(0)=1$.

\item Set the initial values $Y_{1,0}=Y_{2,0}=1$.

\item For $n=0,1,\ldots,N-1$, define $F_1^{\delta}(t,Y_1)=A_1^{\delta}(t)Y_1(t)$ and $F_2^{\delta}(t,Y_2)=A_2^{\delta}(t)Y_2(t)$.

For each $i=1,2$, calculate the RK4 stages
\[
K_{i,1}=F_i^{\delta}(t_n,Y_{i,n}),
\]
\[
K_{i,2}=F_i^{\delta}\left(t_n+\frac{h}{2},Y_{i,n}+\frac{h}{2}K_{i,1}\right),
\]
\[
K_{i,3}=F_i^{\delta}\left(t_n+\frac{h}{2},Y_{i,n}+\frac{h}{2}K_{i,2}\right),
\]
\[
K_{i,4}=F_i^{\delta}\left(t_n+h,Y_{i,n}+hK_{i,3}\right).
\]

\item Update $Y_{i,n+1}=Y_{i,n}+\frac{h}{6}\left(K_{i,1}+2K_{i,2}+2K_{i,3}+K_{i,4}
\right)$.

\item Recover the numerical approximations of $R_1^{\delta}$ and $R_2^{\delta}$: $R_{1,n}^{\delta}=A_1^{\delta}(t_n)Y_{1,n}$ and $R_{2,n}^{\delta}=A_2^{\delta}(t_n)Y_{2,n}.$

\item Approximate the derivatives at the interior nodes by the second-order central difference formula $(R_{i,n}^{\delta})^{\prime}
\approx
\frac{R_{i,n+1}^{\delta}-R_{i,n-1}^{\delta}}{2h},
\;
n=1,\ldots,N-1$. At the left endpoint use $R_{i,0}^{\delta}
\approx\frac{-3R_{i,0}^{\delta}+4R_{i,1}^{\delta}-R_{i,2}^{\delta}}{2h}$
and at the right endpoint use $(R_{i,N}^{\delta})^{\prime}
\approx\frac{3R_{i,N}^{\delta}-4R_{i,N-1}^{\delta}+R_{i,N-2}^{\delta}}{2h}$.

\item Calculate $P_{1,n}^{\delta}=-\frac{(R_{1,n}^{\delta})^{\prime}}{R_{1,n}^{\delta}},$ and $P_{2,n}^{\delta}=-\frac{(R_{2,n}^{\delta})^{\prime}}{R_{2,n}^{\delta}}.$

\item To stabilize the reconstruction of $P_i(t)$ for $i=1,2$, we add Tikhonov regularization. Since $P_i(t)=-\frac{d}{dt}\ln R_i(t)$, at first we regularize $g_i^{\delta}(t)=\ln R_i^{\delta}(t)$ by solving 
$$g_{i,\alpha}^{\delta}=\argmin_{\substack{g}}\{||g-g_i^{\delta}||_2^2+\alpha_i||D_2 g||_2^2\},$$
where $||D_2g||_2^2=\sum_{n=1}^{N-1}\left(\frac{g_{n-1}-2g_n+g_{n+1}}{h^2}\right)^2$, then calculate $P_{i,\alpha}^{\delta}(t)=-D_1g_{i,\alpha}^{\delta}(t)$ which is more stable than directly differentiating the oscillatory $R_i^{\delta}$, where $D_1$ differentiates the regularized logarithmic function and $\alpha_i>0$ is the Tikhonov regularization parameter selected automatically by generalized cross-validation $\alpha_i=\argmin\limits_{\alpha>0}\text{GCV}_i(\alpha)$ where $\text{GCV}_i(\alpha)=\frac{\frac{1}{N+1}||(I-H_{\alpha})g_i^{\delta}||_2^2}{\left(1-\frac{\text{tr}(H_{\alpha})}{N+1}\right)^2}$ and $H_{\alpha}=(I+\alpha D_2^{T}D_2)^{-1}$.
\end{enumerate}

\begin{figure}[ht]
    \centering
    \includegraphics[width=0.97\linewidth]{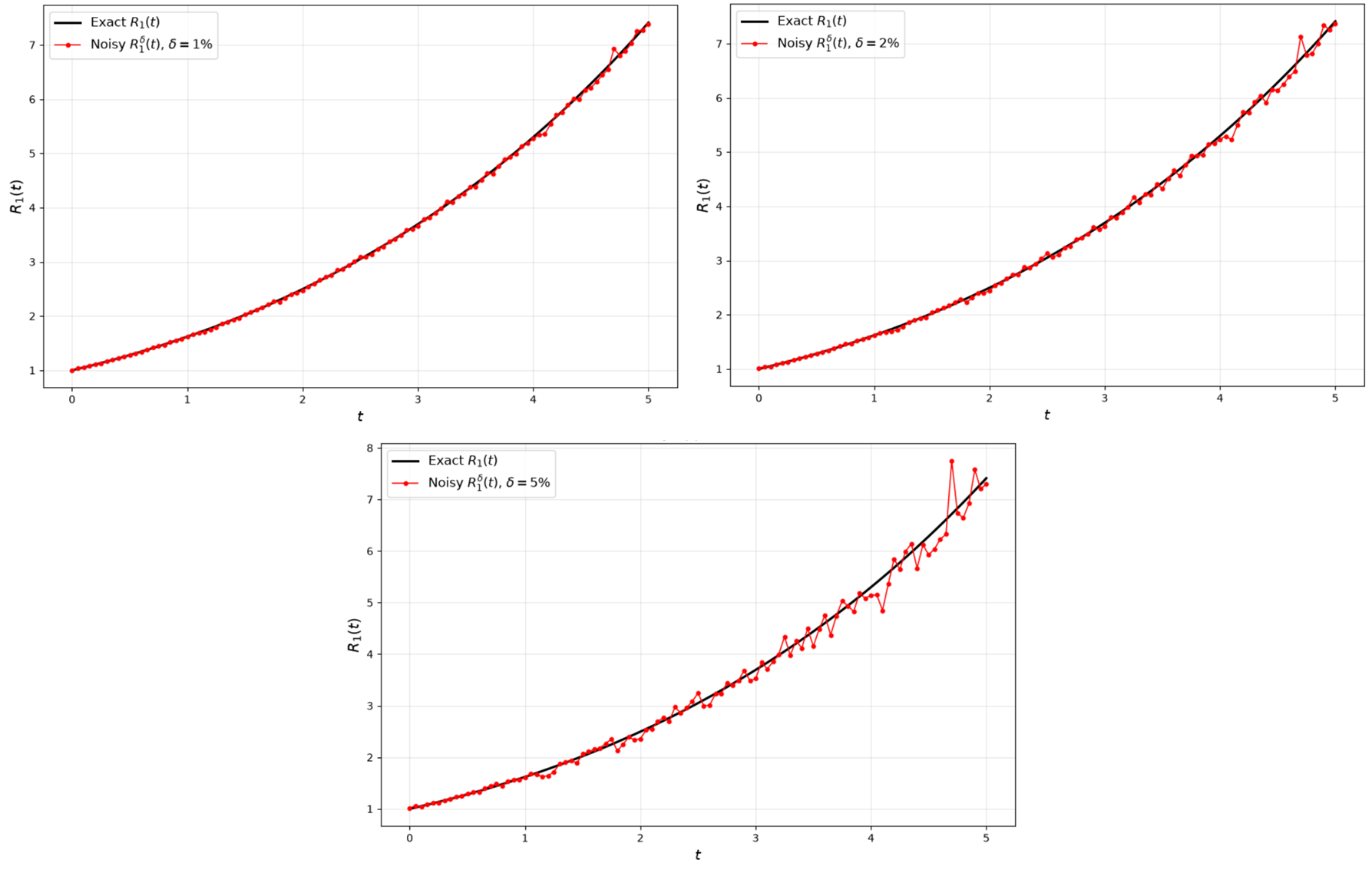}
    \caption{Reconstruction of unregulated $R_1(t)$ for different noise levels $\delta=1\%,2\%,5\%$ and $N=100$.}
    \label{fig:1}
\end{figure}
Figure \ref{fig:1} presents the reconstruction of the auxiliary coefficient $R_1(t)$ for noise levels $\delta=1\%,2\%,5\%$, using $N=100$ time steps. The noise is introduced through the Stefan boundary measurement, after which $R_1^{\delta}(t)$ is reconstructed through the auxiliary initial-value problem solved numerically by the RK4 scheme. The numerical curves show very good agreement with the exact solution. For $\delta=1\%$, the noisy reconstruction is almost indistinguishable from the exact curve, while only small deviations appear for $\delta=2\%$. At $\delta=5\%$, the perturbations become more visible, particularly toward larger values of $t$, where $R_1(t)$ itself has increased substantially. Nevertheless, the reconstruction retains the correct increasing behavior and remains concentrated around the exact solution.

Figure \ref{fig:2} gives the corresponding results for $R_2(t)$. Unlike $R_1(t)$, the exact $R_2(t)$ decreases over the considered time interval. Again, the approximations obtained with 1\% and 2\% noise follow the exact solution very closely, whereas the 5\% case displays more noticeable local fluctuations.
An important observation is that the reconstruction procedure performs well for coefficients having qualitatively different dynamics: $R_1(t)$ is increasing whereas $R_2(t)$ is decreasing. Hence, the method is not restricted to a particular monotonic behavior of the unknown temporal factor. Furthermore, the noisy $R_2^{\delta}(t)$ remains positive in the numerical experiment, which is important for the subsequent logarithmic representation used to recover $P_2(t)$.

\begin{figure}[ht]
    \centering
    \includegraphics[width=0.97\linewidth]{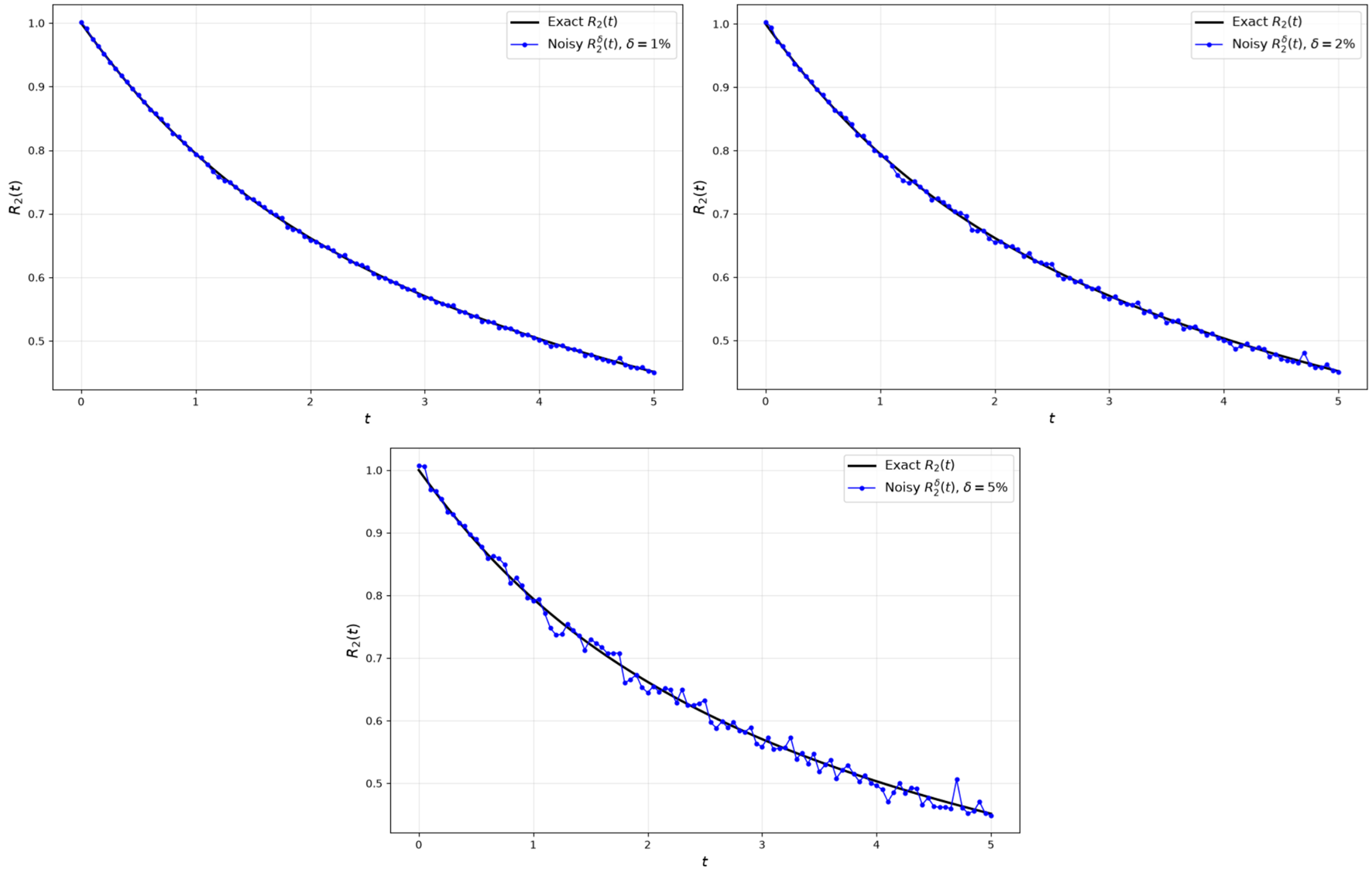}
    \caption{Reconstruction of unregulated $R_2(t)$ for different noise levels $\delta=1\%,2\%,5\%$ and $N=100$.}
    \label{fig:2}
\end{figure}

Figure \ref{fig:3} is particularly important because it reveals the principal numerical difficulty of the inverse reaction-coefficient problem. Although Figures \ref{fig:1} and \ref{fig:2} show that $R_1^{\delta}(t)$ and $R_2^{\delta}(t)$ can be reconstructed accurately, the directly reconstructed $P_1^{\delta}(t)$ and $P_2^{\delta}(t)$ exhibit strong oscillations even for the relatively small noise level $\delta=2\%$. Differentiation is well known to amplify high-frequency perturbations. Consequently, small pointwise fluctuations in $R_i^{\delta}(t)$, which are almost visually negligible in Figures \ref{fig:1} and \ref{fig:2}, become large fluctuations after differentiation. Figure \ref{fig:3} illustrates this effect very clearly: the noisy values fluctuate substantially above and below the exact curves, while the underlying exact $P_1(t)$ and $P_2(t)$ are smooth.
Therefore, Figure \ref{fig:3} identifies the main instability mechanism of the numerical inverse problem. The unstable behavior does not primarily originate from reconstruction of $R_i(t)$; rather, it is introduced during the differentiation required to recover $P_i(t)$. This observation provides the numerical motivation for introducing Tikhonov regularization.

\begin{figure}[ht]
    \centering
    \includegraphics[width=1.0\linewidth]{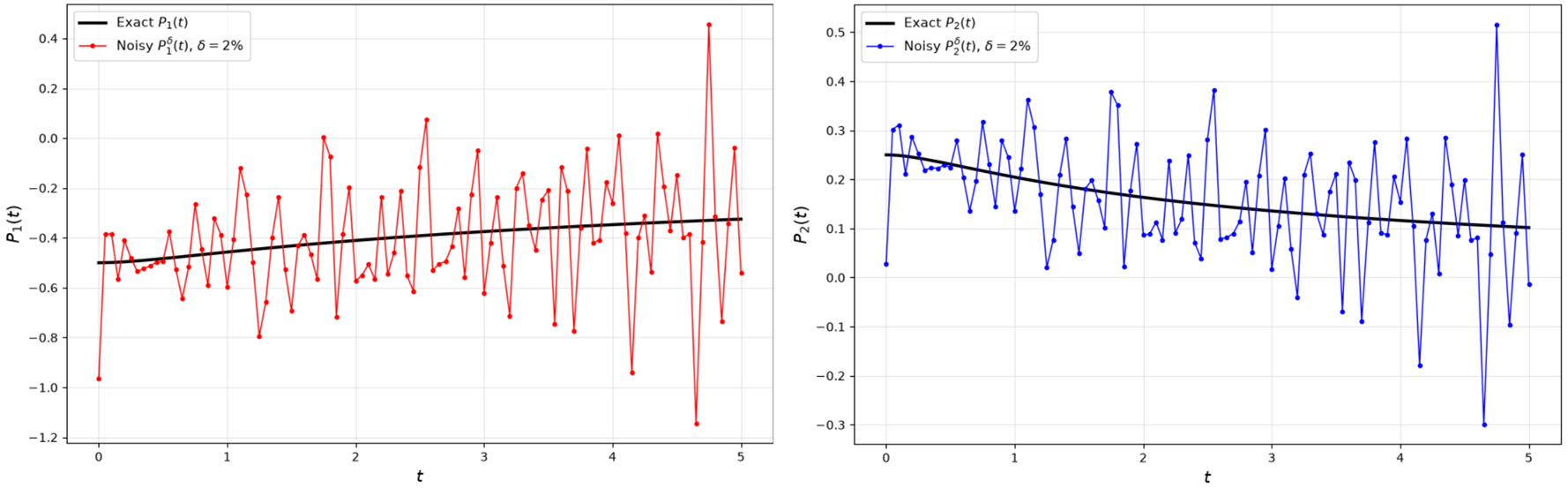}
    \caption{Reconstruction of unregulated $P_1(t)$ and $P_2(t)$ at noise level $\delta=2\%$ and $N=100$.}
    \label{fig:3}
\end{figure}

Figure \ref{fig:4} demonstrates the principal improvement of the proposed numerical reconstruction. Instead of directly differentiating $R_i^{\delta}(t)$, the method first introduces $g_i^{\delta}(t)=\ln R_i^{\delta}(t)$
and determines a regularized approximation by solving $g_{i,\alpha}^{\delta}=\argmin\limits_{g}\{||g-g_i^{\delta}||_2^2+\alpha_i||D_2g||_2^2\}$. The reaction coefficient is subsequently calculated as $P_{i,\alpha}^{\delta}(t)=-D_1 g_{i,\alpha}(t)$.
Thus, the second-derivative penalty suppresses rapid artificial oscillations before numerical differentiation is performed. The paper uses generalized cross-validation to determine $\alpha_i$ automatically. In contrast with Figure \ref{fig:3}, the reconstructed curves in Figure \ref{fig:4} are smooth and remain close to the exact coefficients for all three noise levels $\delta=1\%,2\%,5\%$.In particular, the regularized $P_1(t)$ preserves the increasing behavior of the exact negative coefficient, whereas $P_2(t)$ correctly preserves its positive decreasing behavior. Even at 5\% noise, the global temporal structure of both coefficients is recovered successfully. Regularization therefore transforms an unstable differentiation problem into a numerically stable reconstruction problem. Consequently, Figure \ref{fig:4} demonstrates not merely a visually smoother approximation but a substantial stabilization of the inverse reconstruction. The price of regularization is a small smoothing bias, most visible close to the endpoints, but this bias is considerably smaller than the large variance and oscillations observed without regularization.

\begin{figure}[ht]
    \centering
    \includegraphics[width=1.0\linewidth]{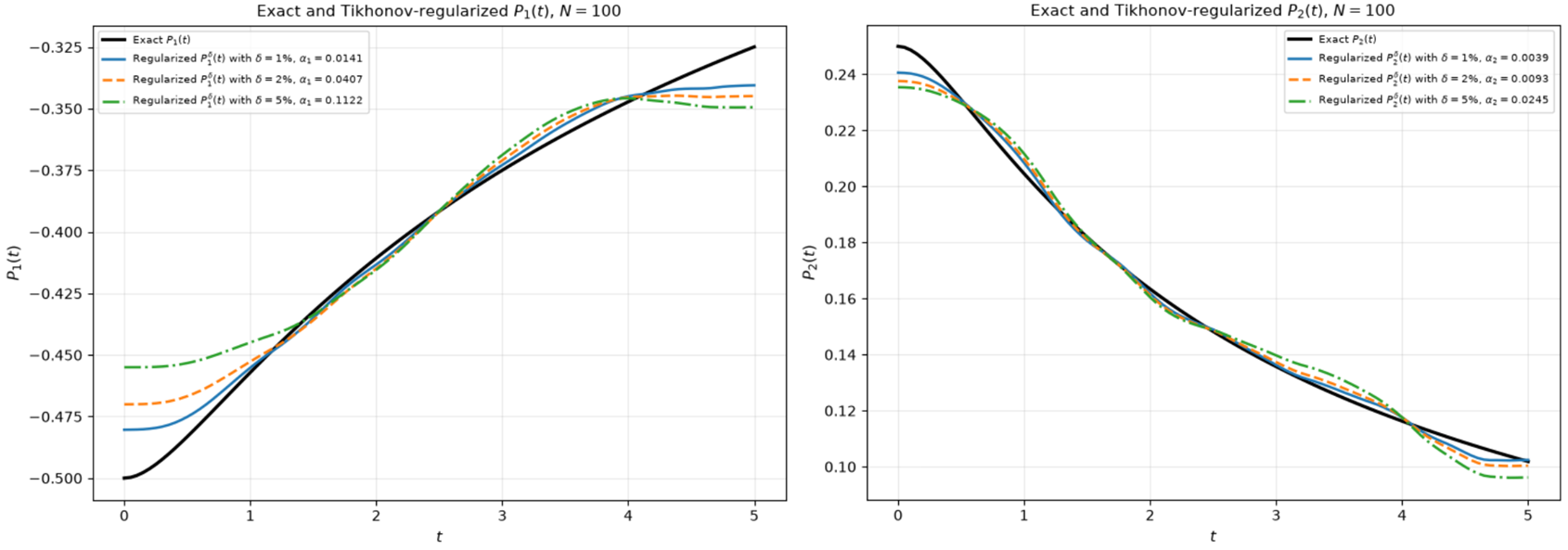}
    \caption{Reconstruction of regulated $P_1(t)$ and $P_2(t)$ at different noise levels $\delta=1\%, 2\%,5\%$ with $N=100$.}
    \label{fig:4}
\end{figure}

Figure \ref{fig:5} provides a quantitative assessment of the regularized reconstructions through the absolute errors $|P_{i,\alpha}^{\delta}(t)-P_i(t)|$ with $i=1,2$. For both coefficients, the error generally increases as the noise level changes from 1\% to 2\% and then to 5\%. However, the errors remain bounded throughout the entire time interval. From the displayed results, even for $\delta=5\%$, the largest error is approximately $4.5\times 10^{-2}$ for $P_1(t)$ and approximately $1.5\times 10^{-2}$ for $P_2(t)$. The larger errors close to the endpoints are also consistent with the numerical procedure. Derivative estimation is inherently less accurate near the boundaries, where centered finite differences cannot be applied in the same manner as at interior points. Moreover, smoothing methods frequently exhibit some boundary bias. The important result is therefore not that the reconstruction becomes completely independent of measurement noise, but rather that the effect of noise remains controlled and quantitatively bounded after regularization. Figure \ref{fig:5} consequently supplies direct numerical evidence of stability with respect to perturbations in the Stefan data. It also shows that $P_2(t)$ is reconstructed with a smaller absolute error than $P_1(t)$ for this particular numerical example, indicating that the two reaction coefficients may have different sensitivities to the same level of boundary perturbation.

\begin{figure}[ht]
    \centering
    \includegraphics[width=1.0\linewidth]{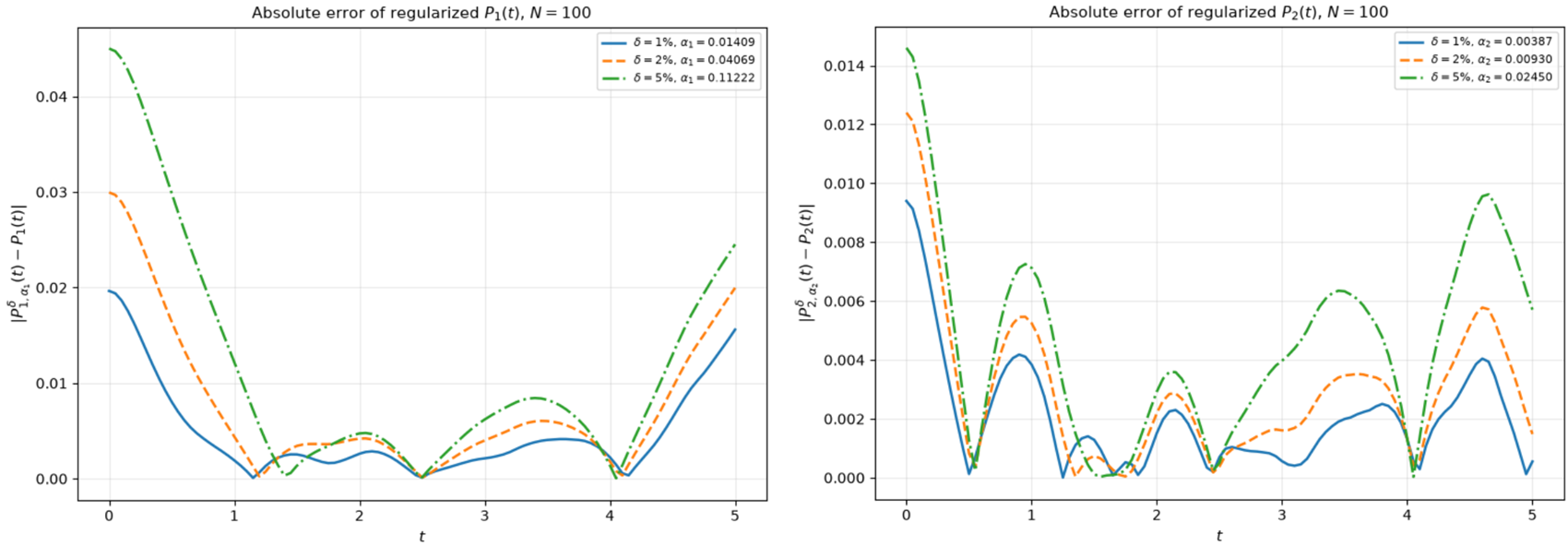}
    \caption{Absolute errors of regularized $P_1(t)$ and $P_2(t)$ at different noise levels $\delta=1\%, 2\%,5\%$ with $N=100$.}
    \label{fig:5}
\end{figure}

\begin{figure}[ht]
    \centering
    \includegraphics[width=1.0\linewidth]{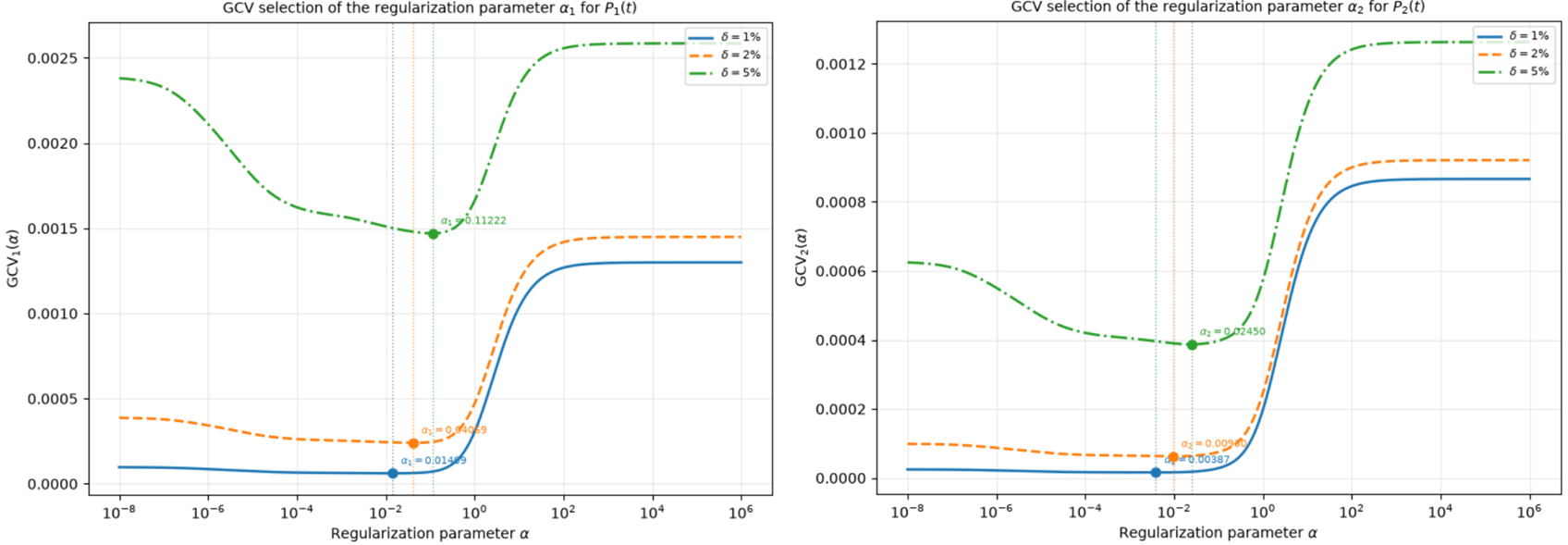}
    \caption{Generalized cross-validation (GCV) curves for selection $\alpha_1$ and $\alpha_2$  to reconstruct the regularized  $P_1(t)$ and $P_2(t)$ at different noise levels $\delta=1\%, 2\%,5\%$ with $N=100$.}
    \label{fig:6}
\end{figure}

Figure \ref{fig:6} demonstrates how generalized cross-validation determines the regularization parameters $\alpha_1$ and $\alpha_2$. The selected parameter corresponds to the minimum of each GCV curve. A clear and physically meaningful tendency is observed: the optimal regularization parameter increases with increasing noise level. More highly contaminated data therefore require stronger smoothing. This behavior is exactly what should be expected from a well-balanced regularization procedure. The fact that $\alpha_1$ and $\alpha_2$ are different is also significant. It shows that GCV adapts independently to the numerical sensitivity and scale of each phase instead of imposing the same arbitrary smoothing parameter on both coefficients. Figure \ref{fig:6} therefore adds an important methodological contribution: the regularization parameter does not need to be manually chosen. The GCV criterion provides a reproducible and data-driven mechanism for selecting the amount of regularization, reducing subjectivity and avoiding both under-regularization, which leaves noise in the solution, and over-regularization, which would excessively distort the true coefficient.

\section{Conclusion}
In this paper, we investigated inverse two-phase Stefan type problems for heat equations with time-dependent source and reaction coefficients. The main objective was to determine the temperature distributions in two phases together with unknown temporal coefficients appearing in the governing parabolic equations. The mathematical models were formulated in domains with moving boundaries, and suitable transformations were introduced to reduce the original free-boundary problems to equivalent problems on a fixed spatial interval.

For the first problem, the original two-phase Stefan problem was transformed into a system of parabolic equations with homogeneous boundary conditions on the fixed domain $(0,1)$. Using the eigenfunctions of the corresponding Sturm--Liouville problem, the weak solution was represented in the form of a Fourier series. Three types of additional information were considered: an integral overdetermination condition, a single observation point, and a non-local condition. For each case, reconstruction formulas for the unknown time-dependent source coefficients were obtained. These formulas were reduced to Volterra integral equations, which allowed us to establish the existence and uniqueness of the recovered coefficients under appropriate regularity, compatibility, and non-degeneracy assumptions.

Furthermore, the convergence of the spectral series was justified by using integration by parts, the Cauchy--Schwarz inequality, Bessel-type estimates, and Gronwall's inequality. These estimates ensured the boundedness of the reconstructed coefficients and the corresponding weak solutions. As a consequence, the existence and uniqueness of weak and strong solutions were established for the transformed problem, and therefore also for the original two-phase Stefan problem.

In the second formulation, we considered a two-phase Stefan type problem with Neumann boundary conditions and two moving boundaries. In this case, the unknown coefficients appeared as reaction terms $P_i(t)u_i(x,t)$. By introducing the transformation $R_i(t)=\exp\left(-\int_0^t P_i(\tau),d\tau\right),\; i=1,2,$ the original inverse reaction-coefficient problem was reduced to an inverse source-coefficient problem. The Stefan conditions at the moving boundaries provided the necessary relations for determining the auxiliary functions $R_1(t)$ and $R_2(t)$. Once these functions were recovered, the original reaction coefficients were uniquely reconstructed by $P_i(t)=-\frac{R_i'(t)}{R_i(t)}, \; i=1,2$. Thus, the second problem showed that the free-boundary conditions themselves can serve as sufficient information for recovering the unknown time-dependent coefficients, without imposing additional overdetermination conditions.

An illustrative example was also provided to demonstrate the applicability of the proposed approach. The example confirmed that the spectral method leads to explicit representations of the transformed temperature functions and to integral equations for the unknown coefficients.

The results of this work contribute to the theory of inverse Stefan problems by combining fixed-domain transformations, spectral decompositions, and Volterra integral equation techniques. The obtained existence, uniqueness, and regularity results provide a rigorous analytical basis for identifying time-dependent coefficients in phase-change heat transfer models. Future research may focus on numerical reconstruction algorithms, stability estimates with respect to noisy data, and extensions to nonlinear thermal coefficients, multidimensional geometries, and more general boundary conditions.

\textbf{Acknowledgments}

The research is supported by the Science Committee of the Ministry of Science and Higher Education of the Republic of Kazakhstan. (Grant No. AP32722175).

\textbf{Conflict of interest}

The authors declares no conflict of interest.

\textbf{Funding}

The research is supported by the Science Committee of the Ministry of Science and Higher Education of the Republic of Kazakhstan. (Grant No. AP32722175).

\textbf{Ethical approval}

This work don't need the approval from any ethics committees.

\textbf{Availability of data and materials}

Data sharing is not applicable to this article as no datasets were generated or analysed during the current study.

\textbf{Author Contributions Statement}

\textbf{Targyn Nauryz:} Conceptualization, Methodology, Supervision, Writing – original draft, writing - review and editing.

\end{document}